\newcommand{\sgn}{\operatorname{sgn}}
\DeclareMathOperator*{\E}{\mathbb{E}}
\DeclareMathOperator*{\spn}{\operatorname{span}}
\DeclareMathOperator*{\poly}{\operatorname{poly}}
\newcommand{\norm}[1]{\left\lVert#1\right\rVert}
\apptocmd{\thebibliography}{\raggedright}{}{}
\crefname{protocol}{protocol}{protocols}
\Crefname{protocol}{Protocol}{Protocols}
\crefname{thm}{theorem}{theorems}
\Crefname{thm}{Theorem}{Theorems}
\crefname{rmk}{remark}{remarks}
\Crefname{rmk}{Remark}{Remarks}
\crefname{lem}{lemma}{lemmata}
\Crefname{lem}{Lemma}{Lemmata}
\numberwithin{equation}{section}
\newcounter{protocol}
\newcommand{\linefill}{\rule{\linewidth}{0.8pt}}
\newenvironment{protocol}[1]{\begingroup\setlength\parindent{0pt}\medskip\noindent\linefill\\
\refstepcounter{protocol}\textbf{Protocol \theprotocol} #1\\\noindent\linefill}
{\vspace{-\topsep}\noindent\linefill\endgroup}
\def \Hin {H_{\mathrm{in}}}
\def \Hout {H_{\mathrm{out}}}
\def \Hprop {H_{\mathrm{prop}}}
\def \Hclock {H_{\mathrm{clock}}}
\def \Jclock {J_{\mathrm{clock}}}
\def \Jprop {J_{\mathrm{prop}}}
\def \Kin {K_{\mathrm{in}}}
\def \Kclock {K_{\mathrm{clock}}}
\def \Kprop {K_{\mathrm{prop}}}
\newcommand{\histpsi}[1]{\ket{\psi_{#1}^{\mathrm{hist}}}}
\newcommand{\LHXZ}[1]{\mathrm{LH}_{\mathrm{XZ}}^{#1}}
\newcommand{\ground}[1]{{\lambda \left (#1 \right)}}
\title{Constant-round Blind Classical Verification of Quantum Sampling}
\author[1]{Kai-Min Chung\thanks{\href{mailto:kmchung@iis.sinica.edu.tw}{kmchung@iis.sinica.edu.tw}. Partially supported by the 2019 Academia Sinica Career Development Award under Grant no. 23-17, and MOST QC project under Grant no. MOST 108-2627-E-002-001.}}
\author[2]{Yi Lee\thanks{\href{mailto:ylee1228@umd.edu}{ylee1228@umd.edu}. This work was done while affiliated to Academia Sinica and to National Taiwan University.}}
\author[3]{Han-Hsuan Lin\thanks{\href{mailto:linhh@cs.nthu.edu.tw}{linhh@cs.nthu.edu.tw}. Part of this work was done while supported by Scott Aaronson's Vannevar Bush Faculty Fellowship from the US Department of Defense. Partially funded by MOST Grant no. 110-2222-E-007-002-MY3}}
\author[4]{Xiaodi Wu\thanks{\href{mailto:xwu@cs.umd.edu}{xwu@cs.umd.edu}. Partially supported by the U.S. National Science Foundation grant CCF-1755800, CCF-1816695, and CCF-1942837 (CAREER). }}
\affil[1]{Institute of Information Science, Academia Sinica, Taiwan}
\affil[2]{Department of Computer Science, University of Maryland, USA}
\affil[3]{Department of Computer Science, National Tsing Hua University, Taiwan}
\affil[4]{
	Department of Computer Science, Institute for Advanced Computer Studies,
	and Joint Center for Quantum Information and Computer Science,
	University of Maryland, USA
}
\newcommand{\nc}{\newcommand}
\nc{\rnc}{\renewcommand}
\def\View{\mathsf{View}}
\def\GS{\mathsf{Ham}}
\nc{\cVGS}{\ensuremath{\cV_\GS}}
\def\Ver{\mathsf{Ver}}
\nc{\cVVer}{\ensuremath{\cV_\Ver}}
\def\Samp{\mathsf{Samp}}
\nc{\PiSamp}{\ensuremath{\Pi_\Samp}}
\nc{\VSamp}{\ensuremath{V_\Samp}}
\nc{\PSamp}{\ensuremath{P_\Samp}}
\nc{\PSampstar}{\ensuremath{P_\Samp^*}}
\nc{\cVSamp}[1]{\ensuremath{\cV_{\Samp,#1}}}
\nc{\cPSamp}[1]{\ensuremath{\cP_{\Samp,#1}}}
\def\SampZ{\mathsf{Final}}
\nc{\PiSampZ}{\ensuremath{\Pi_\SampZ}}
\nc{\VSampZ}{\ensuremath{V_\SampZ}}
\nc{\PSampZ}{\ensuremath{P_\SampZ}}
\nc{\PSampZstar}{\ensuremath{P_\SampZ^*}}
\nc{\cVSampZ}[1]{\ensuremath{\cV_{\SampZ,#1}}}
\nc{\cPSampZ}[1]{\ensuremath{\cP_{\SampZ,#1}}}
\def\Rej{\mathsf{Rej}}
\def\QHE{\mathsf{QHE}}
\def\QGen{\mathsf{QHE.Keygen}}
\def\QEnc{\mathsf{QHE.Enc}}
\def\QEval{\mathsf{QHE.Eval}}
\def\QDec{\mathsf{QHE.Dec}}
\def\blind{\mathsf{blind}}
\nc{\Piblind}{\ensuremath{\Pi_\blind}}
\nc{\Vblind}{\ensuremath{V_\blind}}
\nc{\Pblind}{\ensuremath{P_\blind}}
\nc{\Pblindstar}{\ensuremath{P_\blind^*}}
\nc{\cVblind}[1]{\ensuremath{\cV_{\blind,#1}}}
\nc{\cPblind}[1]{\ensuremath{\cP_{\blind,#1}}}
\def\Pstar{P^*}
\def\Pstarsub{P^*_{\mathsf{sub}}}
\nc{\cPstar}[1]{\ensuremath{\cP^*_{#1}}}
\nc{\ctx}[3]{\ensuremath{{{\widehat{#1}}_{#2}^{(#3)}}}}
\def\Measure{\mathsf{Measure}}
\nc{\PiMeasure}{\ensuremath{\Pi_\Measure}}
\nc{\VMeasure}{\ensuremath{V_\Measure}}
\nc{\PMeasure}{\ensuremath{P_\Measure}}
\nc{\PMeasureStar}{\ensuremath{P_\Measure^*}}
\nc{\cVMeasure}[1]{\ensuremath{\cV_{\Measure,#1}}}
\nc{\cPMeasure}[1]{\ensuremath{\cP_{\Measure,#1}}}
\def\Naive{\mathsf{int}}
\nc{\PiNaive}{\ensuremath{\Pi_\Naive}}
\nc{\VNaive}{\ensuremath{V_\Naive}}
\nc{\PNaive}{\ensuremath{P_\Naive}}
\nc{\PNaiveStar}{\ensuremath{P_\Naive^*}}
\nc{\cVNaive}[1]{\ensuremath{\cV_{\Naive,#1}}}
\nc{\cPNaive}[1]{\ensuremath{\cP_{\Naive,#1}}}
\nc{\cPNaiveStar}[1]{\ensuremath{\cP_{\Naive,#1}^*}}
\nc{\Picomp}[1]{\ensuremath{\Pi_{\mathsf{comp,#1}}}}
\nc{\stepref}[1]{Step~\ref{step:#1}}
\newcommand{\proj}[1]{|#1\rangle\langle #1|}
\nc{\vev}[1]{\langle#1\rangle}
\nc{\grad}{{\vec{\nabla}}}
\nc{\abs}[1]{\lvert#1\rvert}
\DeclareMathOperator{\Hyb}{Hyb}
\DeclareMathOperator{\id}{id}
\DeclareMathOperator{\negl}{negl}
\DeclareMathOperator{\tr}{tr}
\DeclareMathOperator{\Sym}{Sym}
\DeclareMathOperator{\BPP}{\mathsf{BPP}}
\DeclareMathOperator{\QPIP}{\mathsf{QPIP}}
\DeclareMathOperator{\SampBQP}{\mathsf{SampBQP}}
\DeclareMathOperator{\BQP}{\mathsf{BQP}}
\newcommand{\be}{\begin{equation}}
\newcommand{\ee}{\end{equation}}
\newcommand{\bea}{\begin{eqnarray}}
\newcommand{\eea}{\end{eqnarray}}
\newcommand{\nn}{\nonumber}
\newcommand{\bi}{\begin{itemize}}
\newcommand{\ei}{\end{itemize}}
\newcommand{\bn}{\begin{enumerate}}
\newcommand{\en}{\end{enumerate}}
\def\beas#1\eeas{\begin{eqnarray*}#1\end{eqnarray*}}
\def\ba#1\ea{\begin{align}#1\end{align}}
\nc{\bas}{\[\begin{aligned}}
\nc{\eas}{\end{aligned}\]}
\nc{\bpm}{\begin{pmatrix}}
\nc{\epm}{\end{pmatrix}}
\def\nn{\nonumber}
\def\L{\left} 
\def\R{\right}
\nc{\given}{\ensuremath{\;\middle|\;}}
\newtheorem{theorem}{Theorem}[section]
\newtheorem{lemma}{Lemma}[section]
\newtheorem{definition}{Definition}[section]
\newtheorem{cor}{Corollary}
\newtheorem{remark}{Remark}[theorem]
\newenvironment{thm}{\begin{theorem}}{\end{theorem}}
\newenvironment{lem}{\begin{lemma}}{\end{lemma}}
\newenvironment{dfn}{\begin{definition}}{\end{definition}}
\newenvironment{prf}{\begin{proof}}{\end{proof}}
\def\eps{\epsilon}
\def\cA{\mathcal{A}}
\def\cB{\mathcal{B}}
\def\cH{\mathcal{H}}
\def\cP{\mathcal{P}}
\def\cS{\mathcal{S}}
\def\cV{\mathcal{V}}
\nc{\Prob}[1]{\ensuremath{\Pr\left[#1\right]}}
\def\bbN{\mathbb{N}}
\def\bbR{\mathbb{R}}
\def\benum{\begin{enumerate}}
\def\eenum{\end{enumerate}}
\def\bdesc{\begin{description}}
\def\edesc{\end{description}}
\nc{\myprotoref}[1]{\hyperref[#1]{Protocol~\ref*{#1}}}
\nc{\todo}[1]{\textcolor{red}{todo: #1}}
\def\begsub#1#2\endsub{\begin{subequations}\label{eq:#1}\begin{align}#2\end{align}\end{subequations}}
\nc\qand{\qquad\text{and}\qquad}
\nc\mnb[1]{\medskip\noindent{\bf #1}}
\nc\mn{\medskip\noindent}
\nc{\nl}{\nn \\ &=}  
\nc{\nnl}{\nn \\ &}  
\nc{\fot}{\frac{1}{2}} 
\nc{\oo}[1]{\frac{1}{#1}} 
\newcommand{\ben}{\begin{enumerate}}
\newcommand{\een}{\end{enumerate}}
\nc{\mc}{\mathcal}
\nc{\beq}{\begin{equation}}
\nc{\eeq}{\end{equation}}
\nc{\onenorm}[1]{\L\| #1 \R\|_1} 
\nc{\Ra}{\Rightarrow}
\nc{\zo}{\{0,1\}}
\newcommand{\defeq}{:=}
\newcommand{\ext}{\mathsf{Ext}}
\newcommand{\Acc}{\mathsf{Acc}}
\newcommand*{\regX}{\mathbf{X}}
\newcommand*{\regZ}{\mathbf{Z}}
\newcommand*{\regC}{\mathbf{C}}
\begin{document}

\date{}
\maketitle

\begin{abstract}

In a recent breakthrough, Mahadev constructed a classical verification of quantum computation (CVQC)  protocol for a  classical client to delegate decision problems in $\BQP$ to an untrusted quantum prover under computational assumptions. In this work, we explore further the feasibility of CVQC with the more general \emph{sampling} problems in BQP and with the desirable \emph{blindness} property. We contribute affirmative solutions to both as follows. 
\begin{itemize}
\item  Motivated by the sampling nature of many quantum applications (e.g., quantum algorithms for machine learning and quantum supremacy tasks), we initiate the study of  CVQC for \emph{quantum sampling problems} (denoted by $\SampBQP$).  More precisely, in a CVQC protocol for a $\SampBQP$ problem, the prover and the verifier are given an input $x\in \zo^n$ and a quantum circuit $C$, and the goal of the classical client is to learn a sample from the output $z \leftarrow C(x)$ up to a small error, from its interaction with an untrusted prover. We demonstrate its feasibility by constructing a four-message CVQC protocol for $\SampBQP$ based on the quantum \emph{Learning With Errors} assumption.

\item
The \emph{blindness} of CVQC protocols refers to a property of the protocol where the prover learns nothing, and hence is blind, about the client's input. It is a highly desirable property that has been intensively studied for the delegation of quantum computation. 
We provide a simple yet powerful \emph{generic} compiler that transforms any CVQC protocol to a blind one while preserving its completeness and soundness errors as well as the number of rounds.  
\end{itemize}
Applying our compiler to (a parallel repetition of) Mahadev's CVQC protocol for $\BQP$ and our CVQC protocol for $\SampBQP$ yields the first \emph{constant-round} blind CVQC protocol for $\BQP$ and $\SampBQP$ respectively, with negligible and inverse polynomial soundness errors respectively, and negligible completeness errors. 

\vspace{1mm}
\noindent \textbf{Keywords:} classical delegation of quantum computation, blind quantum computation, quantum sampling problems

\end{abstract}

\newpage

\section{Introduction}

Can quantum computation, with potential computational advantages that are intractable for classical computers,
be efficiently verified by classical means?
This problem has been a major open problems in quantum complexity theory and delegation of quantum computation~\cite{web:Aaronson}. A complexity-theoretic formulation of this problem by Gottesman in 2004~\cite{web:Aaronson} asks about the possibility for an efficient classical verifier (a $\BPP$ machine) to verify the output of an
efficient quantum prover (a $\BQP$ machine).
In the absence of techniques for directly tackling this question, earlier feasibility results on this problem have been focusing on two weaker formulations.
The first type of feasibility results (e.g.,~\cite{BFK09,arXiv:ABOEM17,FK17,mf16}) considers the case where the verifier is equipped with limited quantum power.
The second type of feasibility results (e.g,~\cite{Nat:RUV13, CGJV19, Gheorghiu_2015, HPF15})
considers a $\BPP$ verifier interacting with at least two entangled, non-communicating quantum provers.

Recently, the problem is resolved by a breakthrough result of Mahadev~\cite{FOCS:Mahadev18a}, who constructed the first Classical Verification of Quantum Computation (CVQC) protocol for $\BQP$, where an efficient classical ($\BPP$) verifier can interact with an efficient quantum ($\BQP$) prover to verify any $\BQP$ language. Soundness of Mahadev's protocol  is based on a widely recognized computational assumption that the learning with errors (LWE) problem~\cite{JACM:Regev09} is hard for $\BQP$ machines.
The technique invented therein has inspired many subsequent developments of CVQC protocols with improved parameters and functionality. For example, Mahadev's protocol has a large constant soundness error. The works of ~\cite{arXiv:AlaChiHun19,arXiv:ChiaChungYam19} use parallel repetition to achieve a negligible soundness error. As another example, the work of ~\cite{FOCS:GheVid19} extends Mahadev's techniques in an involved way to obtain a CVQC protocol with an additional blindness property.  

In this work, we make two more contributions to this exciting line of research. First, we observe that the literature has mostly restricted the attention to delegation of \emph{decision} problems (i.e., $\BQP$). Motivated by the intrinsic randomness of quantum computation and the sampling nature of many quantum applications, we initiate the study of CVQC for quantum \emph{sampling} problems. Second, we further investigate the desirable \emph{blindness} property and construct the first \emph{constant-round} blind CVQC protocols. We elaborate on our contributions in \Cref{subsection:sampling} and \ref{subsection:blind}, respectively.

\subsection{CVQC for Quantum Sampling Problems} \label{subsection:sampling}

We initiate the study of CVQC for quantum sampling problem, which we believe is highly desirable and natural for delegation of quantum computation. Due to the intrinsic randomness of quantum mechanics, the output from a quantum computation is randomized and described by a distribution. 
Thus, if a classical verifier want to utilize the full power of a quantum machine, the ability to get a verifiable sample from the quantum circuit's output distribution is desirable.  On a more concrete level, quantum algorithms like Shor's algorithm~\cite{Shor} has a significant quantum sampling component, and the recent quantum supremacy proposals (e.g.,~\cite{Boson, IQP, nature-google}) are built around sampling tasks, suggesting the importance of sampling in quantum computation.

It is worth noting that the difficulty of extending the delegation of decision problem to the delegation of sampling problems is quantum-specific. This is because 
there is a simple reduction from the delegation of \emph{classical} sampling problems to decision ones:  the verifier can sample and fix the random seed of the computation, 
which makes the computation deterministic. Then, the verifier can delegate the output of the computation bit-by-bit as decision problems. However, this derandomization trick does not work in the quantum setting due to its intrinsic randomness.

\vspace{-3pt}

\paragraph{Our Contribution.}
As the first step to formalize CVQC for quantum sampling problems, we consider the complexity class $\SampBQP$ introduced by Aaronson~\cite{aaronson_2013} as a natural class to capture efficiently computable quantum sampling problems. 
$\SampBQP$ consists of sampling problems $(D_x)_{x\in\zo^*}$ that can be approximately sampled by a $\BQP$ machine with a desired inverse polynomial error (See Section~\ref{sec:samp_definition} for the formal definition). We consider CVQC for a $\SampBQP$ problem $(D_x)_{x\in\zo^*}$ where a classical $\BPP$ verifier delegates the computation of a sample $z\leftarrow D_x$ for some input $x$ to a quantum $\BQP$ prover. Completeness requires that when the prover is honest, the verifier should accept with high probability and learn a correct sample $z\leftarrow D_x$. For soundness, intuitively, the verifier should not accept and output a sample with incorrect distribution  when interacting with a malicious prover. We formalize the soundness by a strong \emph{simulation-based} definition, (\Cref{dfn:stats-secure-proto-sampbqp}),
where we require that the joint distribution $(d,z)$ of the decision bit $d \in \set{\Acc, \Rej}$ and the output $z$ (which is $\bot$ when $d = \Rej$) is $\eps$-close (in  either statistical or computational sense) to an ``ideal distribution'' $(d,z_{ideal})$, where $z_{ideal}$ is sampled from the desired distribution $D_x$ when $d = \Acc$ and set to $\bot$ when $d = \Rej$.\footnote{This simulation-based formulation is analogous to the standard composable security definition for QKD.}

As our main result, we construct a constant-round CVQC protocol for $\SampBQP$, based on the quantum LWE (QLWE) assumption that the learning-with-errors problem is hard for BQP machines. 
\begin{theorem}[informal] \label{thm:qpip0-informal}
Assuming the QLWE assumption, there exists a four-message CVQC protocol for all sampling problems in $\SampBQP$ with computational soundness and negligible completeness error.
\end{theorem}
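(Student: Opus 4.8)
The plan is to reduce the construction of a CVQC protocol for $\SampBQP$ to Mahadev's CVQC protocol for $\BQP$, exploiting the fact that a $\SampBQP$ problem $(D_x)$ can be realized by a polynomial-size quantum circuit $C$ that outputs a sample from (a distribution close to) $D_x$. The key conceptual obstacle is that a sample $z \leftarrow D_x$ is a genuinely random output of a quantum measurement, so we cannot simply have the verifier fix the randomness and verify bit-by-bit as one does classically; the verifier must obtain a sample that is \emph{jointly} correctly distributed with the accept/reject decision.

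First I would fix, for a given $\SampBQP$ instance, a $\BQP$ circuit $C$ that on input $x$ produces an $m$-qubit output whose computational-basis measurement is $\delta$-close to $D_x$ for a suitably small inverse polynomial $\delta$. The natural idea is to delegate the \emph{preparation} of the state $C\ket{x}$ and then have the verifier obtain the measurement outcome. Concretely, I would use Mahadev's measurement protocol (and its parallel-repeated, negligible-soundness variant from~\cite{arXiv:AlaChiHun19, arXiv:ChiaChungYam19}) as a subroutine: the prover commits to the state $C\ket{x}$ in the measurement-protocol sense, and the verifier requests a standard-basis measurement of the output register. Mahadev's measurement protocol guarantees that whatever distribution the verifier obtains is (computationally) consistent with measuring \emph{some} state the committed prover actually holds — but that by itself is not enough, since a cheating prover could commit to the wrong state. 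So the second ingredient is a verification layer: the verifier should also be convinced that the committed state is (close to) $C\ket{x}$.

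The cleanest route is to append a $\BQP$ verification of the statement ``the committed output state decodes to a valid history state of the computation $C$,'' i.e., to run Mahadev's CVQC verification for the local Hamiltonian associated to $C$ in parallel with the measurement request, using the Pauli-twirl / ``measure either $X$-check or $Z$-check or the actual output'' trick so that the same commitment round serves both purposes. This is exactly the mechanism by which Mahadev's protocol certifies $\BQP$ languages, and it extends to certifying that the post-measurement distribution is the correct one: if the prover passes the Hamiltonian test with high probability, its committed state has large overlap with the ground (history) state, and hence measuring the designated output qubits yields a distribution close in total variation to $D_x$; if it fails, the verifier rejects and outputs $\bot$. Combining with a fixed round of soundness amplification (parallel repetition for the Hamiltonian test, plus a threshold decision rule) keeps the message count at four: verifier sends keys, prover sends commitments, verifier sends challenges, prover sends openings/answers.

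For the analysis, completeness is immediate from completeness of Mahadev's protocols together with the $\delta$-closeness of $C$'s output to $D_x$ — the honest prover passes all checks with overwhelming probability and the verifier reads off a sample within total variation $\delta$ of $D_x$. For soundness I would argue in the simulation-based framework of \Cref{dfn:stats-secure-proto-sampbqp}: condition on the verifier accepting. Soundness of the parallel-repeated Hamiltonian test implies that, conditioned on acceptance, the prover's committed state is (computationally indistinguishable from) a state with energy close to the ground energy, hence close in trace distance to the history state of $C$; the binding-type guarantee of Mahadev's commitments then forces the reported measurement outcome to be (computationally) distributed as a standard-basis measurement of that near-ground state, which is $O(\sqrt{\text{energy}}+\delta)$-close to $D_x$. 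Packaging this as a hybrid argument gives a simulator producing $(d, z_{ideal})$ that is computationally $\eps$-close to the real $(d,z)$ for any inverse-polynomial $\eps$ (the slack absorbing the $\SampBQP$ error $\delta$ and the residual Hamiltonian-test error). The main obstacle I expect is the last step — carefully transferring the ``near-ground-state'' guarantee through Mahadev's measurement-protocol soundness to a statement about the \emph{joint} distribution of $(d,z)$ rather than just about $z$ — since Mahadev's soundness is naturally phrased as an indistinguishability of the prover's behavior from an efficient measurement of an actual quantum state, and one must chain this with the energy bound while keeping all errors inverse-polynomial and the round complexity fixed.
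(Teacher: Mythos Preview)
Your high-level structure matches the paper's: build a $\QPIP_1$-style verification from the local Hamiltonian of $C$ (random choice between an energy check and a computational-basis readout of the output register), then compile to $\QPIP_0$ through Mahadev's measurement protocol. But the sentence ``the same commitment round serves both purposes \ldots\ combining with a fixed round of soundness amplification (parallel repetition for the Hamiltonian test, plus a threshold decision rule)'' glosses over the central obstacle that occupies essentially all of Section~\ref{sec:qpip0_all}.

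The difficulty is the testing/Hadamard dichotomy \emph{inside} Mahadev's measurement protocol, not the X-check/Z-check/readout choice at the $\QPIP_1$ level. In a testing round the verifier learns no measurement outcome whatsoever---so it can run neither the energy check nor the readout; in a Hadamard round it obtains outcomes but performs no test of the commitment. Hence to produce a sample the verifier must designate at least one copy to run in Hadamard mode, and a cheating prover can concentrate its cheating on precisely that copy. The parallel-repetition analyses of~\cite{arXiv:AlaChiHun19,arXiv:ChiaChungYam19} that you invoke amplify the accept/reject bit for $\BQP$ (where each copy independently flips a testing/Hadamard coin and one takes a majority of decisions); they do not give control over the \emph{distribution of the string} produced in a designated Hadamard copy, and---as the paper stresses---there is no way to combine several Hadamard samples into one with reduced error.

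The paper therefore uses a different repetition structure: $m$ parallel copies of the entire $\QPIP_1$ protocol under $\PiMeasure$, with one uniformly random copy in Hadamard mode and the remaining $m-1$ in testing mode. The soundness argument is the technical heart: it iterates the \emph{partition lemma} of~\cite{arXiv:ChiaChungYam19} copy by copy to decompose the prover's post-commitment state into pieces $\ket{\psi_{1^{j}0}}$ that fail the $j{+}1$-st testing round (and are hence rejected except with probability $1/m$) and a residual $\ket{\psi_{1^{i}}}$ that is binding on copy $i$; one then averages over the random choice of Hadamard index $i$. Because the $G_{1,i}$ do not commute across copies, this yields only an $O(1/\sqrt{m})$ error, which is why the final theorem has inverse-polynomial rather than negligible soundness. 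Your proposal needs this ingredient---or a substitute that handles the non-commutativity---to bridge ``the parallel test accepts'' and ``the sample read from the Hadamard copy is correctly distributed''; the step you flag as ``the main obstacle'' is indeed where the real work lies, and it is not resolved by the $\BQP$ parallel-repetition theorems.
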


We note that since the definition of $\SampBQP$ allows an inverse polynomial error, our CVQC protocol also implicitly allows an arbitrary small inverse polynomial error in  soundness (see Section~\ref{sec:samp_definition} for the formal definition). Achieving negligible soundness error for delegating sampling problems is an intriguing open question; see Section~\ref{subsec:discussion} for further discussions.

The construction of our CVQC protocol follows the blueprint of Mahadev's construction~\cite{FOCS:Mahadev18a}. However, there are several obstacles we need to overcome along the way. To explains the obstacles and our ideas, we first present a high-level overview of Mahadev's protocol.

\paragraph{Overview of Mahadev's Protocol.}
Following~\cite{FOCS:Mahadev18a}, we define $\QPIP_{\tau}$ as classes of interactive proof systems between an (almost) classical verifier and a quantum prover, where the classical verifier has limited quantum computational capability, formalized as possessing $\tau$-qubit quantum memory.
A formal definition is given in Appendix~\ref{sec:qpip_def}. 

At a high-level, the heart of Mahadev's protocol is a measurement protocol $\PiMeasure$ that can compile an one-round $\QPIP_1$ protocol (with special properties) to a $\QPIP_0$ protocol. Note that in a $\QPIP_1$ protocol, the verifier with one-qubit memory can only measure the prover's quantum message qubit by qubit. Informally, the measurement protocol $\PiMeasure$ allows a $\BQP$ prover to ``commit to'' a quantum state $\rho$ and a classical verifier to choose an $X$ or $Z$ measurement to apply to each qubit of $\rho$ such that the verifier can learn the resulting measurement outcome. 

Thus, if an (one-round) $\QPIP_1$ verifier only applies $X$ or $Z$ measurement to the prover's quantum message, we can use the measurement protocol $\PiMeasure$ to turn the $\QPIP_1$ protocol into a $\QPIP_0$ protocol in a natural way. One additional requirement here is that the verifier's measurement choices need to be determined at the beginning (i.e., cannot depend adaptively on the intermediate measurement outcome). 

Furthermore, in $\PiMeasure$, the verifier chooses to run a ``\emph{testing}'' round or a ``\emph{Hadamard}'' round with $1/2$ probability, respectively. Informally, the testing round is used to ``test'' the commitment of $\rho$, and the Hadamard round is used to learn the measurement outcome. (See \Cref{proto:urmila4} for further details about the measurement protocol $\PiMeasure$.) Another limitation here is that in the testing round, the verifier only ``test'' the commitment without learning any measurement result. 

In~\cite{FOCS:Mahadev18a}, Mahadev's CVQC protocol for $\BQP$ is constructed by applying her measurement protocol to the one-round $\QPIP_1$ protocol of~\cite{PhysRevA.93.022326, mf16}, which has the desired properties that the verifier only performs non-adaptive $X/Z$ measurement to the prover's quantum message. The fact that the verifier does not learn the measurement outcome in the testing round is not an issue here since the verifier can simply accept when the test is passed (at the cost of a constant soundness error).

\paragraph{Overview of Our Construction.}
Following the blueprint of Mahadev's construction, our construction proceeds in the following two steps: 1. construct a $\QPIP_1$ protocol for $\SampBQP$ with required special property, and 2. compile the $\QPIP_1$ protocol using $\PiMeasure$ to get the desired $\QPIP_0$ protocol. The first step can be done by combining existing techniques from different contexts, whereas the second step is the main technical challenge.
At a high-level, the reason is the above-mentioned issue that the verifier does not learn the measurement outcome in the testing round. While this is not a problem for decision problems,
for sampling problems, the verifier needs to produce an output sample when accepts, but there seems to be no way to produce the output for the verifier without learning the measurement outcome.   
We discuss both steps in turn as follows.

\emph{$\diamond$ Construct a $\QPIP_1$ protocol for $\SampBQP$ with required special property}: Interestingly, while the notion of delegation for quantum sampling problem is not explicitly formalized in their work, Hayashi and Morimae~\cite{hayashi2015verifiable} constructed an one-round $\QPIP_1$ protocol that can delegate quantum sampling problem and achieve our notion of completeness and soundness\footnote{They did not prove our notion of soundness for their construction, but it is not hard to prove its soundness based on their analysis.}. Furthermore, their protocol has information-theoretic security and additionally achieve the blindness property. However, in their protocol, the computation is performed by the verifier using measurement-based quantum computation (MBQC)\footnote{In more detail, the prover of their protocol is required to send multiple copies of the graph states to the verifier (qubit by qubit). The verifier tests the received supposedly graph states using cut-and-choose and perform the computation using MBQC.}, and hence the verifier needs to perform adaptive measurement choices. Therefore, we cannot rely on their $\QPIP_1$ protocol for $\SampBQP$. 

Instead, we construct the desired $\QPIP_1$ protocol for $\SampBQP$ by generalizing the approach of local Hamiltonian reduction used in~\cite{PhysRevA.93.022326, mf16} to verify $\SampBQP$. Doing so requires the combination of several existing techniques from different context with some new ideas. For example, to handle $\SampBQP$, we need to prove lower bound on the spectral gap of the reduced local Hamiltonian instance, which is reminiscent to the simulation of quantum circuits by adiabatic quantum computation~\cite{adiabatic}. To achieve soundness, we use cut-and-choose and analyze it using de Finetti theorem in a  way similar to~\cite{takeuchi2018verification,hayashi2015verifiable}. See Section~\ref{sec:qpip1} for detailed discussions.

\emph{$\diamond$ Compile the $\QPIP_1$ protocol using $\PiMeasure$}: We now discuss how to use Mahadev's measurement protocol to compile the above $\QPIP_1$ protocol for $\SampBQP$ to a $\QPIP_0$ protocol. As mentioned, a major issue we need to address in Mahadev's original construction is that when the verifier $V$ chooses to run a testing round, $V$ does not learn an output sample when it accepts.  

Specifically, let $\PiNaive$ be an ``intermediate'' $\QPIP_0$ protocol obtained by applying Mahadev's compilation to the above $\QPIP_1$ protocol. In such a protocol, when the verifier $V$ chooses to run the Hadamard round, it could learn a measurement outcome from the measurement protocol and be able to run the $\QPIP_1$ verifier to generate a decision and an output sample when accepts. However, when  $V$ chooses to run the testing round, it only decides to accept/reject without being able to output a sample. 

 A natural idea to fix the issue is to execute multiple copies of $\PiNaive$ in parallel\footnote{It is also reasonable to consider sequential repetition, but we consider parallel repetition for its advantage of preserving the round complexity.}, and to choose a random copy to run the Hadamard round to generate an output sample and use all the remaining copies to run the testing round. The verifier accepts only when all executions accept and outputs the sample from the Hadamard round. We call this protocol $\PiSampZ$.

Clearly from the construction, the verifier now can output a sample when it decides to accept, and output a correct sample when interacting with an honest prover (completeness). The challenge is to show that $\PiSampZ$ is computationally sound. Since we are now in the computational setting, we cannot use the quantum de Finetti theorem as above which only holds in the information-theoretical setting. Furthermore, parallel repetition for computationally sound protocols are typically difficult to analyze, and known to not always work for protocols with four or more messages even in the classical setting~\cite{BIN97,PW12}.

 Parallel repetition of Mahadev's protocol for $\BQP$ has been analyzed before in ~\cite{arXiv:ChiaChungYam19, arXiv:AlaChiHun19}. However, the situation here is different. 
 For $\BQP$, the verifier simply chooses to run the Hadamard and testing rounds independently for each repetition.
 In contrast, our $\PiSampZ$ runs the Hadamard round in one repetition and runs the testing rounds in the rest. The reason is that in $\SampBQP$, as well as generically in sampling problems, there is no known approach to combine multiple samples to generate one sample with reduced error, i.e., there is no generic error reduction method for the sampling problem. 
In contrast, the error reduction for decision problems can be done with the majority vote. 
As a result, while the soundness error decreases exponentially for $\BQP$, as we see below (and also in the above $\QPIP_1$ protocols), for $\SampBQP$, $m$-fold repetition only decreases the error to $\poly(1/m)$. 

To analyze the soundness of $\PiSampZ$, we use the \emph{partition lemma} developed in~\cite{arXiv:ChiaChungYam19} to analyze the prover's behavior while executing copies of $\PiMeasure$.\footnote{The analysis of \cite{arXiv:AlaChiHun19} is more tailored to the decision problems setting, and it is unclear how to extend it to sampling problems where there are multiple bits of output.} Intuitively, the partition lemma says that for any cheating prover and for each copy $i\in[m]$, there exist two efficient ``projectors" \footnote{Actually they are not projectors, but for the simplicity of this discussion let's assume they are.} $G_{0,i}$ and $G_{1,i}$ in the prover's internal space with $G_{0,i}+G_{1,i} \approx Id$. $G_{0,i}$ and $G_{1,i}$ splits up the prover's residual internal state after sending back his first message.
$G_{0,i}$ intuitively represents the subspace where the prover does not knows the answer to the testing round on the $i$-th copy, while $G_{1,i}$ represents the subspace where the prover does. Note that the prover is using a single internal space for all copies, and every $G_{0,i}$ and every $G_{1,i}$ is acting on this single internal space. 
By using this partition lemma iteratively, we can decompose the prover's internal state $\ket{\psi}$ into sum of subnormalized states.
First we apply it to the first copy, writing $\ket{\psi}=G_{0,1}\ket{\psi}+G_{1,1}\ket{\psi} \equiv \ket{\psi_0}+\ket{\psi_1}$.
The component $\ket{\psi_0}$ would then get rejected as long as the first copy is chosen as a testing round,
which occurs with pretty high probability.
More precisely, the output corresponding to $\ket{\psi_0}$ is $1/m$-close to the ideal distribution that just rejects all the time.
On the other hand, $\ket{\psi_1}$ is now binding on the first copy;
we now similarly apply the partition lemma of the second copy to $\ket{\psi_1}$.
We write $\ket{\psi_1}=G_{0,2}\ket{\psi_1}+G_{1,2}\ket{\psi_1}\equiv \ket{\psi_{10}}+\ket{\psi_{11}}$, and apply the same argument about $\ket{\psi_{10}}$ and $\ket{\psi_{11}}$.
We then continue to decompose $\ket{\psi_{11}}=\ket{\psi_{110}}+\ket{\psi_{111}}$ and so on, until we reach the last copy and obtain $\ket{\psi_{1^m}}$.
Intuitively, all the $\ket{\psi_{1\dots10}}$ terms will be rejected with high probability, while the $\ket{\psi_{1^m}}$ term represents the ``good" component where the prover knows the answer to every testing round and therefore has high accept probability. Therefore, $\ket{\psi_{1^m}}$ also satisfies some binding property,
so the verifier should obtain a measurement result of some state on the Hadamard round copy,
and the soundness of the $\QPIP_1$ protocol $\PiSamp$ follows.

However, the intuition that $\ket{\psi_{1^m}}$ is binding to every Hadamard round is incorrect. As $G_{1,i}$ does not commute with $G_{1,j}$, $\ket{\psi_{1^m}}$ is unfortunately only binding for the $m$-th copy.
To solve this problem, we start with a pointwise argument and fix the Hadamard round on the $i$-th copy where $\ket{\psi_{1^i}}$ is binding,
and show that the corresponding output is $O(\norm{\ket{\psi_{1^{i-1}0}}})$-close to ideal.
We can later average out this error over the different choices of $i$, since not all $\norm{\ket{\psi_{1^{i-1}0}}}$ can be large at the same time. Another way to see this issue is to notice that we are partitioning a quantum state, not probability events, so there are some inconsistencies between our intuition and calculation. Indeed, the error we get in the end is $O(\sqrt{1/m})$ instead of the $O(1/m)$ we expected. 

The intuitive analysis outlined above glosses over many technical details, and we substantiate this outline with full details in 
Section~\ref{sec:qpip0_all}.

\subsection{Blind CVQC Protocols} \label{subsection:blind}

Another desirable property of CVQC protocols is  \emph{blindness}, which means that the prover does not learn any information about the private input for the delegated computation.\footnote{In literature, the definition of blindness may also require to additionally hide the computation. We note the two notions are equivalent from a feasibility point of view by a standard transformation (see Remark~\ref{rmk:blind-comp} in Section~\ref{sec:qpip_def}). 
} In the relaxed setting where the verifier has a limited quantum capability, Hayashi and Morimae~\cite{hayashi2015verifiable} constructed a blind $\QPIP_1$ protocol for delegating quantum computation with information-theoretic security that also handles sampling problems. 
However, for purely classical verifiers, blind CVQC protocols seem much more difficult to construct. This goal is recently achieved by the seminal work of Gheorghiu and Vidick~\cite{FOCS:GheVid19}, who constructed the first blind CVQC protocol for $\BQP$ by constructing a composable remote state preparation protocol and combining it with the verifiable blind quantum computation protocol of Fitzsimons and Kashefi~\cite{FK17}. However, their protocol has   polynomially many rounds  and requires a rather involved analysis. Before our work, it is an open question whether constant-round blind CVQC protocol for $\BQP$ is achievable.

\paragraph{Our Contribution.} Somewhat surprisingly, we provide a simple yet powerful  
\emph{generic}  compiler that transforms any CVQC protocol to a blind one while preserving completeness, soundness, as well as its round complexity. 
Our compiler relies on quantum fully homomorphic encryption (QFHE) schemes with certain ``classical-friendly'' properties, which is satisfied by both constructions of Mahadev~\cite{mahadev_qfhe} and Brakerski~\cite{brakerski_qfhe}.

\begin{theorem}[informal]
Assuming the QLWE assumption\footnote{By using Brakerski's QFHE, we only need to rely on the QLWE assumption with polynomial modulus in this theorem.}, there exists a protocol compiler that transforms any CVQC protocol $\Pi$ to a CVQC protocol $\Piblind$ that achieves blindness while preserves its round complexity, completeness, and soundness.
\end{theorem}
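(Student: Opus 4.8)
The plan is to wrap the given CVQC protocol $\Pi=(\pro,\ver)$ inside a layer of quantum fully homomorphic encryption (QFHE), so that a malicious prover's entire view consists of ciphertexts only, and then argue that this wrapper is completely transparent to completeness and soundness. Concretely, on input $x$ (with the circuit $C$ public --- hiding $C$ as well will follow from the universal-circuit transformation of \Cref{rmk:blind-comp}), the compiled verifier $\Vblind$ first samples a fresh QFHE key pair $(\pk,\sk)\leftarrow\QGen(1^\lambda)$ and sends $(\pk,\widehat x)$ with $\widehat x\leftarrow\QEnc_{\pk}(x)$; then the two parties simulate $\Pi$ message by message. In round $i$, $\Pblind$ uses $\QEval$ to run the honest prover $\pro$'s next quantum step homomorphically on its encrypted internal register together with the encrypted verifier messages received so far, and sends the resulting encrypted prover message $\widehat{s_i}$. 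The verifier decrypts $s_i\leftarrow\QDec_{\sk}(\widehat{s_i})$, runs $\ver$'s own algorithm on the decrypted transcript so far (together with $x$ and fresh randomness) to obtain the next verifier message $t_i$, and replies with $\widehat{t_i}\leftarrow\QEnc_{\pk}(t_i)$, which $\Pblind$ absorbs into its ongoing homomorphic evaluation under the same key. Finally $\Vblind$ decrypts the last prover message and applies $\ver$'s decision/output procedure in the clear.

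I would then argue that $\Vblind$ stays a purely classical $\BPP$ machine and the round complexity is unchanged: QFHE key generation, encryption of classical strings, and decryption of ciphertexts of classical strings are all classical operations, and by the ``classical-friendly'' property of the Mahadev and Brakerski schemes $\Pblind$ can always measure the register holding a classical homomorphic-evaluation output to get a classical ciphertext that still decrypts correctly, so every transmitted message is a classical string (and the key and $\widehat x$ are bundled with $\ver$'s first message, so no extra round is added). Completeness would follow from QFHE correctness: when $\Pblind$ is honest, except with probability $\negl(\lambda)$ each decrypted $s_i$ is distributed exactly as the honest $\pro$'s $i$-th message in $\Pi$ and each $t_i$ is computed exactly as in $\Pi$, so $\Vblind$'s acceptance decision and --- for a $\SampBQP$ task --- its output sample are statistically $\negl(\lambda)$-close to those of an honest run of $\Pi$; hence the completeness error (and, for sampling, the closeness to $D_x$) is preserved up to a negligible additive loss, and $\Pblind$ is still an efficient quantum prover.

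For soundness, from a cheating $\Pblindstar$ against $\Piblind$ I would build $\Pstar$ against $\Pi$ that runs the QFHE wrapper internally: on input $(x,C)$ it samples $(\pk,\sk)$ itself, hands $(\pk,\QEnc_{\pk}(x))$ to $\Pblindstar$, and then relays --- decrypting each $\widehat{s_i}$ with $\sk$ and forwarding $s_i$ to the external honest $\ver$, and re-encrypting each reply $t_i$ under $\pk$ and forwarding $\widehat{t_i}$ back. The view of $\Pblindstar$ inside $\Pstar$ is identically distributed to its view in a real $\Piblind$ execution, and the external $\ver$'s decision/output equals exactly the one $\Vblind$ would produce (both are ``decrypt-then-run-$\ver$''), so $\Pstar$ --- efficient, since QFHE operations are efficient --- cheats $\Pi$ with exactly the probability that $\Pblindstar$ cheats $\Piblind$; the soundness error is thus preserved with no loss, and because the reduction copies the pair $(d,z)$ verbatim it applies uniformly to both decision ($\BQP$) and sampling ($\SampBQP$) soundness. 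For blindness, the whole view of any (possibly malicious) $\Pblindstar$ is $(\pk,\widehat x,\widehat{t_1},\widehat{t_2},\dots)$, a public key together with ciphertexts under $\pk$, with $\sk$ never entering the prover's view; a standard hybrid argument over these ciphertexts, invoking IND-CPA security of QFHE under QLWE (with polynomial modulus when using Brakerski's scheme), then shows this view is computationally indistinguishable from $(\pk,\QEnc_{\pk}(0),\QEnc_{\pk}(0),\dots)$ and hence computationally independent of $x$.

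The hard part will not be any of the protocol analysis above but rather pinning down the QFHE interface so that all of it is literally instantiable: I must check that the scheme supports (i) multi-hop / stateful homomorphic evaluation, so $\Pblind$ can carry an encrypted $\pro$-register across all rounds of $\Pi$ while repeatedly feeding in freshly encrypted verifier messages under one reused key --- this is where the leveled structure or bootstrapping of the Mahadev and Brakerski constructions (and the accompanying circular-security convention) enters --- and (ii) the classical-friendly properties that keep $\Vblind$ classical and all transmitted strings classical. Once those structural facts about the QFHE scheme are in place, completeness is exactly QFHE correctness, soundness is the transparent wrapper reduction, and blindness is exactly QFHE semantic security; since none of the three touches the internals of $\Pi$, the compiler is automatically generic, completeness/soundness-preserving, and round-preserving.
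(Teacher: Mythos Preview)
Your high-level plan matches the paper's, and your soundness reduction (have $\Pstar$ sample the QFHE keys itself and run $\Pblindstar$ as a black box) is exactly right. The gap is in the blindness argument, and it stems from your decision to reuse a single key pair $(\pk,\sk)$ throughout. You claim a ``standard hybrid argument over these ciphertexts, invoking IND-CPA,'' but that reduction cannot be written down: to simulate the interaction with $\Pblindstar$, the IND-CPA adversary must produce the verifier messages $\widehat{t_j}$, whose plaintexts $t_j$ depend on $s_j=\QDec_{\sk}(\widehat{s_j})$ and hence on $\sk$---which an IND-CPA adversary by definition does not have. In other words, although $\sk$ is never sent to the prover, every $\widehat{t_i}$ after the first round is correlated with $\sk$ through the decrypt-then-reencrypt step, and IND-CPA alone says nothing about that setting (nor can you fall back on CCA security, since any homomorphic scheme is trivially CCA-insecure).

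The paper's compiler fixes precisely this by switching to a \emph{fresh} key pair $(pk_t,sk_t)$ each round: $\Vblind$ decrypts the prover's round-$(t{-}1)$ message under $sk_{t-1}$, computes $v_t$ in the clear, samples a new $(pk_t,sk_t)$, and sends $pk_t$, $\QEnc(pk_t,v_t)$, and $\QEnc(pk_t,sk_{t-1})$ so the prover can homomorphically key-switch its encrypted state. Blindness is then proved by a backward hybrid that replaces the round-$t$ ciphertexts by encryptions of $0$ one $t$ at a time; once rounds $>t$ have been zeroed out, $sk_t$ appears nowhere in the prover's view, so IND-CPA under $pk_t$ applies, and the reduction still holds $sk_1,\ldots,sk_{t-1}$ and can therefore decrypt and honestly simulate all earlier rounds. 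Your soundness argument and the classical-friendliness observations survive this change intact.
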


Applying our blindness compiler to the parallel repetition of Mahadev's protocol from~\cite{arXiv:ChiaChungYam19, arXiv:AlaChiHun19}, we obtain the first constant-round blind CVQC protocol for $\BQP$ with negligible completeness and soundness error, resolving the aforementioned open question.

\begin{theorem}[informal]
    Under the QLWE assumption, there exists a blind, four-message CVQC protocol for all languages in $\BQP$ with negligible completeness and soundness errors.
\end{theorem}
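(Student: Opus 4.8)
The plan is to obtain this as an immediate consequence of two ingredients that are already in place: the parallel-repeated version of Mahadev's CVQC protocol for $\BQP$ from~\cite{arXiv:ChiaChungYam19, arXiv:AlaChiHun19}, and the generic blindness compiler established above. First I would recall that, under the QLWE assumption, the works of~\cite{arXiv:ChiaChungYam19, arXiv:AlaChiHun19} give a four-message CVQC protocol $\Pi$ for every language in $\BQP$ with negligible completeness error and negligible soundness error. Here it is worth noting why the message count stays at four: the single-copy Mahadev protocol is already a four-message $\QPIP_0$ protocol (two prover messages, two verifier messages), and parallel repetition repeats these messages in lockstep without introducing new rounds; the soundness amplification to negligible is driven by the partition-lemma/parallel-repetition analysis of those papers together with a majority vote over the per-copy decision bits, and completeness stays negligible by a union bound over the copies.

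Next I would invoke our generic compiler. Since the compiler is stated for an \emph{arbitrary} CVQC protocol — it imposes requirements only on the classical-friendly QFHE scheme used internally, not on the protocol being compiled — it applies verbatim to $\Pi$ and produces a protocol $\Piblind$ that (i) has the same round complexity as $\Pi$, namely four messages; (ii) has the same completeness error, namely negligible; (iii) has the same soundness error, namely negligible; and (iv) additionally satisfies blindness. Finally, one checks that the assumption base does not grow: a classical-friendly QFHE scheme exists under QLWE (indeed under QLWE with polynomial modulus, using Brakerski's construction~\cite{brakerski_qfhe}), and Mahadev's protocol already relies only on QLWE, so the whole construction rests on QLWE alone, yielding the claimed blind, four-message CVQC protocol for $\BQP$ with negligible completeness and soundness errors.

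Since both ingredients are established earlier, the bulk of this proof is bookkeeping, and there is no genuine obstacle here — the difficulty lives entirely in proving the compiler theorem and in the cited parallel-repetition analyses. The only point that requires a moment's care is confirming that plugging the QFHE layer in front of $\Pi$ does not cost a round: the verifier can fold QFHE key generation and encryption of its outgoing messages into its existing first message, and fold QFHE decryption of the prover's homomorphically evaluated replies into its existing local post-processing, so no message is added in either direction. But this round-preservation, as well as the fact that the compiler's soundness reduction still goes through against a malicious prover who only ever sees ciphertexts, is precisely what the compiler theorem asserts, so nothing further needs to be verified and the conclusion follows.
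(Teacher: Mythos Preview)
Your proposal is correct and follows essentially the same approach as the paper: apply the generic blindness compiler to the parallel-repeated Mahadev protocol from~\cite{arXiv:ChiaChungYam19, arXiv:AlaChiHun19}, and observe that the compiler preserves the four-message structure, negligible completeness, and negligible soundness while adding blindness, all under QLWE. The only minor imprecision is that the compiler makes the completeness error \emph{negligibly close} to the original (via QHE correctness) rather than literally identical, but since the original error is already negligible this does not affect the conclusion.
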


We can also apply our compier to our CVQC protocol for $\SampBQP$ to additionally achieve blindness. 

\begin{theorem}[informal]
        Under the QLWE assumption, there exists a blind, four-message CVQC protocol for all sampling problems in $\SampBQP$ with  computational soundness and negligible completeness error.
\end{theorem}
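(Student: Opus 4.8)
The plan is to combine the two main results already developed in the paper, so the proof is essentially a one-line corollary with a bit of bookkeeping. First I would invoke \thmref{qpip0-informal}: under the QLWE assumption there is a four-message CVQC protocol $\Pi_\SampZ$ for every sampling problem in $\SampBQP$ with computational soundness and negligible completeness error. Then I would apply the generic blindness compiler to $\Pi_\SampZ$, obtaining a protocol $\Pi_\SampZ^\blind$. By the compiler theorem (the second informal theorem above), $\Pi_\SampZ^\blind$ achieves blindness while preserving the round complexity, the completeness error, and the soundness error. Since $\Pi_\SampZ$ has four messages, negligible completeness error, and computational soundness, so does $\Pi_\SampZ^\blind$; this yields exactly the claimed statement.

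There are two small points I would check carefully rather than wave away. First, the compiler is stated for ``any CVQC protocol,'' so I would confirm that the CVQC-for-$\SampBQP$ notion (with its simulation-based soundness definition, \Cref{dfn:stats-secure-proto-sampbqp}) fits the input format the compiler expects; the natural way is to verify that the compiler's soundness-preservation argument is indistinguishable-based and therefore goes through verbatim when the ``output'' carries the extra sample register $z$ alongside the decision bit $d$. Concretely, a simulator for $\Pi_\SampZ^\blind$ is obtained by composing the $\Pi_\SampZ$ simulator with the QFHE decryption and the compiler's simulator, and the $\eps$-closeness (for $\eps$ an arbitrary inverse polynomial, as permitted by the $\SampBQP$ definition) is preserved up to the computational indistinguishability of QFHE ciphertexts. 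Second, I would note the assumption accounting: the compiler needs a ``classical-friendly'' QFHE scheme, which exists under QLWE (indeed under QLWE with polynomial modulus via Brakerski's scheme), and $\Pi_\SampZ$ already needs QLWE, so the combined assumption is still just QLWE.

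The main (and only real) obstacle is the interface-matching in the previous paragraph: making sure the compiler's guarantees, which were presumably proved with a decision-type output in mind, transfer to a protocol whose honest output includes a fresh random sample and whose soundness is phrased via closeness of $(d,z)$ to an ideal $(d,z_{\mathrm{ideal}})$. I expect this to be routine — the compiler operates by encrypting the verifier's messages under QFHE and having the prover compute homomorphically, which is oblivious to the semantics of the verifier's final output — but it is the one place where a careful statement of the compiler theorem (quantified over all CVQC protocols, including sampling ones) is needed, and I would make sure \Cref{subsection:blind}'s formal version is stated at that level of generality so this corollary is immediate.
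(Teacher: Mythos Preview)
Your proposal is correct and matches the paper's approach exactly: the paper states this theorem as an immediate corollary of applying the blindness compiler (Section~\ref{sec:BlindBQP2}) to the four-message $\QPIP_0$ protocol $\PiSampZ$ for $\SampBQP$, without further argument. Your interface-matching worry is in fact handled by the paper's soundness-preservation theorem (\Cref{thm:compiler-errors}), which shows that for any cheating $\Pblindstar$ there is a cheating $\Pstar$ with \emph{identical} output distributions---this is agnostic to whether the output is a single decision bit or a pair $(d,z)$, so the simulation-based $\SampBQP$ soundness carries over directly.
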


\paragraph{Techniques.} At a high-level, the idea is simple: we run the original protocol under a QFHE with the QFHE key generated by the verifier. Intuitively, this allows the prover to compute his next message under encryption without learning verifier's message, and hence achieves blindness while preserving the properties of the original protocol.
One subtlety with this approach is the fact that the verifier is classical while the QFHE cipher text could contain quantum data.
In order to make the classical verifier work in this construction, the ciphertext and the encryption/decryption algorithm need to be classical when the underlying message is classical. Fortunately, such  ``classical-friendly'' property is satisfied by the construction of~\cite{mahadev_qfhe, brakerski_qfhe}.

A more subtle issue is to preserve the soundness.
In particular, compiled protocols with a single application of QFHE might (1) leak information about the circuit evaluated by the verifier through its outputted QFHE ciphertexts (i.e., no \emph{circuit privacy});
or (2) fail to simulate original protocols upon receiving invalid ciphertexts from the prover.
We address these issues by letting the verifier switch to a fresh new key for each round of the protocol. 
Details are given in \Cref{sec:BlindBQP2}.

\vspace{-3pt}

\subsection{Related and Followup Works and Discussions}  \label{subsec:discussion}

As mentioned, while we are the first to explicitly investigate delegation of quantum sampling problems, Hayashi and Morimae~\cite{hayashi2015verifiable} constructed an one-round blind $\QPIP_1$ protocol that can be used to delegate $\SampBQP$ and achieve our notion of information-theoretical security. Like our $\SampBQP$ protocol, their protocol has an arbitrarily small inverse polynomial soundness error instead of negligible soundness error. Also as mentioned,  Gheorghiu and Vidick~\cite{FOCS:GheVid19} constructed the first blind CVQC protocol for $\BQP$ by constructing a composable remote state preparation protocol and combining it with the verifiable blind quantum computation protocol of Fitzsimons and Kashefi~\cite{FK17}. However, their protocol has polynomially many rounds  and requires a rather involved analysis. 

It is also worth noting that several existing constructions in the relaxed models (e.g., verifiable blind computation~\cite{FK17}) can be generalized to delegate $\SampBQP$ in a natural way, but it seems challenging to analyze the soundness of the generalized protocol. Furthermore, it is unlikely that these generalized protocols can achieve negligible soundness error for $\SampBQP$. The reason is that in all these constructions, some form of cut and choose are used to achieve soundness.
For sampling problems, as mentioned, there seems to be no generic way to combine multiple samples for error reduction, so the verifier needs to choose one sample to output in the cut and choose. In this case, an adversarial prover may choose to cheat on a random copy in the cut and choose and succeed in cheating with an inverse polynomial probability. 

On the other hand, while the definition of $\SampBQP$ in~\cite{aaronson_2013, Boson} allows an inverse polynomial error, there seems to be no fundamental barriers to achieve negligible error. It is conceivable that negligible error can be achieved using quantum error correction. Negligible security error is also achievable in the related settings of secure multi-party quantum computation ~\cite{CGS02,DNS12} and verifiable quantum FHE~\cite{ADSS17} based on verifiable quantum secret sharing or quantum authentication codes\footnote{The security definitions are not comparable, but it seems plausible that the techniques can be used to achieve negligible soundness error for sampling problems.}. However, both primitives require computing and communicating quantum encodings and are not applicable in the context of CVQC and $\QPIP_1$. An intriguing open problem is whether it is possible to achieve negligible soundness error with classical communication while delegating a quantum sampling problem.

In a recent work, Bartusek~\cite{bartuseksecure} used the technique we developed for delegation of $\SampBQP$ to construct secure quantum computation protocols with classical communication for pseudo-deterministic quantum functionalities.

\paragraph{Organization}
We provide preliminaries on technical background in Section~\ref{sec:prelim}. 
Our simulation-based definition of CVQC for $\SampBQP$ is discussed in Section~\ref{sec:samp_definition}. 
Our main technical contributions are explained in Section~\ref{sec:sampbqp_short} (a construction of $\QPIP_1$ protocol for $\SampBQP$), 
Section~\ref{sec:qpip0_all} (the construction of $\QPIP_0$ protocol for $\SampBQP$ based on the above $\QPIP_1$ protocol), 
and Section~\ref{sec:BlindBQP2} (a generic compiler to upgrade $\QPIP_0$ protocols with blindness).

\section{Preliminaries} \label{sec:prelim}

\subsection{Notations}

Let $\mathcal{B}$ be the Hilbert space corresponding to one qubit. Let $H:\mathcal{B}^{\otimes n}\rightarrow\mathcal{B}^{\otimes n}$ be Hermitian matrices. We use $H\geq0$ to denote $H$ being positive semidefinite. Let $\lambda(H)$ be the smallest eigenvalue of $H$. The ground states of $H$ are the eigenvectors corresponding to $\lambda(H)$. For matrix $H$ and subspace $S$, let $H\big|_S=\Pi_S H \Pi_S$, where $\Pi_S$ is the projector onto the subspace $S$.  
We write $x\xleftarrow{\$}X$ when $x$ is sampled uniformly at random from the set $X$.

The phase gate and Pauli matrices are the following matrices.

\begin{definition}
    $P(i)=\begin{pmatrix}1&0\\0&i\end{pmatrix}$,
    $X=\begin{pmatrix}0&1\\1&0\end{pmatrix}$,
    $Y=\begin{pmatrix}0&-i\\i&0\end{pmatrix}$,
    $Z=\begin{pmatrix}1&0\\0&-1\end{pmatrix}$.
\end{definition}

For $n$-qubit states $\rho_1, \rho_2\in\cB^{\otimes n}$,
let $\norm{\rho_1-\rho_2}_{\mathrm{tr}}=\frac{1}{2}\norm{\rho_1-\rho_2}_1$ denote the trace distance between $\rho_1$ and $\rho_2$. We say $\rho_1$ and $\rho_2$ are $\eps$-close if $\norm{\rho_1- \rho_2}_{\mathrm{tr}}\leq\eps$.
For two distributions $f_1$ and $f_2$ over the same finite domain $X$, let $\norm{f_1-f_2}_{\mathrm{TV}}=\frac{1}{2}\sum_{x\in X}\abs{f_1(x)-f_2(x)}$ denote their total variation distance,
and we say $f_1$ and $f_2$ are $\eps$-close if $\norm{f_1-f_2}_{\mathrm{TV}}\leq\eps$.
We additionally denote the following for states that are only computationally indistinguishable (but might be statistically far):
\begin{definition}
\label{def:comp-indis}
    Two ensemble of states $\set{\rho_{1,\lambda}}_{\lambda\in\bbN}$ and $\set{\rho_{2,\lambda}}_{\lambda\in\bbN}$ are \emph{$\eps$-computationally indistinguishable} if for all $\BQP$ distinguishers $\mathsf{D}$ and $\lambda\in\bbN$,
    $$\abs{\Pr\left[\mathsf{D}(\rho_{1,\lambda})=1\right]-\Pr\left[\mathsf{D}(\rho_{2,\lambda})=1\right]}\leq\eps(\lambda).$$
    We drop the $\eps$ parameter when it is negligible.
    We extend computational indistinguishability to subnormalized states by interpreting a subnormalized state $\rho$ as sending out $\bot$ with probability $1-\tr(\rho)$ and having the distinguisher $\mathsf{D}$ output $\bot$ whenever he receives $\bot$.
    We also extend it to classical random variables by treating them as mixed states.
\end{definition}

We denote measurement as follows:
\begin{definition} [quantum-classical channels]
    \label{def:QCChannel}
    A quantum measurement is given by a set of matrices $\set{M_k}$ such that $M_k\geq0$ and $\sum_k M_k=\id$.
    We associate to any measurement a map $\Lambda(\rho)=\sum_k \tr(M_k\rho)\proj{k}$
    with $\{\ket{k}\}$ an orthonormal basis.
    This map is also called a \emph{quantum-classical channel}.
\end{definition}
Furthermore, we associate measurement outcomes in $X$ or $Z$ basis with a corresponding random variable as follows:
\begin{definition}[$M_{XZ}(\rho,h)$]
	For any natural number $n$, $n$-bit string $h$, and $n$-qubit quantum state $\rho$, consider the following measurement procedure: measure the first qubit of $\rho$ in $X$ basis if $h_1=0$; measure the first qubit of $\rho$ in $Z$ basis if $h_1=1$.  Measure the second qubit of $\rho$ in $X$ basis if $h_2=0$; measure the second qubit of $\rho$ in $Z$ basis if $h_2=1$. Continue qubit-by-qubit until all $n$-qubits of $\rho$ are measured, where $i$-th qubit is measured in $X$ basis if $h_i=0$ and  $i$-th qubit is measured in the $Z$ basis if $h_i=1$.

	We denote the $n$-bit random variable corresponding to the measurement results as $M_{XZ}(\rho,h)$.
\end{definition}

\subsection{Quantum Prover Interactive Proof (QPIP)} 
\label{sec:qpip_def}

Following \cite{FOCS:Mahadev18a}, we model the interaction between an almost classical client with quantum memory of size $\tau$ (verifier) and a quantum server (prover) in the following definition.
When the client has limited quantum memories (i.e., $\tau > 0$), it becomes a ``streaming" model. Namely, the server can send qubits one-by-one to let the client measure them sequentially, even if the client cannot hold the entire quantum message.
When the client is fully classical (i.e., $\tau = 0$), this model specializes to the standard interactive protocols.

\begin{definition}
    A protocol $\Pi=(P, V)$ is said to be in $\QPIP_\tau$ if it satisfies the following properties:
    \begin{itemize}
        \item $P$ is a $\BQP$ machine, which also has access to a quantum channel that can transmit $\tau$ qubits to the verifier per use.
        \item $V$ is a hybrid machine of a classical part and a limited quantum part. The classical part is a $\BPP$ machine. The quantum part is a register of $\tau$ qubits, on which the verifier can perform arbitrary quantum operations and which has access to a quantum channel that can transmit $\tau$ qubits. At any given time, the verifier is not allowed to possess more than $\tau$ qubits. The interaction between the quantum and classical parts of the verifier is the usual one: the classical part controls which operations are to be performed on the quantum register, and outcomes of measurements of the quantum register can be used as input to the classical part.
        \item There is also a classical communication channel between the prover and the verifier, which can transmit polynomially many bits to either direction.
    \end{itemize}
\end{definition}

It is straightforward to define what it means for a $\QPIP_\tau$ protocol to decide a $\BQP$ language.

\begin{definition}
    Let $\Pi=(P, V)$ be a $\QPIP_\tau$ protocol.
    We say it is a protocol for a $\BQP$ language $L$ with completeness error $c(\cdot)$ and soundness error $s(\cdot)$ if the following holds:
\begin{itemize}
        \item On public input $1^\lambda$ and $x\in\zo^*$, the verifier outputs either $\Acc$ or $\Rej$.
        \item (Completeness): For all security parameters $\lambda\in\bbN$ and $x\in\zo^{\poly(\lambda)}$, if $x\in L$ then
        $\Pr\left[(P, V)(x, 1^\lambda)=\Rej\right]<c(\lambda)$.
        \item (Soundness): For all cheating $\BQP$ provers $P^*$, sufficiently large \\security parameters $\lambda\in\bbN$, and $x\in\zo^{\poly(\lambda)}$, if $x \notin L$ then \\$\Pr\left[(P, V)(x, 1^\lambda)=\Acc\right]<s(\lambda)$.
    \end{itemize}
\end{definition}

We are particularly interested in the case that $\tau = 0$, i.e., when the verifier $V$ is classical. In this case, we say $\Pi$ is a CVQC protocol for the $\BQP$ language $L$.

We now define \emph{blindness} which means that the prover does not learn any information about the verifier's input except for its length.
As such, in this context, we consider the input $x$ as the verifier's private input instead of a common input.  

\begin{definition}[Blindness]
    Let $\Pi=(P, V)$ be an interactive protocol with common inputs $1^\lambda$ 
    and verifier's private input $x\in\zo^*$.
    We say $\Pi$ is \emph{blind} if for all cheating $\BQP$ provers $P^*$ the following ensembles are computationally indistinguishable over $\lambda$:
    \begin{itemize}
        \item $\set{\View_{P^*}(P^*, V(x))(1^\lambda)}_{\lambda\in\bbN, \eps\in(0,1), x\in\zo^*}$,
        \item $\set{\View_{P^*}(P^*, V(0))(1^\lambda)}_{\lambda\in\bbN, \eps\in(0,1), x\in\zo^*}$.
    \end{itemize}
\end{definition}

\begin{remark} \label{rmk:blind-comp}
In literature, the definition of blindness may also require to additionally hide the computation. The two notions are equivalent from a feasibility point of view by a standard transformation (e.g., as in~\cite{GHRW14} in a different context) that takes the description of the computation as part of the input and delegate the computation of  universal quantum circuits.
\end{remark}

\subsection{Quantum Homomorphic Encryption Schemes}

We use the quantum fully homomorphic encryption scheme given in \cite{mahadev_qfhe} which is compatible with our use of a classical client.
\begin{definition}
    A quantum leveled homomorphic (public-key) encryption scheme $\QHE=(\QGen, \QEnc, \QEval, \QDec)$ is quadruple of quantum polynomial-time algorithms which operate as follows:
    \begin{itemize}
        \item \emph{Key generation.}
            The algorithm $(pk, sk)\leftarrow\QGen(1^\lambda, 1^L)$ takes a unary representation of the security parameter and a unary representation of the level parameter as inputs and outputs a public key $pk$ and a secret key $sk$.
        \item \emph{Encryption.}
            The algorithm $c\leftarrow\QEnc(pk, \mu)$ takes the public key $pk$ and a single bit message $\mu\in\zo$ as inputs and outputs a ciphertext $c$.
        \item \emph{Decryption.}
            The algorithm $\mu^*\leftarrow\QDec(sk, c)$ takes the secret key $sk$ and a ciphertext $c$ as inputs and outputs a message $\mu^*\in\zo$.
            It must be the case that
                $$\QDec(sk, \QEnc(pk, \mu))=\mu$$
            with overwhelming probability in $\lambda$.
        \item \emph{Homomorphic Evaluation.}
            The algorithm $c_f\leftarrow\QEval(pk, f, c_1, \ldots, c_l)$ takes the public key $pk$, a quantum circuit $f$ of depth at most $L$, and a set of $l$ ciphertexts $c_1,\cdots,c_l$ as inputs and outputs a ciphertext $c_f$.
            It must be the case that
            \begin{align*}
                \QDec&(sk, \QEval(pk, f, c_1, \ldots, c_l))
                \\
                &=f(\QDec(sk, c_1),\ldots,\QDec(sk, c_l))
            \end{align*}
            with overwhelming probability in $\lambda$.
    \end{itemize}
\end{definition}

\begin{definition}[Compactness and Full Homomorphism]
    A quantum homomorphic encryption scheme $\QHE$ is \emph{compact} if
    there exists a polynomial $s$ in $\lambda$ such that the output length of $\QEval$ is at most $s$ bits long (regardless of $f$ or
    the number of inputs). A compact scheme is (pure) \emph{fully homomorphic} if it can evaluate any efficiently computable
    Boolean function.
\end{definition}

We also recall the security definition for a FHE scheme.

\begin{definition}
    A scheme $\QHE$ is IND-CPA secure if, for any polynomial time adversary $\cA$, there exists a negligible function $\mu(\cdot)$ such that
    $$\abs{Pr[\cA(pk, \mathsf{HE.Enc}_{pk}(0))=1]-Pr[\cA(pk, \mathsf{HE.Enc}_{pk}(1))=1]}\leq\mu(\lambda),$$
    where $(pk, sk)\leftarrow\QGen(1^\lambda)$.
\end{definition}

The quantum homomorphic encryption scheme $\mathsf{QHE}$ from \cite{mahadev_qfhe} has additional properties that facilitate the use of classical clients.

\begin{definition} \label{def:classical-friendly}
    we call a quantum homomorphic encryption scheme \emph{classical-friendly} if it has the following properties:
    \begin{itemize}
        \item $\QGen$ is a classical algorithm
        \item In the case where the plaintext is classical, $\QEnc$ can be done classically.
        \item When the underlying message is a classical-quantum state, then the cipher text is composed of a corresponding classical part and a corresponding quantum part. Moreover, the classical part has a classical ciphertext and can be decrypted classically.
    \end{itemize}
\end{definition}

\begin{theorem} [Theorem 1.1 in \cite{mahadev_qfhe}]
    Under the assumption that the learning with errors problem with superpolynomial noise ratio is computationally intractable for an efficient quantum machine,
    there exists a classical-friendly compact quantum leveled homomorphic encryption scheme.
\end{theorem}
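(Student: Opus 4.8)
The plan involves no new construction: the scheme is exactly the one of \cite{mahadev_qfhe}, and the only task is to unwind its construction far enough to read off the three bullets of \Cref{def:classical-friendly}. Theorem 1.1 of \cite{mahadev_qfhe} already supplies, under LWE with superpolynomial noise ratio, a compact quantum leveled homomorphic encryption scheme $\QHE = (\QGen, \QEnc, \QEval, \QDec)$ satisfying the correctness, compactness, and (leveled) full-homomorphism requirements, so those parts of the statement are immediate and I would simply cite them. Everything below is about the ``classical-friendly'' add-on.

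Recall the structure of Mahadev's scheme: a single encrypted qubit is kept as a quantum one-time pad $X^a Z^b \ket{\psi}$ together with a classical leveled-FHE encryption of the Pauli keys $(a,b)$, and $\QEval$ works by applying the target circuit gate-by-gate on the padded state while homomorphically updating the encrypted keys inside the classical FHE layer. From this I would verify the three properties in turn. (i) $\QGen$ consists of the key generation of the underlying classical leveled FHE scheme together with sampling keys for the (noisy) trapdoor claw-free function family of \cite{mahadev_qfhe}; all of this is a classical randomized computation on classical data, so $\QGen$ is classical. (ii) Since $\QEnc$ only ever takes a classical bit $\mu \in \zo$, one encrypts it as the classical FHE encryption of $\mu$ together with trivial Pauli keys, which is a purely classical computation, giving the second bullet. (iii) For a classical--quantum plaintext $\rho = \sum_i p_i \proj{i}_{\regC} \otimes \sigma_i$, applying the scheme one-time-pads the whole state and classically FHE-encrypts all the pad keys, so the ciphertext splits into a classical part (the classical FHE ciphertexts carrying the pad keys, plus the one-time-padded bits of the $\regC$ register --- all classical strings) and a quantum part (the padded purification of the $\sigma_i$); the classical part is decrypted by running classical FHE decryption to recover the Pauli keys and XOR-ing them off the $\regC$ register, which a $\BPP$ machine can do. Here one uses that a $Z$-type Pauli correction on a computational-basis register is only a global phase and hence irrelevant once the register is classical, so no quantum operation is needed to decrypt the classical part.

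The step that actually requires care is checking that these structural invariants are preserved by $\QEval$: that after homomorphic evaluation the ciphertext is still a one-time pad plus a classical-FHE encryption of the (updated) keys, that the key-update circuitry --- including the measurement-and-correction gadgets for non-Clifford gates --- is evaluated entirely inside the classical FHE layer, and that the classical register is never swept into a genuinely entangled quantum ciphertext from which its pad keys cannot be extracted classically. All of this is exactly the content of the correctness proof of $\QEval$ in \cite{mahadev_qfhe}, so I would invoke it rather than re-derive it; the main ``obstacle'' is therefore bookkeeping --- tracking the Clifford/Toffoli gadgets of \cite{mahadev_qfhe} precisely enough to confirm the three bullets --- rather than any new idea. (The same verification goes through for Brakerski's scheme \cite{brakerski_qfhe}, which additionally allows dropping to polynomial LWE modulus, but for this theorem it suffices to point to \cite{mahadev_qfhe}.)
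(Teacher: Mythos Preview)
The paper does not prove this theorem at all: it appears in the preliminaries section as a cited result, attributed to Theorem~1.1 of \cite{mahadev_qfhe}, with no accompanying argument. The only part that is not literally Theorem~1.1 of \cite{mahadev_qfhe} is the ``classical-friendly'' qualifier (Definition~\ref{def:classical-friendly} is the paper's own), and the paper simply asserts that Mahadev's scheme satisfies it without further justification.

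Your proposal is therefore strictly more than what the paper provides, and your approach---unwinding the one-time-pad-plus-classical-FHE structure of Mahadev's scheme to read off the three bullets of Definition~\ref{def:classical-friendly}---is exactly the right way to substantiate the claim. The reasoning you sketch (classical key generation, classical encryption of classical bits, and the observation that $Z$-corrections on computational-basis registers are trivial so the classical part decrypts classically) is correct, and your identification of the only real content---checking that $\QEval$ preserves the invariant that pad keys live inside the classical FHE layer---is on point. There is no gap; if anything, you are being more careful than the paper.
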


\section{Delegation of Quantum Sampling Problems} \label{sec:samp_definition}

In this section, we formally introduce the task of delegation for quantum sampling problems. We start by recalling the complexity class $\SampBQP$ defined by Aaronson~\cite{aaronson_2013, Boson}, which captures the class of sampling problems that
are approximately solvable by polynomial-time quantum algorithms.

\begin{definition} [Sampling Problem]
    \label{dfn:sampling-problem}
    A \emph{sampling problem} is a collection of probability distributions $(D_x)_{x\in\set{0, 1}^*}$, one for each input string $x\in\set{0,1}^n$, where $D_x$ is a distribution over $\set{0,1}^{m(n)}$ for some fixed polynomial $m$.
\end{definition}

\begin{definition} [$\SampBQP$]
    $\SampBQP$ is the class of sampling problems $\left(D_x\right)_{x\in\set{0, 1}^*}$ that can be (approximately) sampled by polynomial-size uniform quantum circuits. Namely, there exists a Turing machine $M$ such that for every $n \in \bbN$ and $\eps \in (0,1)$, $M(1^n, 1^{1/\eps})$ outputs a quantum circuit $C$ in $\poly(n, 1/\eps)$ time such that for every $x \in \zo^n$, the output of $C(x)$ (measured in standard basis) is $\eps$-close to $D_x$ in the total variation distance. 
\end{definition}

Note that in the above definition, there is an accuracy parameter $\eps$ and the quantum sampling algorithm only requires to output a sample that is $\eps$-close to the correct distribution in time $\poly(n,1/\eps)$.
\cite{aaronson_2013, Boson} discussed multiple reasons for allowing the inverse polynomial error, such as to take into account the inherent noise in conceivable physical realizations of quantum computer.
On the other hand, it is also meaningful to require negligible error. As discussed, it is an intriguing open question to delegate quantum sampling problem with negligible error.

We next define what it means for a $\QPIP_\tau$ protocol\footnote{See Appendix~\ref{sec:qpip_def} for a formal definition of $\QPIP_\tau$.} to solve a $\SampBQP$ problem $\left(D_x\right)_{x\in\set{0, 1}^*}$.
Since sampling problems come with an accuracy parameter $\eps$, we let the prover $P$ and the verifier $V$ receive the input $x$ and $1^{1/\eps}$ as common inputs. 
Completeness is straightforward to define, which requires that when the prover $P$ is honest, the verifier $V$ should accept with high probability and output a sample $z$ distributed close to $D_x$ on input $x$. Defining soundness is more subtle. Intuitively, it requires that the verifier $V$ should never be ``cheated'' to accept and output an incorrect sample even when interacting with a malicious prover. We formalize this by a strong simulation-based definition, where we require that the joint distribution of the decision bit $d \in \set{\Acc, \Rej}$ and the output $z$ (which is $\bot$ when $d = \Rej$) is $\eps$-close (in either statistical or computational sense) to an ``ideal distribution'' $(d,z_{ideal})$, where $z_{ideal}$ is sampled from $D_x$ when $d = \Acc$ and set to $\bot$ when $d = \Rej$. Since the protocol receives the accuracy parameter $1^{1/\eps}$ as input to specify the allowed error, we do not need to introduce an additional soundness error parameter in the definition.

\begin{definition}
    \label{dfn:stats-secure-proto-sampbqp}
    Let $\Pi=(P, V)$ be a $\QPIP_\tau$ protocol.
    We say it is a protocol for the $\SampBQP$ instance $(D_x)_{x\in\zo^*}$ with completeness error $c(\cdot)$ and statistical (resp., computational) soundness if the following holds:
    \begin{itemize}
        \item On public inputs $1^\lambda$, $1^{1/\eps}$, and $x\in\zo^{\poly(\lambda)}$, $V$ outputs $(d, z)$ where $d\in\set{\Acc, \Rej}$.
            If $d=\Acc$ then $z\in\zo^{m(\abs{x})}$ where $m$ is given in \Cref{dfn:sampling-problem}, otherwise $z=\bot$.
        \item (Completeness):
            For all accuracy parameters $\eps(\lambda)=\frac{1}{\poly(\lambda)}$,
            security parameters $\lambda\in\bbN$,
            and $x\in\zo^{\poly(\lambda)}$,
            let $(d, z)\leftarrow(P, V)(1^\lambda, 1^{1/\eps}, x)$, then $d=\Rej$ with probability at most $c(\lambda)$.
        \item (Statistical soundness): For all cheating provers $P^*$,
            accuracy parameters $\eps(\lambda)=\frac{1}{\poly(\lambda)}$,
            sufficiently large $\lambda\in\bbN$, and $x\in\zo^{\poly(\lambda)}$,
            consider the following experiment:
            \begin{itemize}
                \item Let $(d, z)\leftarrow(P^*, V)(1^\lambda, 1^{1/\eps}, x)$.
                \item Define $z_{ideal}$ by
                $$\begin{cases}
                    z_{ideal}=\bot & \text{if } d=\Rej\\
                    z_{ideal}\leftarrow D_x & \text{if } d=\Acc
                \end{cases}$$.
            \end{itemize}
            It holds that $\norm{(d,z)-(d,z_{ideal})}_{\mathrm{TV}}\leq\eps$.
		\item (Computational soundness):
        For all cheating $\BQP$ provers $P^*$, $\BQP$ distinguishers $\mathsf{D}$, accuracy parameters $\eps(\lambda)=\frac{1}{\poly(\lambda)}$,
            sufficiently large $\lambda\in\bbN$, and all $x\in\zo^{\poly(\lambda)}$,
            let us define $d, z, z_{ideal}$ by the same experiment as above.
            It holds that $(d, z)$ is $\eps$-computationally indistinguishable to $(d, z_{ideal})$ over $\lambda$.
    \end{itemize}
\end{definition}

As in the case of $\BQP$, we are particularly interested in the case that $\tau = 0$, i.e., when the verifier $V$ is classical. In this case, we say that $\Pi$ is a CVQC protocol for the $\SampBQP$ problem $(D_x)_{x\in\zo^*}$.

\section{Construction of the $\QPIP_1$ Protocol for $\SampBQP$}
\label{sec:sampbqp_short}

As we mentioned in this introduction, we will employ the circuit \emph{history} state in the original construction of the Local Hamiltonian problem~\cite{kitaev2002classical} to encode the circuit information for $\SampBQP$.
However, there are distinct requirements between certifying the computation for $\BQP$ and $\SampBQP$ based on the history state.
For any quantum circuit $C$ on input $x$, the original construction for certifying $\BQP$\footnote{The original construction is for the purpose of certifying problems in QMA. We consider its simple restriction to problems inside BQP.} consists of local Hamiltonian $\Hin, \Hclock, \Hprop$, $\Hout$ where $\Hin$ is used to certify the initial input $x$, $\Hclock$ to certify the validness of the clock register,  $\Hprop$ to certify the gate-by-gate evolution according to the circuit description, and $\Hout$ to certify the final output.
In particular, the corresponding history state is in the ground space of $\Hin$, $\Hclock$, and $\Hprop$. Note that $\BQP$ is a decision problem and its outcome (0/1) can be easily encoded into the energy $\Hout$ on the single output qubit.
As a result, the outcome of $\BQP$ can simply be encoded by the \emph{ground energy} of $\Hin + \Hclock+\Hprop + \Hout$.

To deal with $\SampBQP$, we will still employ $\Hin, \Hclock$, and $\Hprop$ to certify the circuit's input, the clock register, and gate-by-gate evolution. However, in $\SampBQP$, we care about the entire final state of the circuit, rather than the energy on the output qubit.
Our approach to certify the entire final state (which is encoded inside the history state) is to make sure that the history state is the unique ground state of $\Hin + \Hclock+ \Hprop$ and all other orthogonal states will have much higher energies.
Namely, we need to construct some $\Hin'+ \Hclock'+ \Hprop'$ with the history state as the unique ground state and with a large \emph{spectral} gap between the ground energy and excited energies.
It is hence guaranteed that any state with close-to-ground energy must also be close to the history state.
We remark that this is a different requirement from most local Hamiltonian constructions that focus on the ground energy.
We achieve so by using the \emph{perturbation} technique developed in~\cite{kempe_kitaev_regev_2006} for reducing the locality of Hamiltonian.
Another example of local Hamiltonian construction with a focus on the spectral gap can be found in~\cite{adiabatic}, where the purpose is to simulate quantum circuits by adiabatic quantum computation.

We need two more twists for our purpose.
First, as we will eventually measure the final state in order to obtain classical samples, we need that the final state occupies a large fraction of the history state. We can simply add dummy identity gates.
Second, as we are only able to perform $X$ or $Z$ measurement by techniques from~\cite{FOCS:Mahadev18a},
we need to construct X-Z only local Hamiltonians.
Indeed, this has been shown possible in, e.g.,~\cite{PhysRevA.78.012352}, which serves as the starting point of our construction.

We present the formal construction of our $\QPIP_1$ protocol $\PiSamp$ for $\SampBQP$ in \Cref{sec:sampbqp}, \Cref{ProtoQPIP1}. The soundness and completeness of \Cref{ProtoQPIP1} is stated in the following theorem, whose proof is also deferred to \Cref{sec:sampbqp}.

\begin{thm}
    \label{QPIP1thm}
	$\PiSamp$ is a $\QPIP_1$ protocol for the $\SampBQP$ problem  $(D_x)_{x\in\set{0,1}^*}$ with negligible completeness error and is statistically sound\footnote{The soundness and completeness of a $\SampBQP$ protocol is defined in \Cref{dfn:stats-secure-proto-sampbqp}.} where the verifier only needs to do non-adaptive $X/Z$ measurements.
\end{thm}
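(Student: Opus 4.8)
\textbf{Proof proposal for \Cref{QPIP1thm}.}

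The plan is to build the $\QPIP_1$ protocol in two conceptual layers and then analyze them together. First, I would fix, for the given circuit $C$ and input $x$ (padded with dummy identity gates so that a constant fraction of the clock steps sit at the final time), a history state $\histpsi{C,x}$ in the style of Kitaev's circuit-to-Hamiltonian construction, but using only $X$ and $Z$ terms, invoking the $X$-$Z$-only local Hamiltonian of~\cite{PhysRevA.78.012352} together with the perturbation-theoretic gap analysis of~\cite{kempe_kitaev_regev_2006}. The key structural claim to establish here is a \emph{spectral} statement, not merely a ground-energy statement: there is a local Hamiltonian $H = \Hin' + \Hclock' + \Hprop'$ (all terms $X/Z$-local, with bounded norm) whose unique ground state is $\histpsi{C,x}$ at energy $0$, and whose spectral gap $\gamma$ is at least inverse polynomial in $n$ and $1/\eps$. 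From this, a standard argument gives the robustness I need: any state $\rho$ with $\tr(H\rho) \le \delta$ satisfies $\norm{\rho - \proj{\psi^{\mathrm{hist}}}}_{\mathrm{tr}} \le \sqrt{\delta/\gamma}$, so measuring the time-$T$ register of $\rho$ in the standard basis yields a distribution $O(\sqrt{\delta/\gamma} / p_{\mathrm{final}})$-close to $D_x^{\eps'}$, where $p_{\mathrm{final}} = \Omega(1)$ is the weight of the final clock value and $D_x^{\eps'}$ is the circuit's own output distribution (itself $\eps'$-close to $D_x$, for $\eps'$ a suitable polynomial fraction of $\eps$).

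Second, I would turn the task of estimating $\tr(H\rho)$ into a $\QPIP_1$ protocol following~\cite{PhysRevA.93.022326, mf16}: the prover is asked to send many copies of a state that is supposed to be $\histpsi{C,x}$, qubit by qubit; the verifier, using its one-qubit memory, performs cut-and-choose, randomly designating some copies as ``test'' copies on which it measures a randomly chosen local term of $H$ (in the $X$ or $Z$ basis dictated by that term, non-adaptively), and designating one copy as the ``computation'' copy, on which it measures the time-$T$ register in the standard ($Z$) basis to produce the output sample $z$. The verifier accepts iff the empirical energy estimated from the test copies is below a threshold. Completeness is immediate from the honest prover sending genuine history states: the test copies pass with overwhelming probability by a Chernoff/Hoeffding bound, and the computation copy yields a sample $\eps$-close to $D_x$ by the weight-of-final-step calculation, so both the accept probability and the sample-distribution condition of \Cref{dfn:stats-secure-proto-sampbqp} hold with negligible error.

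For soundness I would use the quantum de Finetti theorem in the permutation-symmetric form, exactly as in~\cite{takeuchi2018verification, hayashi2015verifiable}: the verifier's choice of which copies are test/computation is symmetric, so the prover's message can be symmetrized, and de Finetti lets me replace the (symmetrized) joint state of the unmeasured copies by a mixture of i.i.d.\ states $\sigma^{\otimes k}$, up to an error that is polynomially small once the number of copies is a sufficiently large polynomial in $n$, $1/\eps$, and the number of Hamiltonian terms. Conditioned on a given $\sigma$ in this mixture: if $\tr(H\sigma)$ is large then the test copies reject with high probability, so that branch contributes little to $\Pr[\Acc]$; if $\tr(H\sigma)$ is small then $\sigma$ is $\sqrt{\cdot/\gamma}$-close to $\histpsi{C,x}$ by the robustness claim, and hence the sample read off the computation copy is close to $D_x$ — which is exactly the simulation-based soundness condition, since on the $\Acc$ branch $(d,z)$ is then $\eps$-close to $(d, z_{ideal})$ with $z_{ideal} \leftarrow D_x$. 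Averaging the two cases over the de Finetti mixture and adding the de Finetti error gives total variation distance at most $\eps$, as required. Throughout, I would need to track that the verifier's measurements are genuinely non-adaptive $X/Z$ measurements: the test-copy measurements are determined by the randomly chosen Hamiltonian term (each term being a tensor of $X$'s and $Z$'s), and the computation-copy measurement is a fixed $Z$-basis measurement of the output register, so no measurement choice depends on an earlier outcome.

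The main obstacle I expect is the spectral-gap lower bound in the first layer: ordinary circuit-to-Hamiltonian constructions only control the ground energy (a promise gap between yes- and no-instances), whereas here I need a genuine inverse-polynomial gap above a \emph{unique} zero-energy ground state, in an $X/Z$-only Hamiltonian obtained through a locality-reducing perturbation. Getting the perturbation-theory error terms in~\cite{kempe_kitaev_regev_2006} to be small relative to the unperturbed gap — and simultaneously keeping all terms $X/Z$ and keeping the final-step weight $\Omega(1)$ after padding — is the delicate quantitative heart of the argument, and it drives the polynomial relationship between the number of copies, the accept threshold, and the target error $\eps$.
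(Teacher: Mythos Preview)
Your proposal is correct and follows essentially the same approach as the paper: build an $X$--$Z$ local Hamiltonian with the history state as unique ground state and an inverse-polynomial spectral gap via the Kempe--Kitaev--Regev projection lemma, pad the circuit so the final time step carries $\Omega(1)$ weight, run the Morimae--Fitzsimons-style energy test on randomly chosen copies in a cut-and-choose, and establish soundness by symmetrizing and applying a de Finetti theorem (the paper uses the Brand\~ao--Harrow measurement version so the error scales polynomially in the number of qubits rather than the dimension). The paper in fact pushes the gap to a constant $3/4$ by absorbing the blow-up into the coefficients $\alpha_i$, and reads the sample by measuring the data register directly (rather than conditioning on the clock), but these are cosmetic differences from what you outline.
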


\vspace{-3pt}

\section{$\SampBQP$ Delegation Protocol for Fully Classical Client}
\label{sec:qpip0_all}

In this section, we create a delegation protocol for $\SampBQP$ with fully classical clients by adapting the approach taken in \cite{FOCS:Mahadev18a}. In \cite{FOCS:Mahadev18a}, 
Mahadev designed a protocol $\PiMeasure$ (\Cref{proto:urmila4}) that allows a $\BQP$ prover to ``commit a state" for a classical verifier to choose a $X$ or $Z$ measurement and obtain corresponding measurement results.
Composing it with the $\QPIP_1$ protocol for $\BQP$ from \cite{mf16} results in a $\QPIP_0$ protocol for $\BQP$.
In this work, we will compose $\PiMeasure$ with our $\QPIP_1$ protocol $\PiSamp$ (\Cref{ProtoQPIP1}) for $\SampBQP$ in order to obtain a $\QPIP_0$ protocol for $\SampBQP$. 

A direct composition of $\PiSamp$ and $\PiMeasure$, however, results in $\PiNaive$ (\Cref{proto:qpip0_naive}) which does not provide reasonable completeness or accuracy guarantees.
As we will see, this is due to $\PiMeasure$ itself having peculiar and weak guarantees:
the client doesn't always obtain measurement outcomes even if the server were honest.
When that happens under the $\BQP$ context, the verifier can simply accept the prover at the cost of some soundness error;
under our $\SampBQP$ context, however, we must run many copies of $\PiNaive$ in parallel so the verifier can generate its outputs from some copy.
We will spend the majority of this section analyzing the soundness of this parallel repetition.

\subsection{Mahadev's measurement protocol}\label{sec:urmila4}

$\PiMeasure$ is a 4-round protocol between a verifier (which corresponds to our client) and a prover (which corresponds to our server).
The verifier (secretly) chooses a string $h$ specifying the measurements he wants to make, and generates keys $pk, sk$ from $h$. It sends $pk$ to the prover. The prover ``commits" to a state $\rho$ of its choice using $pk$ and replies with its commitment $y$.
The verifier must then choose between two options: do a \emph{testing round} or a \emph{Hadamard round}.
In a testing round the verifier can catch cheating provers,
and in a Hadamard round the verifier receives some measurement outcome.
He sends his choice to the prover, and the prover replies accordingly. If the verifier chose testing round, he checks the prover's reply against the previous commitment, and rejects if he sees an inconsistency. If the verifier chose Hadamard round, he calculates $M_{XZ}(\rho, h)$ based on the reply.
We now formally describe the interface of $\PiMeasure$ while omitting the implementation details.

\begin{protocol}{Mahadev's measurement protocol $\PiMeasure=(\PMeasure, \VMeasure)$}
	\label{proto:urmila4}

	Inputs:
	\begin{itemize}
		\item Common input: Security parameter $1^\lambda$ where $\lambda\in\bbN$.
		\item Prover's input: a state $\rho\in\cB^{\otimes n}$ for the verifier to measure.
		\item Verifier's input: the measurement basis choice $h \in \zo^n$
	\end{itemize}

	Protocol:
	\begin{enumerate}
		\item \label{step:measure1} The verifier generates a public and secret key pair $(pk, sk)\leftarrow\cVMeasure{1}($ $1^\lambda, h)$. It sends $pk$ to the prover.
		\item \label{step:measure2} The prover generates $(y, \sigma)\leftarrow\cPMeasure{2}(pk, \rho)$.
			$y$ is a classical ``commitment", and $\sigma$ is some internal state.
			He sends $y$ to the verifier.
		\item \label{step:measure3} The verifier samples $c\xleftarrow{\$}\zo$ uniformly at random and sends it to the prover. $c=0$ indicates a \emph{testing round}, while $c=1$ indicates a \emph{Hadamard round}.
		\item \label{step:measure4} The prover generates a classical string $a\leftarrow\cPMeasure{4}(pk, c, \sigma)$ and sends it back to the verifier.
		\item \label{step:output} If it is a testing round ($c=0$), then the verifier generates and outputs $o\leftarrow\cVMeasure{T}(pk, y, a)$ where $o\in\set{\Acc, \Rej}$.
			If it is a Hadamard round ($c=1$), then the verifier generates and outputs $v\leftarrow\cVMeasure{H}($ $sk, h, y, a)$.
	\end{enumerate}
\end{protocol}

$\PiMeasure$ has negligible completeness errors, i.e. if both the prover and verifier are honest, the verifier accepts with overwhelming probability and his output on Hadamard round is computationally indistinguishable from $M_{XZ}(\rho, h)$. As for soundness,
it gives the following \emph{binding property} against cheating provers:
if a prover would always succeed on the testing round, then there exists some $\rho$ so that for any $h$ the verifier obtains $M_{XZ}(\rho, h)$ if he had chosen the Hadamard round.

\begin{lemma}[binding property of $\PiMeasure$; special case of Claim 7.1 in \cite{FOCS:Mahadev18a}]
	\label{lem:urmila-binding}
	Let $\PMeasureStar$ be a $\BQP$ cheating  prover for $\PiMeasure$ and $\lambda$ be the security parameter. Let $1-p_{h,T}$ be the  probability that the verifier accepts $\PMeasureStar$ in the testing round on basis choice $h$.\footnote{Compared to Claim 7.1 of \cite{FOCS:Mahadev18a}, we don't have a $p_{h,H}$ term here. This is because on rejecting a Hadamard round, the verifier can output a uniformly random string, and that is same as the result of measuring $h$ on the totally mixed state.} Under the QLWE assumption, there exists some $\rho^*$ so that for all verifier's input $h \in \zo^n$, the verifier's outputs on the Hadamard round is $\sqrt{p_{h,T}}+\negl(n)$-computationally indistinguishable from $M_{XZ}(\rho^*, h)$.
\end{lemma}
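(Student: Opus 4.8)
The statement is a specialization of Claim~7.1 of~\cite{FOCS:Mahadev18a} to the case where the verifier's single-qubit measurements are described by an arbitrary pattern $h \in \zo^n$, so the plan is to invoke that claim and then account for the purely cosmetic difference in how $\cVMeasure{H}$ is defined here. First I would unpack the internals of $\PiMeasure$ that its interface hides: the key $pk$ produced in \stepref{measure1} is, qubit by qubit, a trapdoor claw-free function key that is injective when $h_i = 1$ (a $Z$ measurement) and two-to-one when $h_i = 0$ (an $X$ measurement), and the commitment $y$ of \stepref{measure2} together with the testing/Hadamard openings of \stepref{measure4} form exactly Mahadev's gadget for committing to a qubit. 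Two consequences of the QLWE assumption are what the proof will use: (i) the injective and two-to-one key distributions are computationally indistinguishable, so the prover's commitment-phase behavior, and hence any state the analysis extracts from it, is computationally independent of $h$; and (ii) the commitment is collapsing and has the adaptive hardcore bit property.

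Next I would carry out the reduction from an arbitrary cheating prover to a perfect one, which is the heart of Mahadev's analysis. Fix $h$ and consider $\PMeasureStar$'s residual state after \stepref{measure2}; the event that the testing round is accepted is a measurement on (a purification of) this state with success probability $1 - p_{h,T}$, so by the gentle measurement lemma, conditioning on that outcome disturbs the state by $O(\sqrt{p_{h,T}})$ in trace distance and leaves a state on which the testing round passes with certainty. For such a state, Mahadev's argument --- using the collapsing and adaptive hardcore bit properties --- produces a single state $\rho^*$ on $\cB^{\otimes n}$, well defined uniformly in $h$ by (i), such that for every $h$ the Hadamard-round output is $\negl(n)$-computationally indistinguishable from $M_{XZ}(\rho^*, h)$. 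Pushing the $O(\sqrt{p_{h,T}})$ disturbance through the Hadamard-round decoding (a channel, hence nonexpansive in trace distance) then yields the bound $\sqrt{p_{h,T}} + \negl(n)$ for the original $\PMeasureStar$.

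Finally I would reconcile the single way our $\cVMeasure{H}$ differs from the verifier of~\cite{FOCS:Mahadev18a}, which is exactly the content of the footnote: on a syntactically malformed Hadamard-round answer our verifier outputs a uniformly random string in $\zo^n$ rather than rejecting. Since the uniform distribution on $\zo^n$ is precisely $M_{XZ}(2^{-n} I, h)$ for \emph{every} $h$ --- measuring the totally mixed state in any $X/Z$ pattern is uniform --- the branch that would contribute a $\sqrt{p_{h,H}}$ term in the full Claim~7.1 becomes a measurement of the totally mixed state, and by linearity of the channel $\rho \mapsto M_{XZ}(\rho,h)$ it can be absorbed into $\rho^*$ by replacing $\rho^*$ with the appropriate convex combination with $2^{-n} I$; that is why only the $\sqrt{p_{h,T}}$ term remains. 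The step I expect to be most delicate is making this absorption uniform in $h$: the malformed-answer probability is itself a function of $h$, so obtaining a \emph{single} $\rho^*$ valid for all $h$ requires invoking (i) once more to argue those branch weights agree across $h$ up to $\negl(n)$. None of this is conceptually new --- it is bookkeeping layered on Mahadev's machinery --- but it must be carried out with her claim quoted at the right generality and with care that the gentle-measurement disturbance composes correctly with the \emph{computational} (not statistical) closeness asserted in the conclusion.
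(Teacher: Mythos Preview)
The paper does not supply its own proof of this lemma: it is stated as a special case of Claim~7.1 in~\cite{FOCS:Mahadev18a} and used as a black box, with the footnote being the only explanation offered for why the $\sqrt{p_{h,H}}$ term of the original claim is absent. Your proposal therefore goes well beyond what the paper does --- it is a sketch of how one would actually derive the stated variant from Mahadev's claim, via gentle measurement for the $\sqrt{p_{h,T}}$ loss and the totally-mixed-state absorption for the Hadamard-reject branch. That sketch is in the right spirit, and you correctly flag the one nontrivial point the footnote glosses over: the Hadamard-round rejection probability is a priori $h$-dependent, so absorbing it into a \emph{single} $\rho^*$ requires an extra appeal to key indistinguishability. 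Since the paper offers nothing to compare against beyond the citation, there is no divergence of approach to report.
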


We now combine $\PiMeasure$ with our $\QPIP_1$ Protocol for $\SampBQP$, $\PiSamp=(\PSamp, \VSamp)$ (\Cref{ProtoQPIP1}), to get a corresponding $\QPIP_0$ protocol $\PiNaive$.
Recall that in $\PiSamp$ the verifier takes $X$ and $Z$ measurements on the prover's message.
In $\PiNaive$ we let the verifier use $\PiMeasure$ to learn those measurement outcomes instead.

\begin{protocol}{Intermediate $\QPIP_0$ protocol $\PiNaive$ for the $\SampBQP$ problem $(D_x)_{x\in\set{0, 1}^*}$}
	\label{proto:qpip0_naive}

	Inputs:
	\begin{itemize}
		\item Security parameter $1^\lambda$ where $\lambda\in\bbN$
		\item Error parameter $\eps\in(0, 1)$
		\item Classical input $x\in\zo^n$ to the $\SampBQP$ instance
	\end{itemize}

	Protocol:
	\begin{enumerate}
		\item \label{step:naive1} The verifier chooses a $XZ$-measurement $h$ from the distribution specified in \stepref{qpip1-verify} of $\PiSamp$.
		\item \label{step:naive2} The prover prepares $\rho$ by running \stepref{qpip1-state-gen} of $\PiSamp$.
		\item \label{step:urmila-in-naive}
			The verifier and prover run $(\PMeasure(\rho), \VMeasure(h))(1^\lambda)$.
			\begin{enumerate}
				\item The verifier samples $(pk, sk)\leftarrow\cVNaive{1}(1^\lambda, h)$ and sends $pk$ to the prover, where $\cVNaive{1}$ is the same as $\cVMeasure{1}$ of \Cref{proto:urmila4}. 
				\item The prover runs $(y, \sigma)\leftarrow\cPNaive{2}(pk, \rho)$ and sends $y$ to the verifier, where $\cPNaive{2}$ is the same as $\cPMeasure{2}$.
					Here we allow the prover to abort by sending $y=\bot$, which does not benefit cheating provers but simplifies our analysis of parallel repetition later.
				\item\label{step:c-urmila-in-naive} The verifier samples $c\xleftarrow{\$}\zo$ and sends it to the prover.
				\item The prover replies $a\leftarrow\cPNaive{4}(pk, c, \sigma)$.
				\item
					If it is a testing round, the verifier accepts or rejects based on the outcome of $\PiMeasure$.
					If it is a Hadamard round, the verifier obtains $v$.
			\end{enumerate}
		\item \label{step:naive-output} If it's a Hadamard round, the verifier finishes the verification step of Protocol~\ref{ProtoQPIP1} by generating and outputting $(d, z)$
	\end{enumerate}
\end{protocol}

There are several problems with using $\PiNaive$ as a $\SampBQP$ protocol. First, since the verifier doesn't get a sample if he had chosen the testing round in Step~\ref{step:c-urmila-in-naive}, the protocol has completeness error at least $1/2$. Moreover, since $\PiMeasure$ does not check anything on the Hadamard round, a cheating prover can give up passing the testing round and breaks the commitment on the Hadamard round, with only a constant $1/2$ probability of being caught.
However, we can show that $\PiNaive$ has a binding property similar to $\PiMeasure$:
if a cheating prover $\PNaiveStar$ passes the testing round with overwhelming probability whenever it doesn't abort on the second message,
then the corresponding output $(d, z)\leftarrow(\PNaiveStar, \VNaive)$ is close to $(d, z_{ideal})$.
Recall the ideal output is
$$\begin{cases}
	z_{ideal}=\bot & \text{if } d=\Rej\\
	z_{ideal}\leftarrow D_x & \text{if } d=\Acc.
\end{cases}$$
This binding property is formalized in \Cref{lem:naive-qpip0-binding}.
Intuitively,  the proof of \Cref{lem:naive-qpip0-binding}  combines the binding property of \Cref{proto:qpip0_naive} (\Cref{lem:urmila-binding}) and $\PiSamp$'s soundness (\Cref{QPIP1thm}). There is a technical issue that \Cref{proto:qpip0_naive} allows the prover to abort while \Cref{proto:urmila4} does not. This issue is solved by constructing another $\BQP$ prover $\Pstar$ for every cheating prover $\PNaiveStar$. 
Specifically, $\Pstar$ uses $\PNaiveStar$'s strategy when it doesn't abort, otherwise honestly chooses the totally mixed state for the verifier to measure.

\begin{theorem}[binding property of $\PiNaive$]
	\label{lem:naive-qpip0-binding}
	Let $\PNaiveStar$ be a cheating $\BQP$ prover for $\PiNaive$ and $\lambda$ be the security parameter.
	Suppose that $\Prob{d=\Acc\mid y\ne\bot, c=0}$ is overwhelming, 
	under the QLWE assumption, then the verifier's output in the Hadamard round is $O(\eps)$-computationally indistinguishable from $(d, z_{ideal})$.
\end{theorem}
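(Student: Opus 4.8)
The plan is to combine the binding property of the measurement protocol (\Cref{lem:urmila-binding}) with the statistical soundness of $\PiSamp$ (\Cref{QPIP1thm}) through a short hybrid argument, after first disposing of the prover's aborting behavior.

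\textbf{Step 1: removing aborts.} Given $\PNaiveStar$, I would define a cheating prover $\Pstar$ for $\PiMeasure$ that behaves as $\PNaiveStar$ on the first two messages, except that whenever $\PNaiveStar$ would send $y=\bot$ it instead follows the honest prover $\PMeasure$ on the maximally mixed state and answers the challenge $c$ honestly on that branch. Then $\Pstar$ never aborts, and for \emph{every} basis choice $h$ the probability that $\VMeasure$ rejects $\Pstar$ in the testing round is negligible — on the non-aborting branch by the hypothesis that $\Prob{d=\Acc\mid y\ne\bot, c=0}$ is overwhelming, and on the aborting branch by the negligible completeness error of $\PiMeasure$. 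I would also check that the Hadamard-round output $(d,z)$ of $(\PNaiveStar, \VNaive)$ equals the result of applying $\VSamp$'s verification post-processing to the Hadamard-round output of $(\Pstar, \VMeasure)$: the two executions are literally identical on the non-aborting branch, and on the aborting branch both $(d,z)$ and its idealization collapse to $(\Rej, \bot)$ once $\VNaive$ is taken to reject on $y=\bot$ (equivalently, one may let $\VNaive$ treat $y=\bot$ as a commitment to the maximally mixed state, which is exactly what $\Pstar$ supplies).

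\textbf{Step 2: applying the two soundness guarantees.} Next I would apply \Cref{lem:urmila-binding} to $\Pstar$ with $p_{h,T}=\negl(\lambda)$ for all $h$, obtaining a single state $\rho^*$, independent of $h$, such that for every $h$ the verifier's Hadamard output against $\Pstar$ is $\negl(\lambda)$-computationally indistinguishable from $M_{XZ}(\rho^*, h)$. Since $\VSamp$'s post-processing is a $\BPP$ (hence $\BQP$-implementable) channel, computational indistinguishability survives it, so the Hadamard output $(d,z)$ of $(\PNaiveStar, \VNaive)$ is $\negl(\lambda)$-computationally indistinguishable from the pair $(d',z')$ obtained by running $\VSamp$'s post-processing on $M_{XZ}(\rho^*, h)$, with $h$ drawn from the prescribed distribution. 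But $(d',z')$ is precisely the output of $\VSamp$ against the cheating $\QPIP_1$ prover for $\PiSamp$ that simply prepares $\rho^*$, so \Cref{QPIP1thm} (statistical soundness, per \Cref{dfn:stats-secure-proto-sampbqp}) gives $\norm{(d',z') - (d', z'_{ideal})}_{\mathrm{TV}} \le \eps$, where $z'_{ideal}=\bot$ if $d'=\Rej$ and $z'_{ideal}\leftarrow D_x$ if $d'=\Acc$.

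\textbf{Step 3: matching the ideal references, and the main obstacle.} The remaining — and, I expect, most delicate — point is that the ideal reference must be $(d, z_{ideal})$, the idealization of $\PNaiveStar$'s \emph{own} output, not of $(d',z')$. Here I would use that the idealization map $\mathsf{Ideal}:(d,z)\mapsto (d,z_{ideal})$ depends only on its first coordinate (it outputs $(\Rej,\bot)$ if $d=\Rej$ and $(\Acc, w)$ with fresh $w\leftarrow D_x$ if $d=\Acc$), so $(d,z_{ideal})=\mathsf{Ideal}(d)$ and $(d',z'_{ideal})=\mathsf{Ideal}(d')$ for one fixed (possibly inefficient) channel $\mathsf{Ideal}$. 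Since $d$ and $d'$ are single bits whose distributions are $\negl(\lambda)$-computationally indistinguishable — and therefore $\negl(\lambda)$-statistically close, a distinguisher being free to just echo the bit — the data-processing inequality for total variation distance yields $\norm{\mathsf{Ideal}(d)-\mathsf{Ideal}(d')}_{\mathrm{TV}}\le\negl(\lambda)$. Chaining the three estimates shows $(d,z)$ is $(\negl(\lambda)+\eps+\negl(\lambda))=O(\eps)$-computationally indistinguishable from $(d,z_{ideal})$, using $\eps=1/\poly(\lambda)\gg\negl(\lambda)$. The crux is thus the bookkeeping around aborts (Step 1) and around the $d$-dependent ideal distribution (Step 3): because $z_{ideal}$ is re-derived from the decision bit at the very end rather than being fixed in advance, one cannot treat it as a static reference in the hybrid and must instead pass it through the inefficient channel $\mathsf{Ideal}$, while the body of the hybrid cleanly concatenates one computational step (\Cref{lem:urmila-binding}) with one statistical step (\Cref{QPIP1thm}) thanks to the efficiency of $\VSamp$.
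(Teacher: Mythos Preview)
Your overall strategy matches the paper's: replace aborts by an honest commitment to the maximally mixed state, invoke \Cref{lem:urmila-binding} to obtain a fixed $\rho^*$, feed $\rho^*$ into the soundness of $\PiSamp$ (\Cref{QPIP1thm}), and finally reconcile the two ideal references via the decision bit. Step~3 in particular is handled the same way in the paper.

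There is, however, a genuine gap in Steps~1--2. You claim that ``for \emph{every} basis choice $h$ the probability that $\VMeasure$ rejects $\Pstar$ in the testing round is negligible,'' and then invoke \Cref{lem:urmila-binding} with $p_{h,T}=\negl(\lambda)$ for all $h$. But the hypothesis $\Prob{d=\Acc\mid y\ne\bot,c=0}=1-\negl(\lambda)$ is only an \emph{average} over the random basis choice $h$ drawn from the distribution $p_x$ specified in Step~\ref{step:naive1} of $\PiNaive$: a particular $h$ of small weight under $p_x$ may carry constant rejection probability without violating the hypothesis. The paper keeps the average $\sum_h p_x(h)\,p_{h,T}=\negl(\lambda)$ and applies Cauchy's (equivalently Jensen's) inequality to obtain
\[
\sum_h p_x(h)\sqrt{p_{h,T}}\ \le\ \Bigl(\sum_h p_x(h)\,p_{h,T}\Bigr)^{1/2}=\negl(\lambda),
\]
which is precisely the averaged form of the bound from \Cref{lem:urmila-binding} needed after post-processing by $\VSamp$. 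With this correction your argument goes through. The bookkeeping on the abort branch also requires one more fact you do not state: the dummy strategy is rejected with overwhelming probability in the \emph{Hadamard} round as well (the paper notes this explicitly), so that on aborts the output of $(\Pstar,\VNaive)$ is only negligibly close to $(\Rej,\bot)=(d,z)$, not literally equal to it as your Step~1 suggests.
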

\begin{proof}[\Cref{lem:naive-qpip0-binding}]
	We first introduce the \emph{dummy strategy} for $\PiMeasure$, where the prover chooses $\rho$ as the maximally mixed state and executes the rest of the protocol honestly.
	It is straightforward to verify that this prover would be accepted in the testing round with probability $1-\negl(\lambda)$,
	but has negligible probability passing the verification  after the Hadamard round.

	Now we construct a cheating $\BQP$ prover for \Cref{proto:qpip0_naive}, $\Pstar$, that does the same thing as $\PNaiveStar$ except at Step~\ref{step:urmila-in-naive}, where the prover and verifier runs \Cref{proto:urmila4}. $\Pstar$ does the following in Step~\ref{step:urmila-in-naive}:
	for the second message, run $(y, \sigma)\leftarrow\cPNaiveStar{2}(pk, \rho)$.
	If $y\ne\bot$, then reply $y$;
	else, run the corresponding step of the dummy strategy and reply with its results.
	For the fourth message, if $y\ne\bot$, run and reply with $a\leftarrow\cPNaiveStar{4}(pk, c, \sigma)$;
	else, continue the dummy strategy.

	 In the following we fix an $x$. Let the distribution on $h$ specified in Step~\ref{step:naive1} of the protocol be $p_x(h)$. Define $\Pstarsub(x)$ as $\Pstar$'s response in Step~\ref{step:urmila-in-naive}. Note that we can view $\Pstarsub(x)$ as a prover strategy for \Cref{proto:urmila4}. By construction $\Pstarsub(x)$ passes testing round with overwhelming probability over $p_x(h)$, i.e. $\sum_h p_x(h) p_{h,T} =\negl(\lambda)$, where $p_{h,T}$ is $\Pstar$'s probability of getting accepted by the prover on the testing round on basis choice $h$. By \Cref{lem:urmila-binding} and Cauchy's inequality, there exists some $\rho$ such that  $\sum_h p_x(h) \norm{v_h -M_{XZ}(\rho, h)}_c = \negl(\lambda)$, where we use $\norm{A-B}_c=\alpha$ to denote that $A$ is $\alpha$-computational indistinguishable to $B$. Therefore $v= \sum_h p_x(h) v_h$ is computationally indistinguishable to $\sum_h p_x(h) M_{XZ}(\rho, h)$. Combining it with $\PiSamp$'s soundness (\Cref{QPIP1thm}), 
	we see that $(d', z')\leftarrow(\Pstar, \VNaive)(1^\lambda, 1^{1/\epsilon}, x)$  is $\eps$-computationally indistinguishable to $(d', z_{ideal}')$.

	Now we relate $(d', z')$ back to $(d, z)$.
	First, conditioned on that $\PNaiveStar$ aborts, since dummy strategy will be rejected with overwhelming probability in Hadamard round,
	we have $(d', z')$ is computationally indistinguishable to $(\Rej, \bot)=(d, z)$.
	On the other hand, conditioned on $\PNaiveStar$ not aborting, clearly $(d, z)=(d', z')$.
	So $(d, z)$ is computationally indistinguishable to $(d', z')$,
	which in turn is $O(\eps)$-computationally indistinguishable to $(d', z_{ideal}')$.
	Since $\norm{d-d'}_{tr}= O(\eps)$,
	 $(d, z_{ideal})$ is $O(\eps)$-computationally indistinguishable to $(d', z_{ideal}')$.
	Combining everything, we conclude that $(d, z)$ is $O(\eps)$-computationally indistinguishable to $(d, z_{ideal})$.
\end{proof}

\subsection{$\QPIP_0$ protocol for $\SampBQP$} \label{sec:qpip0}

We now introduce our $\QPIP_0$ protocol $\PiSampZ$ for $\SampBQP$.
It is essentially a $m$-fold parallel repetition of $\PiNaive$,
from which we uniformly randomly pick one copy to run Hadamard round to get our samples and run testing round on all other $m-1$ copies.
Intuitively, if the server wants to cheat by sending something not binding on some copy,
he will be caught when that copy is a testing round, which is with probability $1-1/m$.
This over-simplified analysis does not take into account that the server might create entanglement between the copies. Therefore, a more technically involved analysis is required.

In the description of our protocol below, we describe $\PiNaive$ and $\PiMeasure$ in details in order to introduce notations that we need in our analysis.

\begin{protocol}{$\QPIP_0$ protocol $\PiSampZ$ for the $\SampBQP$ problem $(D_x)_{x\in\set{0, 1}^*}$}
	\label{proto:QPIP0samp}

	Inputs:
	\begin{itemize}
		\item Security parameter $1^\lambda$ for $\lambda\in\bbN$.
		\item Accuracy parameter $1^{1/\eps}$ for the $\SampBQP$ problem.
		\item Input $x\in\zo^{\poly(\lambda)}$ for the $\SampBQP$ instance.
	\end{itemize}

	Ingredient: Let $m=O(1/\eps^2)$ be the number of parallel repetitions to run.
\bigskip 

	Protocol:
	\begin{enumerate}
		\item  The verifier  generates $m$ independent copies of basis choices $\vec{h}=(h_1,\ldots,h_m)$, where each copy is generated as in \stepref{naive1} of $\PiNaive$.
		\item The prover prepares $\rho^{\otimes m}$; each copy of $\rho$ is prepared as in \stepref{naive2} of $\PiNaive$.
		\item \label{step:urmila-in-qpip0-1} The verifier generates $m$ key pairs for $\PiMeasure$, $\vec{pk}=(pk_1,\ldots,pk_m)$ and $\vec{sk}=(sk_1,\ldots,sk_m)$, as in \stepref{measure1} of $\PiMeasure$.
			It sends $\vec{pk}$ to the prover.
		\item \label{step:urmila-in-qpip0-2}The prover generates $\vec{y}=(y_1,\ldots,y_m)$ and $\sigma$ as in \stepref{measure2} of $\PiMeasure$.
			It sends $\vec{y}$ to the verifier.
		\item \label{step:urmila-in-qpip0-3}The verifier samples $r\xleftarrow{\$}[m]$ which is the copy to run Hadamard round for.
			For $1\leq i\leq m$, if $i\ne r$ then set $c_i\leftarrow 0$, else set $c_i\leftarrow 1$.
			It sends $\vec{c}=(c_1,\ldots,c_m)$ to the prover.
		\item \label{step:urmila-in-qpip0-4}The prover generates $\vec{a}$ as in \stepref{measure4} of $\PiMeasure$, and sends it back to the verifier.
		\item The verifier computes the outcome for each round as in \stepref{naive-output} of $\PiNaive$.
			If any of the testing round copies are rejected, the verifier outputs $(\Rej, \bot)$.
			Else, it outputs the result from the Hadamard round copy.
	\end{enumerate}
\end{protocol}
By inspection, $\PiSampZ$ is a $\QPIP_0$ protocol for $\SampBQP$ with negligible completeness error.
To show that it is computationally sound, we first use the partition lemma from \cite{arXiv:ChiaChungYam19}.

Intuitively, the partition lemma says that for any cheating prover and for each copy $i\in[m]$, there exist two efficient ``projectors" \footnote{Actually they are not projectors, but for the simplicity of this discussion let's assume they are.} $G_{0,i}$ and $G_{1,i}$ in the prover's internal space with $G_{0,i}+G_{1,i} \approx Id$. $G_{0,i}$ and $G_{1,i}$ splits up the prover's residual internal state after sending back his first message.
$G_{0,i}$ intuitively represents the subspace where the prover does not knows the answer to the testing round on the $i$-th copy, while $G_{1,i}$ represents the subspace where the prover does. Note that the prover is using a single internal space for all copies, and every $G_{0,i}$ and every $G_{1,i}$ is acting on this single internal space. 
By using this partition lemma iteratively, we can decompose the prover's internal state $\ket{\psi}$ into sum of subnormalized states.
First we apply it to the first copy, writing $\ket{\psi}=G_{0,1}\ket{\psi}+G_{1,1}\ket{\psi} \equiv \ket{\psi_0}+\ket{\psi_1}$.
The component $\ket{\psi_0}$ would then get rejected as long as the first copy is chosen as a testing round,
which occurs with pretty high probability.
More precisely, the output corresponding to $\ket{\psi_0}$ is $1/m$-close to the ideal distribution that just rejects all the time.
On the other hand, $\ket{\psi_1}$ is now binding on the first copy;
we now similarly apply the partition lemma of the second copy to $\ket{\psi_1}$.
We write $\ket{\psi_1}=G_{0,2}\ket{\psi_1}+G_{1,2}\ket{\psi_1}\equiv \ket{\psi_{10}}+\ket{\psi_{11}}$, and apply the same argument about $\ket{\psi_{10}}$ and $\ket{\psi_{11}}$.
We then continue to decompose $\ket{\psi_{11}}=\ket{\psi_{110}}+\ket{\psi_{111}}$ and so on, until we reach the last copy and obtain $\ket{\psi_{1^m}}$.
Intuitively, the $\ket{\psi_{1^m}}$ term represents the ``good" component where the prover knows the answer to every testing round and therefore has high accept probability. Therefore, $\ket{\psi_{1^m}}$ also satisfies some binding property,
so the verifier should obtain a measurement result of some state on the Hadamard round copy,
and the analysis from the $\QPIP_1$ protocol $\PiSamp$ follows.

However, the intuition that $\ket{\psi_{1^m}}$ is binding to every Hadamard round is incorrect. As $G_{1,i}$ does not commute with $G_{1,j}$, $\ket{\psi_{1^m}}$ is unfortunately only binding for the $m$-th copy.
To solve this problem, we start with a pointwise argument and fix the Hadamard round on the $i$-th copy where $\ket{\psi_{1^i}}$ is binding,
and show that the corresponding output is $O(\norm{\ket{\psi_{1^{i-1}0}}})$-close to ideal.
We can later average out this error over the different choices of $i$, since not all $\norm{\ket{\psi_{1^{i-1}0}}}$ can be large at the same time. Another way to see this issue is to notice that we are partitioning a quantum state, not probability events, so there are some inconsistencies between our intuition and calculation. Indeed, the error we get in the end is $O(\sqrt{1/m})$ instead of the $O(1/m)$ we expected.

Also a careful reader might have noticed that the prover's space don't always decompose cleanly into parts that the verifier either rejects or accepts with high probability, as there might be some states that is accepted with mediocre probability. As in \cite{arXiv:ChiaChungYam19}, we solve this by splitting the space into parts that are accepted with probability higher or lower than a small threshold $\gamma$ and applying Marriott-Watrous~\cite{marriott2005quantum} amplification to boost the accept probability if it is bigger than $\gamma$, getting a corresponding amplified prover action $\ext$. However, states with accept probability really close to the threshold $\gamma$ can not be classified, so we average over randomly chosen $\gamma$ to have $G_{0,i}+G_{1,i} \approx Id$. Now we give a formal description of the partition lemma.

\begin{lemma}[partition lemma; revision of Lemma 3.5 of \cite{arXiv:ChiaChungYam19}\footnote{$G_{0}$ and $G_{1}$ of this version are created from doing $G$ of \cite{arXiv:ChiaChungYam19} and post-selecting on the $ph,th,in$ register being $0^t01$ or $0^t11$ then discard $ph,th,in$. Property~\ref{property:partition-err} corresponds to Property~1. Property~\ref{property:partition-testing} corresponds to Property~4, with $2^{m-1}$ changes to $m-1$ because we only have $m$ possible choices of $\vec{c}$. Property~\ref{property:partition-binding} corresponds to Property~5. Property~\ref{property-partition-norm-sum} comes from the fact that $G_0$ and $G_1$ are post-selections of orthogonal results of the same $G$.}]\label{lem:partition2}
	Let $\lambda$ be the security parameter, and $\gamma_0 \in[0,1]$ and $T\in \mathbb{N}$ be parameters that will be related to the randomly-chosen threshold $\gamma$.
	Let $(U_0,U)$ be a prover's strategy in a $m$-fold parallel repetition of $\PiMeasure$\footnote{A $m$-fold parallel repetition of $\PiMeasure$ is running step~\ref{step:urmila-in-qpip0-1}~\ref{step:urmila-in-qpip0-2}~\ref{step:urmila-in-qpip0-3}~\ref{step:urmila-in-qpip0-4} of \Cref{proto:QPIP0samp} with verifier input $\vec{h}$ and prover input $\rho^{\otimes n}$, followed by an output step where the verifier rejects if any of the $m-1$ testing round copies is rejected, otherwise outputs the result of the Hadamard round copy.}, where $U_0$ is how the prover generates $\vec{y}$ on the second message, and $U$ is how the prover generates $\vec{a}$ on the fourth message. Let $H_{\regX,\regZ}$ be the Hilbert space of the prover's internal calculation.
	Denote the string $0^{i-1}10^{m-i} \in \zo^m $ as $e_i$, which corresponds to doing Hadamard round on the $i$-th copy and testing round on all others.

	For all $i\in[m]$, $\gamma \in \L\{\frac{\gamma_0}{T},\frac{2\gamma_0}{T},\dots,\frac{T\gamma_0}{T}\R\}$, there exist two $\poly(1/\gamma_0,T,\lambda)$-time quantum circuit with post selection\footnote{A quantum circuit with post selection is composed of unitary gates followed by a post selection on some measurement outcome on ancilla qubits, so it produces a subnormalized state, where the amplitude square of the output state is the probability of post selection.} $G_{0,i,\gamma}$ and $G_{1,i,\gamma}$ such that for all (possibly sub-normalized)  quantum states $\ket{\psi}_{\regX,\regZ}\in  H_{\regX,\regZ}$,  properties \ref{property:partition-err}~\ref{property:partition-testing}~\ref{property:partition-binding}~\ref{property-partition-norm-sum}, to be described later, are satisfied. Before we describe the properties, we introduce the following notations:  

	\begin{align}
	\label{eq:psi0}
	\ket{\psi_{0,i,\gamma}}_{\regX,\regZ}
	\defeq&
	G_{0,i,\gamma}\ket{\psi}_{\regX,\regZ}, \\ 
	\label{eq:psi1}	
	\ket{\psi_{1,i,\gamma}}_{\regX,\regZ}
	 \defeq&
	G_{1,i,\gamma}\ket{\psi}_{\regX,\regZ}
	,  \\
	\label{eq:psierr}
	\ket{\psi_{err,i,\gamma}}_{\regX,\regZ}
	\defeq&
	\ket{\psi}_{\regX,\regZ} -\ket{\psi_{0,i,\gamma}}_{\regX,\regZ}- \ket{\psi_{1,i,\gamma}}_{\regX,\regZ}
	.
	\end{align}

	Note that $G_{0,i,\gamma}$ and $G_{1,i,\gamma}$ has failure probabilities, and this is reflected by the fact that $\ket{\psi_{0,i,\gamma}}_{\regX,\regZ}$ and $\ket{\psi_{1,i,\gamma}}_{\regX,\regZ}$ are  sub-normalized. $G_{0,i,\gamma}$ and $G_{1,i,\gamma}$ depend on $(U_0,U)$ and $\vec{pk},\vec{y}$.

	The following properties are satisfied for all $i\in[m]$:
	\begin{enumerate}
		\item \label{property:partition-err}  $$\E_{\gamma}\|\ket{\psi_{err,i,\gamma}}_{\regX,\regZ}\|^2 \leq \frac{6}{T}+\negl(\lambda),$$

			where the averaged is over uniformly sampled $\gamma$. This also implies
			\begin{align}
				\E_{\gamma}\|\ket{\psi_{err,i,\gamma}}_{\regX,\regZ}\| \leq \sqrt{\frac{6}{T}}+\negl(\lambda)
			\end{align}
			by Cauchy's inequality.

		\item \label{property:partition-testing}
			For all $\vec{pk}$, $\vec{y}$, $\gamma$, and  $j\neq i$, we have
			\begin{align}
				\norm{ P_{acc, i} \circ U\frac{\ket{e_j}_{\regC}\ket{\psi_{0,i,\gamma}}_{\regX,\regZ}}{\|\ket{\psi_{0,i,\gamma}}_{\regX,\regZ}\|}}^2 \leq (m-1)\gamma_0+\negl(\lambda),
			\end{align}
			where $P_{acc, i}$ are projector to the states that $i$-th testing round accepts with $pk_i,y_i$, including the last measurement the prover did before sending $\vec{a}$.  This means that $\ket{\psi_{0,i,\gamma}}$ is rejected by the $i$-th testing round with high probability.
		\item \label{property:partition-binding}
			For all $\vec{pk}$, $\vec{y}$, $\gamma$, and $j\neq i$, there exists an efficient quantum algorithm $\ext_i$ such that
			\begin{align}
				\norm{P_{acc, i} \circ \ext_i\left(\frac{\ket{e_j}_{\regC}\ket{\psi_{1,i,\gamma}}_{\regX,\regZ}}{\|\ket{\psi_{1,i,\gamma}}_{\regX,\regZ}\|}\right)}^2 =1-\negl(\lambda).
			\end{align}

			This will imply that $\ket{\psi_{1,i,\gamma}}$ is binding to the $i$-th Hadamard round.

		\item \label{property-partition-norm-sum}
			For all $\gamma$,
			\begin{align}
				\norm{\ket{\psi_{0,i,\gamma}}_{\regX,\regZ}}^2+ \norm{\ket{\psi_{1,i,\gamma}}_{\regX,\regZ}}^2 \leq  \norm{\ket{\psi}_{\regX,\regZ}}^2.
			\end{align}
	\end{enumerate}
\end{lemma}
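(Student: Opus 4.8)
The plan is to derive \Cref{lem:partition2} as a repackaging of Lemma~3.5 of \cite{arXiv:ChiaChungYam19}, instantiated for the $m$-fold parallel repetition of $\PiMeasure$ (\Cref{proto:urmila4}) that sits inside \Cref{proto:QPIP0samp}. Applied to the prover strategy $(U_0,U)$, that lemma produces, for each copy $i\in[m]$ and each threshold value $\gamma$, a single efficient trace-non-increasing map ``$G$'' on $H_{\regX,\regZ}$ together with three flag registers $ph, th, in$ recording the internal bookkeeping of a Marriott--Watrous-style amplification~\cite{marriott2005quantum} (a phase-estimation register $ph$, a threshold bit $th$ indicating whether the $i$-th testing-acceptance probability is estimated above or below $\gamma$, and an indicator bit $in$). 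The first step is simply to \emph{define} $G_{0,i,\gamma}$ and $G_{1,i,\gamma}$ as the sub-normalized maps obtained by running $G$, post-selecting the flags on $(ph,th,in)=(0^{t},0,1)$ and $(0^{t},1,1)$ respectively, and discarding $ph,th,in$. Because these are two mutually exclusive post-selection outcomes of the \emph{same} channel $G$, the squared norms of the two branches sum to at most $\norm{\ket{\psi}_{\regX,\regZ}}^2$, which is exactly Property~\ref{property-partition-norm-sum}; and $G_{0,i,\gamma},G_{1,i,\gamma}$ manifestly depend only on $(U_0,U)$ and the transcript prefix $\vec{pk},\vec y$, as claimed. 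The residual $\ket{\psi_{err,i,\gamma}}$ of \eqref{eq:psierr} is then the vector sum of all the remaining post-selection branches of $G$.

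The remaining three properties are transcriptions of the corresponding properties of \cite{arXiv:ChiaChungYam19}. Property~\ref{property:partition-err} is their Property~1: a prover state whose true $i$-th testing-acceptance probability lies within $O(\gamma_0/T)$ of the current threshold cannot be cleanly sorted into the $G_0$- or $G_1$-branch, but for a threshold chosen uniformly from $\{\gamma_0/T,\dots,T\gamma_0/T\}$ only an $O(1/T)$ fraction of the relevant mass falls in such a band, giving $\E_{\gamma}\norm{\ket{\psi_{err,i,\gamma}}}^2\le 6/T+\negl(\lambda)$; the expected-norm form follows by Cauchy's inequality. Property~\ref{property:partition-testing} is their Property~4: on the $G_0$-branch the binarized testing outcome $th$ is $0$, so after undoing the amplification the genuine $i$-th testing round rejects except with negligible probability. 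The one change is that \cite{arXiv:ChiaChungYam19} quantify over all $2^{m-1}$ challenge vectors that set copy $i$ to a testing round, whereas in \Cref{proto:QPIP0samp} the verifier only ever sends $\vec c=e_j$ for some $j\in[m]$; hence only the $m-1$ choices $j\ne i$ are relevant, the union bound loses a factor $m-1$ rather than $2^{m-1}$, and we obtain $(m-1)\gamma_0+\negl(\lambda)$. Property~\ref{property:partition-binding} is their Property~5: on the $G_1$-branch the amplified testing measurement accepts, and the amplification routine doubles as a $\poly(1/\gamma_0,T,\lambda)$-time extractor $\ext_i$ which, run on $\ket{e_j}_{\regC}\ket{\psi_{1,i,\gamma}}$ for any $j\ne i$ (all of which assign copy $i$ a testing round), makes the honest $i$-th testing check $P_{acc,i}$ --- with ``the last measurement the prover did before sending $\vec a$'' folded in exactly as in their statement --- accept with probability $1-\negl(\lambda)$.

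I expect the only real work, and the main obstacle, to be verifying that the abstract ``$m$-fold parallel repetition of a Mahadev-style measurement protocol'' analyzed in \cite{arXiv:ChiaChungYam19} instantiates faithfully to the object inside \Cref{proto:QPIP0samp}: that each copy is literally $\PiMeasure$ of \Cref{proto:urmila4}, that the verifier's final predicate (reject if \emph{any} testing copy rejects, else output the Hadamard copy) matches their output step, and that allowing the prover to abort with $y_i=\bot$ (permitted in \Cref{proto:qpip0_naive}) is harmless --- an aborting copy can be treated as a testing round that rejects with certainty, which only strengthens Property~\ref{property:partition-testing}. I would also re-check the efficiency accounting: $G_{0,i,\gamma}$, $G_{1,i,\gamma}$, and $\ext_i$ all run the Marriott--Watrous procedure with $O(T/\gamma_0)$ invocations of the $\poly(\lambda)$-time maps $(U_0,U)$ and are therefore $\poly(1/\gamma_0,T,\lambda)$-time. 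Beyond this bookkeeping there is nothing genuinely new to prove; restricting to challenge vectors with a single Hadamard round only makes the argument of \cite{arXiv:ChiaChungYam19} easier.
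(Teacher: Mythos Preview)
Your proposal is correct and follows essentially the same approach as the paper: the paper's entire ``proof'' is the footnote attached to the lemma statement, which says exactly what you say --- define $G_{0,i,\gamma}$ and $G_{1,i,\gamma}$ by post-selecting the $G$ of \cite{arXiv:ChiaChungYam19} on $(ph,th,in)=0^t01$ or $0^t11$ and discarding the flags, then read off Property~\ref{property:partition-err} from their Property~1, Property~\ref{property:partition-testing} from their Property~4 (with the $2^{m-1}\to m-1$ replacement because only the $m$ challenge vectors $e_j$ are used), Property~\ref{property:partition-binding} from their Property~5, and Property~\ref{property-partition-norm-sum} from the orthogonality of the two post-selection branches. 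Your write-up simply fleshes out this footnote with the Marriott--Watrous interpretation and the instantiation/efficiency checks, none of which the paper spells out.
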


Note that in property~\ref{property:partition-binding}, we are using $\ext_i$ instead of $U$ because we use amplitude amplification to boost the success probability. 

We now decompose the prover's internal state by using \Cref{lem:partition2} iteratively.
Let $\ket{\psi}$ be the state the prover holds before he receives $\vec{c}$;
we denote the corresponding Hilbert space as $H_{\regX,\regZ}$.
For all $k \in [m]$, $d\in \zo^k$, $\gamma=(\gamma_1, \ldots, \gamma_k)$ where each $\gamma_j\in\set{\frac{\gamma_0}{T},\frac{2\gamma_0}{T},\dots,\frac{T\gamma_0}{T}}$,  
and $\ket{\psi} \in H_{\regX,\regZ}$, define $$\ket{\psi_{d,\gamma}}\defeq G_{d_k,k,\gamma_k}\ldots G_{d_2,2,\gamma_2} G_{d_1,1,\gamma_1} \ket{\psi}.$$
For all $i\in[m]$, we then decompose $\ket{\psi}$ into
\begin{equation}
	\label{eq:partition-string}
	\ket{\psi}=\sum_{j=0}^{i-1} \ket{\psi_{1^j0,\gamma}} +\ket{\psi_{1^i,\gamma}} +\sum_{j=1}^{i}\ket{\psi_{err,j,\gamma}}
\end{equation}
by using \Cref{eq:psi0,eq:psi1,eq:psierr} repeatedly,  where $\ket{\psi_{err,i,\gamma}}$ denotes the error state from decomposing $\ket{\psi_{1^{i-1},\gamma}}$.

We denote the projector in $H_{\regX,\regZ}$ corresponding to outputting string $z$ when doing Hadamard on $i$-th copy as $P_{acc,-i,z}$.
Note that $P_{acc,-i,z}$ also depends on $\vec{pk}, \vec{y}$, and $(sk_i, h_i)$ since it includes the measurement the prover did before sending $\vec{a}$,  verifier's checking on $(m-1)$ copies of testing rounds, and  the verifier's final computation from $(sk_i,h_i,y_i,a_i)$. $P_{acc,-i,z}$ is a projector because it only involves the standard basis measurements to get $a$ and classical post-processing of the verifiers. Also note that  $P_{acc,-i,z} P_{acc,-i,z'}=0$ for all $z\neq z'$, and $\sum_z P_{acc,-i,z} =\Pi_{j \neq i} P_{acc,j}\leq Id$.

We denote the string $0^{i-1}10^{m-i} \in \zo^m$ as $e_i$. The output string corresponding to $\ket{\psi} \in H_{\regX,\regZ}$ when $c=e_i$ is then
\begin{equation}
	\label{eq:zi-def}
	z_i\defeq \E_{pk,y} \sum_z \norm{P_{acc,-i,z} U\ket{e_i,\psi}}^2\proj{z},
\end{equation}
 where $\ket{e_i,\psi}=\ket{e_i}_\regC\ket{\psi}_{\regX,\regZ}$ and $U$ is the unitary the prover applies on the last round.
Note that we have averaged over $\vec{pk}, \vec{y}$ where as previously everything has fixed $\vec{pk}$ and $\vec{y}$.

By Property~\ref{property:partition-testing} of \Cref{lem:partition2},
it clearly follows that 
\begin{cor}
	\label{lem:partition-testing}
	For all $\gamma\in\set{\frac{\gamma_0}{T},\frac{2\gamma_0}{T},\dots,\frac{T\gamma_0}{T}}$, and all $i,j\in[m]$ such that $j<i-1$, we have
	$$\norm{\sum_z P_{acc,-i,z} U \ket{e_i, \psi_{1^j0,\gamma}}}^2\leq (m-1)\gamma_0+\negl(n).$$
\end{cor}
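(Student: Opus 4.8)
The plan is to obtain \Cref{lem:partition-testing} as an essentially immediate consequence of Property~\ref{property:partition-testing} of \Cref{lem:partition2}, after a small amount of index bookkeeping. The two ingredients are: (a) identifying $\ket{\psi_{1^j0,\gamma}}$ with the ``$\psi_0$'' component that the partition lemma produces on copy $j{+}1$, and (b) dominating the product projector $\sum_z P_{acc,-i,z}$ by the single-copy testing projector $P_{acc,j+1}$.

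First I would record the identity $\ket{\psi_{1^j0,\gamma}} = G_{0,j+1,\gamma_{j+1}}\,\ket{\psi_{1^j,\gamma}}$ (in the notation of \eqref{eq:partition-string}), which holds because each $G_{d,k,\gamma_k}$ depends only on $(U_0,U)$ and $\vec{pk},\vec{y}$ and not on the state it acts on. Hence $\ket{\psi_{1^j0,\gamma}}$ is exactly the state $\ket{\psi_{0,j+1,\gamma_{j+1}}}$ obtained by instantiating \Cref{lem:partition2} with input state $\ket{\psi_{1^j,\gamma}}$ and copy index $j{+}1$. The hypothesis $j<i-1$ gives $1\le j+1<i\le m$, so in particular $j+1\ne i$; this is precisely the side condition needed to invoke Property~\ref{property:partition-testing} with the ``Hadamard copy'' taken to be $i$, i.e.\ $\vec c=e_i$. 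That property gives
\[
  \norm{P_{acc,j+1}\circ U\,\frac{\ket{e_i}_{\regC}\ket{\psi_{1^j0,\gamma}}_{\regX,\regZ}}{\norm{\ket{\psi_{1^j0,\gamma}}_{\regX,\regZ}}}}^2 \;\le\; (m-1)\gamma_0+\negl(\lambda),
\]
and multiplying through by $\norm{\ket{\psi_{1^j0,\gamma}}}^2$, which is at most $\norm{\ket{\psi}}^2\le 1$ by iterating Property~\ref{property-partition-norm-sum}, yields $\norm{P_{acc,j+1}\,U\,\ket{e_i,\psi_{1^j0,\gamma}}}^2\le(m-1)\gamma_0+\negl(\lambda)$. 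Since Property~\ref{property:partition-testing} holds for all $\vec{pk},\vec{y}$, this bound is valid whether one fixes $\vec{pk},\vec{y}$ (as here) or averages over them (as in \eqref{eq:zi-def}).

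It remains to pass from $P_{acc,j+1}$ to $\sum_z P_{acc,-i,z}$. Here I would use the observation recorded just before the corollary: $\sum_z P_{acc,-i,z}=\prod_{j'\ne i}P_{acc,j'}$ is itself a projector, because the $P_{acc,j'}$ together with the verifier's $z$-labelled classical post-processing on copy $i$ are all diagonal in the computational basis of the prover's final measurement outcome $\vec a$ and therefore commute. A product of commuting projectors is dominated in the Loewner order by each of its factors, and since $j+1\ne i$ the factor $P_{acc,j+1}$ appears in the product; hence $\sum_z P_{acc,-i,z}\le P_{acc,j+1}$, and so
\[
  \norm{\sum_z P_{acc,-i,z}\,U\,\ket{e_i,\psi_{1^j0,\gamma}}}^2 \;\le\; \norm{P_{acc,j+1}\,U\,\ket{e_i,\psi_{1^j0,\gamma}}}^2 \;\le\; (m-1)\gamma_0+\negl(\lambda),
\]
which is the desired bound. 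I do not expect a genuine obstacle; the only things requiring care are the shift $j\mapsto j{+}1$ between the string-indexing of \eqref{eq:partition-string} and the copy-indexing of \Cref{lem:partition2}, and observing that $j<i-1$ is exactly what delivers $j{+}1\ne i$ — which is also why the final summand $\ket{\psi_{1^{i-1}0,\gamma}}$ of the decomposition, where $G_0$ acts on the Hadamard copy itself, is (necessarily) left out of the corollary.
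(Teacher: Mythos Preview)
Your proposal is correct and takes essentially the same approach as the paper, which simply states that the corollary ``clearly follows'' from Property~\ref{property:partition-testing} of \Cref{lem:partition2}. You have carefully filled in the two details the paper leaves implicit: the index shift identifying $\ket{\psi_{1^j0,\gamma}}$ with the $\psi_0$-component on copy $j{+}1$, and the domination $\sum_z P_{acc,-i,z}=\prod_{j'\ne i}P_{acc,j'}\le P_{acc,j+1}$ (which the paper records just above the corollary).
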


Now we define
\begin{equation}
	\label{eq:zgoodi-def}
	z_{good, i}=\E_{\gamma, pk, y} \sum_z \norm{P_{acc,-i,z} U\ket{e_i,\psi_{1^{i-1}1,\gamma}}}^2\proj{z}
\end{equation}
as the output corresponding to a component that would pass the $i$-th testing rounds.
We will show that it is $O(\norm{\ket{\psi_{1^{i-1}0}}})$-close to $z_i$.
Before doing so, we present a technical lemma.

\begin{lemma}\label{lem:samp-tech-2}
	For any state $\ket{\psi}$,  $\ket{\phi}$ and projectors $\{P_z\}$ such that $P_z P_{z'} =0 $ for all $z\neq z'$, we have
	$$  \sum_z |\vev{\psi|P_z|\phi}| \leq \sqrt{\norm{\sum_z P_z\ket{\psi}}^2 } \sqrt{\norm{\sum_z P_z\ket{\phi}}^2 }. $$
\end{lemma}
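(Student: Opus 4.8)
The plan is to reduce the claim to a clean application of Cauchy--Schwarz, exploiting the orthogonality $P_zP_{z'}=0$ for $z\ne z'$ in two separate ways: first within each term to pull out a single summand, and then across the sum to split $\ket{\psi}$ from $\ket{\phi}$. First I would observe that each $P_z$ may be taken to be an orthogonal projector (that is the natural reading of ``projector'' here), so $P_z=P_z^2=P_z^\dagger$, and write $\vev{\psi|P_z|\phi}=\vev{\psi|P_z\,P_z|\phi}=\vev{P_z\psi\,|\,P_z\phi}$. Then $|\vev{\psi|P_z|\phi}|\le \norm{P_z\ket{\psi}}\cdot\norm{P_z\ket{\phi}}$ by Cauchy--Schwarz on the single term. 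Summing over $z$ and applying Cauchy--Schwarz again to the sum of products gives
\[
\sum_z |\vev{\psi|P_z|\phi}|\;\le\;\sum_z \norm{P_z\ket{\psi}}\,\norm{P_z\ket{\phi}}\;\le\;\sqrt{\sum_z\norm{P_z\ket{\psi}}^2}\,\sqrt{\sum_z\norm{P_z\ket{\phi}}^2}.
\]

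The remaining task is to identify $\sum_z\norm{P_z\ket{\psi}}^2$ with $\norm{\sum_z P_z\ket{\psi}}^2$, and likewise for $\ket{\phi}$. This is exactly where the hypothesis $P_zP_{z'}=0$ for $z\ne z'$ is used: expanding $\norm{\sum_z P_z\ket{\psi}}^2=\sum_{z,z'}\vev{\psi|P_z^\dagger P_{z'}|\psi}$, all cross terms $z\ne z'$ vanish because $P_z^\dagger P_{z'}=P_zP_{z'}=0$ (using that the $P_z$ are projectors so $P_z^\dagger=P_z$), leaving $\sum_z\vev{\psi|P_z|\psi}=\sum_z\norm{P_z\ket{\psi}}^2$. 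Substituting this identity (and its $\ket{\phi}$ analogue) into the displayed chain yields precisely the claimed bound.

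I do not anticipate a serious obstacle; the only point requiring a little care is the implicit assumption that ``projectors'' means Hermitian idempotents, which is needed both for the Cauchy--Schwarz step $\vev{\psi|P_z|\phi}=\vev{P_z\psi|P_z\phi}$ and for collapsing the cross terms via $P_z^\dagger P_{z'}=0$. If one only wanted orthogonality of ranges without $P_z=P_z^\dagger$ the statement would need adjusting, but in the intended application $P_{acc,-i,z}$ is built from standard-basis measurements and classical post-processing and is genuinely an orthogonal projector, so this is harmless. I would state the Hermitian-idempotent convention explicitly at the start of the proof and then carry out the two Cauchy--Schwarz steps and the orthogonality expansion as above.
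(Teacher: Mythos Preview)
Your proposal is correct and follows essentially the same approach as the paper's own proof: write $P_z=P_z^2$, apply Cauchy--Schwarz to each term, apply Cauchy--Schwarz again to the sum, and then use $P_zP_{z'}=0$ to identify $\sum_z\norm{P_z\ket{\psi}}^2$ with $\norm{\sum_z P_z\ket{\psi}}^2$. Your explicit remark that ``projector'' should be read as Hermitian idempotent is a helpful clarification that the paper leaves implicit.
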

\begin{proof}
	\begin{align}
		\sum_z |\vev{\psi|P_z|\phi}| =&\sum_z|\vev{\psi|P_zP_z|\phi}| \nn \\
		\leq& \sum_z \norm{\bra{\psi}P_z} \norm{ P_z\ket{\phi}} \nn \\
		\leq&  \sqrt{\sum_z \norm{P_z\ket{\psi}}^2} \sqrt{\sum_z\norm{P_z\ket{\phi}}^2} \nn \\
		\leq& \sqrt{\norm{\sum_z P_z\ket{\psi}}^2 } \sqrt{\norm{\sum_z P_z\ket{\phi}}^2 } \nn,
	\end{align}
	where we used Cauchy's inequality on the second and third line and $P_z P_{z'} =0 $ on the fourth line.
\end{proof}

\begin{cor}\label{lem:samp-tech}
	For any state $\ket{\psi}$,  $\ket{\phi}$ and projectors $\{P_z\}$ such that $\sum_z P_z \leq Id$ and $P_z P_{z'} =0 $ for all $z\neq z'$, we have
	$$  \sum_z |\vev{\psi|P_z|\phi}| \leq \norm{\psi}\norm{\phi}. $$
\end{cor}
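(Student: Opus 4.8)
The plan is to obtain the statement as an immediate consequence of \lemref{samp-tech-2}. First I would apply \lemref{samp-tech-2} verbatim to the same $\ket{\psi}$, $\ket{\phi}$, and family $\{P_z\}$ — the orthogonality hypothesis $P_zP_{z'}=0$ for $z\neq z'$ is shared by both statements — which yields
$$\sum_z |\vev{\psi|P_z|\phi}| \leq \sqrt{\norm{\sum_z P_z\ket{\psi}}^2}\,\sqrt{\norm{\sum_z P_z\ket{\phi}}^2}.$$
It then only remains to bound each of the two factors on the right-hand side by $\norm{\psi}$ and $\norm{\phi}$ respectively.

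For this I would observe that $Q\defeq\sum_z P_z$ is an orthogonal projector: each $P_z$ is a projector and $P_zP_{z'}=0$ for $z\neq z'$, so $Q^2=\sum_{z,z'}P_zP_{z'}=\sum_z P_z=Q$ and $Q^\dagger=Q$; in particular $0\leq Q\leq Id$, so $\norm{Q\ket{\psi}}\leq\norm{\psi}$ and $\norm{Q\ket{\phi}}\leq\norm{\phi}$. Substituting these two inequalities into the displayed bound gives $\sum_z|\vev{\psi|P_z|\phi}|\leq\norm{\psi}\norm{\phi}$, which is the claim. (Alternatively, one can bypass the ``$Q$ is a projector'' remark and use the hypothesis $\sum_z P_z\leq Id$ directly: in the third line of the proof of \lemref{samp-tech-2} one has $\sum_z\norm{P_z\ket{\psi}}^2=\vev{\psi|Q|\psi}\leq\norm{\psi}^2$, and similarly for $\ket{\phi}$. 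This is precisely why the extra hypothesis $\sum_z P_z\leq Id$ is spelled out here even though it is automatic from mutual orthogonality.)

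There is essentially no obstacle: this is a one-line strengthening of \lemref{samp-tech-2} under the additional normalization $\sum_z P_z\leq Id$, stated separately because this is the form actually used downstream — e.g.\ for the projectors $P_{acc,-i,z}$ of \eqref{eq:zi-def}, which satisfy $\sum_z P_{acc,-i,z}=\Pi_{j\neq i}P_{acc,j}\leq Id$, so that inner products against them are controlled purely by the norms of the (possibly sub-normalized) states involved.
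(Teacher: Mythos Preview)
Your proposal is correct and matches the paper's approach: the paper states this as an immediate corollary of \lemref{samp-tech-2} without giving a separate proof, and your derivation—applying \lemref{samp-tech-2} and then bounding $\norm{\sum_z P_z\ket{\psi}}\leq\norm{\psi}$ via $\sum_z P_z\leq Id$ (or equivalently via the third line of that lemma's proof)—is exactly the intended one-line justification.
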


Now we can estimate $z_i$ using $z_{good, i}$, with errors on the orders of $\norm{\ket{\psi_{1^{i-1}0}}}$.
This error might not be small in general,
but we can average it out later by considering uniformly random $i\in[m]$.
The analysis is tedious but straightforward;
we simply expand $z_i$ and bound the terms that are not $z_{good, i}$.

\begin{lemma}
	\label{thm:zi-zgoodi}
	\begin{align*}
	\tr\abs{z_i-z_{good, i}}\leq&\E_{pk, y, \gamma}\L[\norm{\ket{\psi_{1^{i-1}0,\gamma}}}^2+2\norm{\ket{\psi_{1^{i-1}0,\gamma}}}\R]\\
	&+O\L(\frac{m^2}{\sqrt T}+m\sqrt{(m-1)\gamma_0}\R).
	\end{align*}
\end{lemma}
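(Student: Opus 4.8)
The plan is to compare $z_i$ (from~\eqref{eq:zi-def}) with $z_{good,i}$ (from~\eqref{eq:zgoodi-def}) component by component, using the exact decomposition~\eqref{eq:partition-string} together with the fact that both objects are diagonal in the output basis $\set{\ket z}$. First I would split $\ket\psi = \ket{\psi_{1^{i-1}1,\gamma}} + \ket{R_\gamma}$, with $\ket{R_\gamma} \defeq \sum_{j=0}^{i-1}\ket{\psi_{1^{j}0,\gamma}} + \sum_{j=1}^{i}\ket{\psi_{err,j,\gamma}}$ collecting the non-``good'' components. Since the expression for $z_i$ does not involve $\gamma$, it may be averaged over $\gamma$ at no cost; expanding $\norm{P_{acc,-i,z}U\ket{e_i,\psi}}^2$ bilinearly and using $P_{acc,-i,z}^{\dagger}P_{acc,-i,z} = P_{acc,-i,z}$, the square of the $\ket{\psi_{1^{i-1}1,\gamma}}$ part is exactly $z_{good,i}$, so that
\[
z_i - z_{good,i} = \E_{pk,y,\gamma}\sum_z \left( \norm{P_{acc,-i,z}U\ket{e_i,R_\gamma}}^2 + 2\Real\langle \chi_{1,\gamma} | P_{acc,-i,z} | \chi_{R,\gamma}\rangle \right)\proj z ,
\]
where $\ket{\chi_{1,\gamma}} \defeq U\ket{e_i,\psi_{1^{i-1}1,\gamma}}$ and $\ket{\chi_{R,\gamma}} \defeq U\ket{e_i,R_\gamma}$. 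As this operator is diagonal, its trace norm is at most $\E_{pk,y,\gamma}\sum_z ( \norm{P_{acc,-i,z}U\ket{e_i,R_\gamma}}^2 + 2\abs{\langle \chi_{1,\gamma}|P_{acc,-i,z}|\chi_{R,\gamma}\rangle} )$.

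The next step is to collapse both sums onto the single nonnegative scalar $D_\gamma \defeq \norm{\sum_z P_{acc,-i,z}U\ket{e_i,R_\gamma}}$. Since the $P_{acc,-i,z}$ are mutually orthogonal projectors with $\sum_z P_{acc,-i,z} = \prod_{j \ne i} P_{acc,j} \le Id$, the first sum equals $D_\gamma^2$; for the cross term, \Cref{lem:samp-tech-2} applied with $P_z = P_{acc,-i,z}$ gives $\sum_z \abs{\langle \chi_{1,\gamma}|P_{acc,-i,z}|\chi_{R,\gamma}\rangle} \le \norm{\sum_z P_{acc,-i,z}\ket{\chi_{1,\gamma}}} \cdot D_\gamma \le \norm{\ket{\psi_{1^{i-1}1,\gamma}}} \cdot D_\gamma \le D_\gamma$, using $\norm{\ket{\psi_{1^{i-1}1,\gamma}}} \le \norm{\ket\psi} \le 1$. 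It is essential here to use \Cref{lem:samp-tech-2} rather than the coarser \Cref{lem:samp-tech}, so that the surviving factor is $D_\gamma$ and not $\norm{\ket{R_\gamma}}$, which is genuinely large owing to the $\ket{\psi_{1^{j}0,\gamma}}$ pieces. This reduces the claim to showing $\E_{pk,y,\gamma}[D_\gamma^2 + 2D_\gamma] \le \E_{pk,y,\gamma}[\norm{\ket{\psi_{1^{i-1}0,\gamma}}}^2 + 2\norm{\ket{\psi_{1^{i-1}0,\gamma}}}] + O(m^2/\sqrt T + m\sqrt{(m-1)\gamma_0})$.

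The core estimate is then a triangle-inequality bound on $D_\gamma$ over the summands of $\ket{R_\gamma}$, split into three types. The term $j = i-1$, namely $\ket{\psi_{1^{i-1}0,\gamma}}$, contributes at most $\norm{\ket{\psi_{1^{i-1}0,\gamma}}}$, and this is the one term that cannot be shrunk: the copy on which it fails to bind is copy $i$, which under $\vec c = e_i$ is the Hadamard round and not a testing round, so this quantity must be carried unbounded here and averaged out only later over the choice of $i \in [m]$, as anticipated in the overview. Each term $\ket{\psi_{1^{j}0,\gamma}}$ with $j < i-1$ contributes at most $\sqrt{(m-1)\gamma_0} + \negl(\lambda)$ by \Cref{lem:partition-testing}, since such a component is rejected by the $(j+1)$-st round (which is a testing round, as $j+1 \ne i$) with probability $\ge 1 - (m-1)\gamma_0 - \negl(\lambda)$. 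Each error term contributes $\norm{\ket{\psi_{err,j,\gamma}}}$, and by Property~\ref{property:partition-err} applied to the intermediate subnormalized state $\ket{\psi_{1^{j-1},\gamma}}$ (which has norm $\le 1$ by Property~\ref{property-partition-norm-sum}) together with Cauchy--Schwarz, $\E_\gamma \norm{\ket{\psi_{err,j,\gamma}}} \le \sqrt{6/T} + \negl(\lambda)$ and $\E_\gamma \norm{\ket{\psi_{err,j,\gamma}}}^2 \le 6/T + \negl(\lambda)$. Hence $D_\gamma \le \norm{\ket{\psi_{1^{i-1}0,\gamma}}} + \delta_\gamma$ with $\delta_\gamma \defeq (i-1)\sqrt{(m-1)\gamma_0} + \sum_{j=1}^{i}\norm{\ket{\psi_{err,j,\gamma}}} + \negl(\lambda)$, which satisfies $\E_\gamma \delta_\gamma = O(m\sqrt{(m-1)\gamma_0} + m/\sqrt T)$ and, via $(a+b)^2 \le 2a^2 + 2b^2$, Cauchy--Schwarz on the error sum, and $i \le m$, $\E_\gamma \delta_\gamma^2 = O(m^2(m-1)\gamma_0 + m^2/T) + \negl(\lambda)$.

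Finally, I would bound $D_\gamma^2 + 2D_\gamma \le \norm{\ket{\psi_{1^{i-1}0,\gamma}}}^2 + 2\norm{\ket{\psi_{1^{i-1}0,\gamma}}} + 4\delta_\gamma + \delta_\gamma^2$ (using $\norm{\ket{\psi_{1^{i-1}0,\gamma}}} \le 1$ to absorb the cross term $2\norm{\ket{\psi_{1^{i-1}0,\gamma}}}\,\delta_\gamma \le 2\delta_\gamma$), take $\E_\gamma$ and then $\E_{pk,y}$, and invoke the harmless normalization that $m^2/\sqrt T + m\sqrt{(m-1)\gamma_0} \le 2$ --- otherwise the claim already follows from the trivial bound $\tr\abs{z_i - z_{good,i}} \le \tr(z_i) + \tr(z_{good,i}) \le 2$ --- under which $m^2(m-1)\gamma_0 = (m\sqrt{(m-1)\gamma_0})^2 \le 2m\sqrt{(m-1)\gamma_0}$ and $m^2/T \le m^2/\sqrt T$. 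This yields the stated inequality, the residual $\negl(\lambda)$ being absorbed into the $O(\cdot)$ term. I expect this last step --- routing the many cross terms so that nothing worse than $O(m^2/\sqrt T + m\sqrt{(m-1)\gamma_0})$ survives --- to be the only delicate part; every other step is linear algebra together with direct appeals to \Cref{lem:partition2} and \Cref{lem:samp-tech-2}.
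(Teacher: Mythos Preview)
Your proof is correct and follows the same overall strategy as the paper: insert the decomposition~\eqref{eq:partition-string} into the definition of $z_i$, expand bilinearly, identify $z_{good,i}$ as the diagonal ``good'' piece, and bound all remaining cross contributions using \Cref{lem:partition-testing}, Property~\ref{property:partition-err} of \Cref{lem:partition2}, and \Cref{lem:samp-tech-2}.

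The difference is purely organizational. The paper keeps all summands of the decomposition separate and performs a full pairwise expansion, producing five families of cross terms that are then bounded type by type. You instead collect everything except $\ket{\psi_{1^{i-1}1,\gamma}}$ into a single remainder $\ket{R_\gamma}$ and route the entire estimate through the scalar $D_\gamma = \norm{\sum_z P_{acc,-i,z}\,U\ket{e_i,R_\gamma}}$, applying the triangle inequality only once, at the level of $D_\gamma$. This is a bit cleaner: it replaces the $O(i^2)$-term bookkeeping by a single square-and-cross bound $D_\gamma^2 + 2D_\gamma$, and your observation that \Cref{lem:samp-tech-2} (not \Cref{lem:samp-tech}) is what forces the cross term to see the \emph{projected} norm $D_\gamma$ rather than $\norm{\ket{R_\gamma}}$ is exactly the point. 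The WLOG reduction ``$m^2/\sqrt{T}+m\sqrt{(m-1)\gamma_0}\le 2$, else the trivial bound $\tr|z_i-z_{good,i}|\le 2$ suffices'' is a legitimate way to absorb the quadratic residuals $m^2(m-1)\gamma_0$ and $m^2/T$ into the stated $O(\cdot)$; the paper handles the same residuals implicitly in its term-by-term accounting. Both routes land on the same bound.
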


\begin{proof}[\Cref{thm:zi-zgoodi}]
	We take expectation of \Cref{eq:partition-string} over $\gamma$
	$$\ket{\psi}=\E_{\gamma}\left[
		\sum_{j=0}^{i-1} \ket{\psi_{1^j0,\gamma}} +\ket{\psi_{1^i,\gamma}} +\sum_{j=1}^{i}\ket{\psi_{err,j,\gamma}}
	\right],$$
	and expand $z_i$ from \Cref{eq:zi-def} as
	\begin{align}
		z_i &= z_{good,i}+ \E_{pk, y, \gamma} \sum_z \L[\sum_{k=0}^{i-1} \bra{\psi_{1^k0,\gamma}}U^\dag  P_{acc,-i,z}U   \sum_{j=0}^{i-1} \ket{\psi_{1^j0,\gamma}} \R. \nn \\
		&+
		\sum_{k=0}^{i-1} \bra{\psi_{1^k0,\gamma}}U^\dag  P_{acc,-i,z}U \ket{\psi_{1^i,\gamma}} +\sum_{k=0}^{i-1} \bra{\psi_{1^k0,\gamma}}U^\dag  P_{acc,-i,z}U\sum_{j=1}^{i}\ket{\psi_{err,j,\gamma}} \nn \\
		&+\bra{\psi_{1^i,\gamma}} U^\dag  P_{acc,-i,z}U \sum_{j=0}^{i-1} \ket{\psi_{1^j0,\gamma}} +\bra{\psi_{1^i,\gamma}} U^\dag  P_{acc,-i,z}U \sum_{j=1}^{i}\ket{\psi_{err,j,\gamma}}
		\nn \\
		&+ \sum_{k=1}^{i}\bra{\psi_{err,k,\gamma}} U^\dag  P_{acc,-i,z}U  \sum_{j=0}^{i-1} \ket{\psi_{1^j0,\gamma}} + \sum_{k=1}^{i}\bra{\psi_{err,k,\gamma}} U^\dag  P_{acc,-i,z}U \ket{\psi_{1^i,\gamma}}
		\nn \\
		&\L.    +\sum_{k=1}^{i}\bra{\psi_{err,k,\gamma}} U^\dag  P_{acc,-i,z}U \sum_{j=1}^{i}\ket{\psi_{err,j,\gamma}} \R] \proj{z} , \nn     
	\end{align}
	where we omitted writing out $e_i$.
	Therefore we have
	\begin{align*}
		\tr|z_i-z_{good,i}|\leq \E_{pk, y, \gamma} \sum_z &\L[ \sum_{k=0}^{i-1} \sum_{j=0}^{i-1} \L| \bra{\psi_{1^k0,\gamma}}U^\dag  P_{acc,-i,z}U \ket{\psi_{1^j0,\gamma}} \R|\R.\\
		&+
		2 \sum_{k=0}^{i-1} \L|\bra{\psi_{1^k0,\gamma}}U^\dag  P_{acc,-i,z}U \ket{\psi_{1^i,\gamma}} \R| \\
		&+ 2 \sum_{k=0}^{i-1}\sum_{j=1}^{i}\L| \bra{\psi_{1^k0,\gamma}}U^\dag  P_{acc,-i,z}U\ket{\psi_{err,j,\gamma}}\R| \\   
		&+2 \sum_{j=1}^{i}\L|\bra{\psi_{1^i,\gamma}} U^\dag  P_{acc,-i,z}U \ket{\psi_{err,j,\gamma}}\R| \\
		&+\L. \sum_{k=1}^{i}\sum_{j=1}^{i}\L| \bra{\psi_{err,k,\gamma}} U^\dag  P_{acc,-i,z}U \ket{\psi_{err,j,\gamma}}\R| \R]
	\end{align*}
	by the triangle inequality.
	The last three error terms sum to $O\L(\frac{m^2}{\sqrt{T}}\R)$ by \Cref{lem:samp-tech} and property~\ref{property:partition-err} of \Cref{lem:partition2}.
	As for the first two terms, by \Cref{lem:samp-tech-2} and \Cref{lem:partition-testing}, we see that
	\begin{align*}
		\sum_z \sum_{k=0}^{i-1}\sum_{j=0}^{i-1}
		&\abs{\bra{\psi_{1^k0,\gamma}}U^\dag  P_{acc,-i,z}U \ket{\psi_{1^j0,\gamma}}} \\
		&\leq\sum_z \abs{\bra{\psi_{1^{i-1}0,\gamma}}U^\dag  P_{acc,-i,z}U \ket{\psi_{1^{i-1}0,\gamma}}} + O\L(m^2(m-1)\gamma_0\R) \\
		&\leq\norm{\ket{\psi_{1^{i-1}0,\gamma}}}^2 + O\L(m^2(m-1)\gamma_0\R)
	\end{align*}
	and similarly
	\begin{align*}
		\sum_z\sum_{k=0}^{i-1}
		&\abs{\bra{\psi_{1^k0,\gamma}}U^\dag  P_{acc,-i,z}U \ket{\psi_{1^i,\gamma}}}\\
		&\leq\sum_z\abs{\bra{\psi_{1^{i-1}0,\gamma}}U^\dag  P_{acc,-i,z}U \ket{\psi_{1^i,\gamma}}}+O\L(m\sqrt{(m-1)\gamma_0}\R)\\
		&\leq\norm{\ket{\psi_{1^i,\gamma}}}+O\L(m\sqrt{(m-1)\gamma_0}\R).
	\end{align*}
\end{proof}

Now let $z_{true}$, as a mixed state, be the correct sample of the $\SampBQP$ instance $D_x$,
and let $z_{ideal, i}=\tr(z_{good, i})z_{true}$.
We show that $z_{ideal, i}$ is close to $z_{good, i}$.
\begin{lemma}
	\label{thm:zgood-zideal}
	$z_{good, i}$ is $O(\eps)$-computationally indistinguishable to $z_{ideal, i}$,
	where $\eps\in\bbR$ is the accuracy parameter picked earlier in $\PiSampZ$.
\end{lemma}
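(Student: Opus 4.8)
The plan is to derive the statement from the binding property of the single-copy intermediate protocol, \Cref{lem:naive-qpip0-binding}, applied to the $i$-th copy. The crucial point is that, by construction, $\ket{\psi_{1^{i-1}1,\gamma}}$ lies in the image of $G_{1,i,\gamma_i}$ and is therefore \emph{binding} to the $i$-th copy: by property~\ref{property:partition-binding} of \Cref{lem:partition2} there is an efficient extractor $\ext_i$ for which the $i$-th testing round accepts with probability $1-\negl(\lambda)$ on the renormalized state. This is precisely the hypothesis of \Cref{lem:naive-qpip0-binding}, which already bundles Mahadev's binding (\Cref{lem:urmila-binding}) with the statistical soundness of $\PiSamp$ (\Cref{QPIP1thm}); the accuracy parameter $\eps$ of $\PiSamp$ is where the final $O(\eps)$ comes from.

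Concretely, from the $m$-fold cheating prover I would build a cheating prover $\tilde{P}_i$ for a single execution of $\PiNaive$ on copy $i$: on the second message $\tilde{P}_i$ runs the original $U_0$ followed by the post-selected circuits $G_{1,1,\gamma_1},\dots,G_{1,i,\gamma_i}$, aborting (sending $y_i=\bot$) if any post-selection fails, which only helps per the conventions of $\PiNaive$; on the fourth message it answers with $\ext_i$ if copy $i$ is a testing round and with the original $U$ if copy $i$ is a Hadamard round, simulating the other $m-1$ copies internally. Conditioned on not aborting it passes the $i$-th testing round with overwhelming probability by property~\ref{property:partition-binding}, so \Cref{lem:naive-qpip0-binding} applies: the Hadamard-round output of $(\tilde{P}_i,\VNaive)$ is $O(\eps)$-computationally indistinguishable from $z_{true}$ scaled by its accept probability. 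Unfolding \eqref{eq:zgoodi-def} and checking that ``$\ext_i$ on the testing round, $U$ on the Hadamard round'' is a legitimate $\PiMeasure$ strategy, this is exactly the comparison between $z_{good, i}$ and $z_{ideal, i}=\tr(z_{good, i})\,z_{true}$.

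The remaining point is the contribution of the other $m-1$ testing rounds, which is also built into $P_{acc,-i,z}$: their joint acceptance is a polynomial-time checkable event (the verifier holds all of $\vec{sk}$), so passing from the single-copy Hadamard output of \Cref{lem:naive-qpip0-binding} to the sub-normalized state $z_{good, i}$ amounts to restricting a computationally indistinguishable pair of sub-normalized states along a $\BQP$ measurement, an operation that preserves computational indistinguishability in the sense of \defref{comp-indis}. I expect this bookkeeping to be the main obstacle: the $m$ copies are entangled in the prover's single internal register and the circuits $G_{d_k,k,\gamma_k}$ depend on the entire tuple $\vec{pk},\vec{y}$, so ``the other copies'' cannot be tensored off and must be carried carefully through the reduction and, where necessary, folded into the distinguisher. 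Modulo this careful composition, no idea beyond \Cref{lem:partition2} and \Cref{lem:naive-qpip0-binding} is needed.
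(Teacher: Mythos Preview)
Your overall strategy matches the paper's: build from the $m$-fold prover a single-copy $\PiNaive$ prover that internally simulates the other $m-1$ copies, prepares $\ket{\psi_{1^{i-1}1,\gamma}}$ via the post-selected circuits $G_{1,j,\gamma_j}$ (aborting on failure), answers a testing round with $\ext_i$, and then invoke \Cref{lem:naive-qpip0-binding}. That is exactly what the paper does.

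The one substantive discrepancy is in how the other $m-1$ testing rounds enter. You propose to let the Hadamard-round reply be simply ``run $U$ and hand over copy $i$'', and then \emph{post hoc} restrict along the event ``the remaining $m-1$ copies all pass'' by folding that check into the distinguisher. This step is shakier than you suggest: the distinguisher in \Cref{lem:naive-qpip0-binding} receives only the $\PiNaive$ verifier's output $(d,z)$, and the acceptance of the other copies depends on $sk_j$ for $j\neq i$ and on the prover's \emph{internal} post-$U$ registers, none of which are part of $(d,z)$. So ``restricting a computationally indistinguishable pair along a $\BQP$ measurement'' is not available here without re-opening the proof of \Cref{lem:naive-qpip0-binding} to track auxiliary prover-side information.

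The paper sidesteps this entirely by building the check into the prover's Hadamard-round behaviour: after running $U$, the composite prover tests the other $m-1$ copies itself (it holds the simulated $sk_j$), replies with copy $i$'s answer if they all pass, and replies $\bot$ otherwise. With this choice the Hadamard-round output of $(\tilde P_i,\VNaive)$ is \emph{equal} to $z_{good,i}$ by construction, the testing-round hypothesis of \Cref{lem:naive-qpip0-binding} is still met (it only concerns $c=0$, where $\ext_i$ is used), and the lemma applies directly to yield $z_{good,i}\approx_c z_{ideal,i}$. Incorporating the abort into the prover rather than the distinguisher is the missing move.
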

\begin{proof}[\Cref{thm:zgood-zideal}]
	For every $i\in [m]$ and every prover strategy $(U_0,U)$ for $\PiSampZ$, consider the following composite strategy, $\Picomp{i}$, as a prover for $\PiNaive$. Note that a prover only interacts with the verifier in Step~\ref{step:urmila-in-naive} of $\PiNaive$ where $\PiMeasure$ is run, so we describe a prover's action in terms of the four rounds of communication in $\PiMeasure$. 

	$\Picomp{i}$ tries to run $U_0$ by taking the verifier's input as the input to the $i$-th copy of $\PiMeasure$ in $\PiSampZ$ and simulating other $m-1$ copies by himself. The prover then picks a uniformly random $\gamma$ and  tries to generate $\ket{\psi_{1^{i-1}1,\gamma}}$ by applying $G_{i,1,\gamma}G_{i-1,1,\gamma} \cdots G_{2,1,\gamma}G_{1,1,\gamma}$. This can be efficiently done because of \Cref{lem:partition2} and our choice of $\gamma_0$ and $T$ in \Cref{thm:qpip0}. If the prover fails to generate $\ket{\psi_{1^{i-1}1,\gamma}}$, he throws out everything and aborts by sending $\bot$ back.   On the fourth round,  If it's a testing round the prover reply with the $i$-th register of $\ext_i\left(\frac{\ket{e_j}_{\regC}\ket{\psi_{1,i,\gamma}}_{\regX,\regZ}}{\|\ket{\psi_1}_{\regX,\regZ}\|}\right)$, where $\ext_i$ is specified in property~\ref{property:partition-binding} of Lemma~\ref{lem:partition2}. If it's the Hadamard round  the prover  runs $U$ and checks whether every copy except the $i$-th copy would be accepted. If all $m-1$ copies are accepted, he replies with the $i$-th copy, otherwise reply $\bot$.
	
	Denote the result we would get in the Hadamard round by $z_{composite,i}$. By construction, when $G_{i,1,\gamma}\ldots G_{1,1,\gamma}$ succeeded, the corresponding output would be $z_{good,i}$. Also note that this is the only case where the verifier won't reject, so $z_{composite,i}=z_{good,i}$.

	In the testing round, by property~\ref{property:partition-binding} of ~\Cref{lem:partition2}, the above strategy is accepted with probability $1-\negl(n)$ when the prover didn't abort.
	Since the prover's strategy is also efficient, by ~\Cref{lem:naive-qpip0-binding},
	$z_{composite,i}$ is $O(\eps)$-computationally indistinguishable to $z_{ideal, i}$. 
\end{proof}

Now we try to put together all $i\in [m]$. First let
$$z=\frac{1}{m} \sum_i z_i= \frac{1}{m} \sum_i \sum_z \proj{z} \cdot \vev{e_i,\psi|U^\dag P_{acc,-i,z} U|e_i,\psi},$$
which is the output distribution of $\PiSampZ$.
We also define the following accordingly:
$$z_{good}\defeq \frac{1}{m}\sum_i z_{good,i,}$$
$$z_{ideal}\defeq \frac{1}{m}\sum_i z_{ideal,i}.$$
Notice that $z_{ideal}$ is some ideal output distribution, which might not have the same accept probability as $z$.

\begin{theorem}\label{thm:qpip0} 
	Under the QLWE assumption, $\PiSampZ$ is a protocol for the $\SampBQP$ problem $(D_x)_{x\in\set{0,1}^*}$  with negligible completeness error and is computationally sound.\footnote{The soundness and completeness of a $\SampBQP$ protocol is defined in \Cref{dfn:stats-secure-proto-sampbqp}}
	
\end{theorem}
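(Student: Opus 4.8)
Completeness is immediate by inspection: an honest prover runs $m=\Theta(1/\eps^2)$ honest copies of $\PiNaive$; each of the $m-1$ testing copies accepts with overwhelming probability by the completeness of $\PiMeasure$, and the single Hadamard copy produces a sample computationally indistinguishable from a correct one by the completeness of $\PiSamp$ (\Cref{QPIP1thm}), so a union bound over the $\poly(\lambda)$ copies gives negligible completeness error. The rest of the plan addresses computational soundness. First I would instantiate the partition lemma with mutually compatible parameters: $m=\Theta(1/\eps^2)$, $T=\Theta(m^4/\eps^2)$ and $\gamma_0=\Theta(\eps^2/m^3)$, chosen so that $m^2/\sqrt T$, $m\sqrt{(m-1)\gamma_0}$ and $m^2(m-1)\gamma_0$ are all $O(\eps)$ while $T$ and $1/\gamma_0$ stay $\poly(1/\eps)$. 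The latter is what keeps the circuits $G_{d,i,\gamma}$ and the amplified extractors $\ext_i$ provided by \Cref{lem:partition2} $\poly(\lambda)$-time, hence makes the composite provers used below efficient.

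Next I would run the per-copy analysis. Applying \Cref{lem:partition2} iteratively gives, for each $i\in[m]$, the decomposition \Cref{eq:partition-string} of the cheating prover's internal state $\ket{\psi}$. I would then invoke \Cref{thm:zi-zgoodi} to bound $\tr\abs{z_i-z_{good,i}}$ (with $z_i,z_{good,i}$ as in \Cref{eq:zi-def} and \Cref{eq:zgoodi-def}) by $\E_{pk,y,\gamma}\bigl[\norm{\ket{\psi_{1^{i-1}0,\gamma}}}^2+2\norm{\ket{\psi_{1^{i-1}0,\gamma}}}\bigr]+O(\eps)$, where the additive $O(\eps)$ uses the parameter choices above, and invoke \Cref{thm:zgood-zideal} to conclude that each $z_{good,i}$ is $O(\eps)$-computationally indistinguishable from $z_{ideal,i}=\tr(z_{good,i})\,z_{true}$, via the efficient composite prover $\Picomp{i}$ for $\PiNaive$ and the binding property \Cref{lem:naive-qpip0-binding}.

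Then comes the averaging step, which is where the $1/\sqrt m$ loss is incurred. Writing $z=\frac1m\sum_i z_i$, $z_{good}=\frac1m\sum_i z_{good,i}$, $z_{ideal}=\frac1m\sum_i z_{ideal,i}$, I would use Property~\ref{property-partition-norm-sum} of \Cref{lem:partition2} in the telescoped form $\norm{\ket{\psi_{1^{i-1},\gamma}}}^2\ge\norm{\ket{\psi_{1^{i-1}0,\gamma}}}^2+\norm{\ket{\psi_{1^i,\gamma}}}^2$ to get $\sum_i\norm{\ket{\psi_{1^{i-1}0,\gamma}}}^2\le\norm{\ket{\psi}}^2\le1$, hence $\frac1m\sum_i\norm{\ket{\psi_{1^{i-1}0,\gamma}}}^2\le1/m$ and, by Cauchy--Schwarz, $\frac1m\sum_i\norm{\ket{\psi_{1^{i-1}0,\gamma}}}\le1/\sqrt m=O(\eps)$. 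Averaging \Cref{thm:zi-zgoodi} over $i$ then gives $\tr\abs{z-z_{good}}\le\frac1m\sum_i\tr\abs{z_i-z_{good,i}}=O(\eps)$, and averaging \Cref{thm:zgood-zideal} (observing that a distinguisher for the averages yields one for a uniformly random component, so the advantage is preserved) gives that $z_{good}$ is $O(\eps)$-computationally indistinguishable from $z_{ideal}$. Finally I would compare $z_{ideal}=\bigl(\frac1m\sum_i\tr(z_{good,i})\bigr)z_{true}$ with the ideal object of \Cref{dfn:stats-secure-proto-sampbqp}, namely $\tr(z)\,z_{true}$: their trace distance is $\abs{\tr(z_{ideal})-\tr(z)}\le\frac1m\sum_i\tr\abs{z_i-z_{good,i}}=O(\eps)$. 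Chaining the three bounds by the triangle inequality (a statistical step composed with a computational step remains computational) shows $(d,z)$ is $O(\eps)$-computationally indistinguishable from $(d,z_{ideal})$; replacing $\eps$ by $\eps/c$ for a suitable constant $c$ (which changes $m,T,\gamma_0$ only by polynomial factors) upgrades $O(\eps)$ to $\eps$.

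Since the partition lemma, \Cref{thm:zi-zgoodi} and \Cref{thm:zgood-zideal} are already available, the main obstacle here is the bookkeeping: choosing $m$, $T$, $\gamma_0$ as mutually compatible polynomials in $1/\eps$ so that every error term is $O(\eps)$ while the partition-lemma circuits remain efficient, and correctly exploiting the orthogonality built into Property~\ref{property-partition-norm-sum} to average the non-negligible per-copy error $\norm{\ket{\psi_{1^{i-1}0,\gamma}}}$ down to $O(1/\sqrt m)$ --- which is precisely why the final soundness error is $O(\sqrt{1/m})$ rather than the naively hoped-for $O(1/m)$.
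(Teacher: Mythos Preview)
Your proposal is correct and follows essentially the same approach as the paper: both argue completeness by inspection, then for soundness average \Cref{thm:zi-zgoodi} over $i$, control $\frac1m\sum_i\norm{\ket{\psi_{1^{i-1}0,\gamma}}}$ via the telescoping norm inequality from Property~\ref{property-partition-norm-sum} and Cauchy--Schwarz, and combine with \Cref{thm:zgood-zideal} by the triangle inequality. Your version is in fact a bit more careful in two places --- your parameter choices $T=\Theta(m^4/\eps^2)$, $\gamma_0=\Theta(\eps^2/m^3)$ make all error terms $O(\eps)$ (the paper's stated $T=O(1/\eps^2)$ appears to be a slip), and you explicitly reconcile $z_{ideal}=\bigl(\frac1m\sum_i\tr(z_{good,i})\bigr)z_{true}$ with the $\tr(z)\,z_{true}$ required by \Cref{dfn:stats-secure-proto-sampbqp}, which the paper leaves implicit.
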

\begin{proof}
	Completeness is trivial. In the following we prove the soundness.
	
	By Property~\ref{property-partition-norm-sum} of Lemma~\ref{lem:partition2}, we have
	\begin{align} \label{eq:bad-term-sum}
		\norm{\ket{\psi}}^2 \geq& \norm{\ket{\psi_{0,\gamma}}}^2+\norm{\ket{\psi_{1,\gamma}}}^2 \nn \\
		\geq& \norm{\ket{\psi_{0,\gamma}}}^2+
		\norm{\ket{\psi_{10,\gamma}}}^2+ \norm{\ket{\psi_{11,\gamma}}}^2 \nn \\
		\geq& \norm{\ket{\psi_{0,\gamma}}}^2+
		\norm{\ket{\psi_{10,\gamma}}}^2+ \norm{\ket{\psi_{110,\gamma}}}^2 +\cdots  \nn \\
		&+ \norm{\ket{\psi_{1^{m-1}0,\gamma}}}^2+ \norm{\ket{\psi_{1^{m-1}1,\gamma}}}^2.
	\end{align}

	We have
	\begin{align} \label{eq:z-z-good}
		\tr|z-z_{good}| =& \tr\L|\frac{1}{m}\sum_i (z_i-z_{good,i})\R| \nn \\
		\leq&  \frac{1}{m}\sum_i\tr| (z_i-z_{good,i})| \nn \\
		\leq&  \frac{1}{m}\sum_i\L[\E_{pk, y, \gamma}\L[\norm{\ket{\psi_{1^{i-1}0,\gamma}}}^2+ 2\norm{\ket{\psi_{1^{i-1}0,\gamma}}}\R] \R.\nn \\
		&+ \L. O\L(\frac{m^2}{\sqrt T}+m\sqrt{(m-1)\gamma_0}\R)\R] \nn \\
		\leq&  \frac{1}{m}+ 2\frac{1}{\sqrt m}+O\L(\frac{m^2}{\sqrt T}+m\sqrt{(m-1)\gamma_0}\R) \nn \\ 
		=&O\L( \frac{1}{\sqrt m}+\frac{m^2}{\sqrt T}+m\sqrt{(m-1)\gamma_0}\R),  
	\end{align}
	where we used triangle inequality on the second line, \Cref{thm:zi-zgoodi} on the third line, Equation~\ref{eq:bad-term-sum} and Cauchy's inequality on the fourth line.
	Set $m=O(1/\eps^2), T=O(1/\eps^2),\gamma_0=\eps^8$. Combining \Cref{thm:zgood-zideal} and \Cref{eq:z-z-good} by triangle inequality, we have $z$ is $O(\epsilon)$-computationally indistinguishable to $z_{ideal}$. Therefore, $(d,z)$  $O(\epsilon)$-computationally indistinguishable to $(d,z_{ideal})$.
\end{proof}

~\Cref{thm:qpip0-informal} follows as a corollary.

\section{Generic Blindness Protocol Compiler for $\QPIP_0$}
\label{sec:BlindBQP2}

In this section, we present a generic protocol compiler that compiles any $\QPIP_0$ protocol $\Pi = (P, V)$ (with an arbitrary number of rounds) to a protocol $\Piblind = (\Pblind, \Vblind)$ that achieve blindness while preserving the completeness, soundness, and round complexity. At a high-level, the idea is simple: we simply run the original protocol under a quantum homomorphic encryption $\QHE$ with the verifier's key. Intuitively, this allows the prover to compute his next message under encryption without learning the underlying verifier's message, and hence achieves blindness while preserving the properties of the original protocol.

However, several issues need to be taking care to make the idea work. First, since the verifier is classical, we need the quantum homomorphic encryption scheme $\QHE$ to be \emph{classical friendly} as defined in Definition~\ref{def:classical-friendly}. Namely, the key generation algorithm and the encryption algorithm for classical messages should be classical, and when the underlying message is classical, the ciphertext (potentially from homomorphic evaluation) and the decryption algorithm should be classical as well. Fortunately, the quantum homomorphic encryption scheme of Mahadev~\cite{mahadev_qfhe} and Brakerski~\cite{brakerski_qfhe} are classical friendly. Moreover, Brakerski's scheme requires a weaker QLWE assumption, where the modulus is polynomial instead of super-polynomial.

A more subtle issue is to preserve the soundness. Intuitively, the soundness holds since the execution of $\Piblind$ simulates the execution of $\Pi$, and hence the soundness of $\Pi$ implies the soundness of $\Piblind$. However, to see the subtle issue, let us consider the following naive compiler that uses a single key: In $\Piblind$,  the verifier $V$ initially generates a pair $\QHE$ key $(pk, sk)$, sends $pk$ and encrypted input $\QEnc(pk,x)$ to $P$. Then they run $\Pi$ under encryption with this key, where both of them use homomorphic evaluation to compute their next message.

There are two reasons that the compiled protocol $\Piblind$ may not be sound (or even not blind). First, in general, the $\QHE$ scheme may not have \emph{circuit privacy}; namely, the homomorphic evaluation may leak information about the circuit being evaluated. Since the verifier computes his next message using homomorphic evaluation, a cheating prover $\Pblindstar$ seeing the homomorphically evaluated ciphertext of the verifier's message may learn information about the verifier's next message circuit, which may contain information about the secret input $x$ or help $\Pblindstar$ to break the soundness. Second, $\Pblindstar$ may send invalid ciphertexts to $V$, so the execution of $\Piblind$ may not simulate a valid execution of $\Pi$.

To resolve the issue, we let the verifier switch to a fresh new key for each round of the protocol. For example, when the prover $\Pblind$ returns the ciphertext of his first message, the verifier $\Vblind$ decrypts the ciphertext, computes his next message (in the clear), and then encrypt it using a fresh key $pk'$ and sends it to $\Pblind$.
Note that a fresh key pair is necessary here to ensure blindness, as decrypting uses information from the secret key.
Since the verifier $\Vblind$ only  sends fresh ciphertexts to $\Pblind$, this avoids the issue of circuit privacy. Additionally, to allow $\Pblind$ to homomorphically evaluate its next message, $\Vblind$ needs to encrypt the previous secret key $sk$ under the new public key $pk'$ and send it along with $pk'$ to $\Pblind$. This allows the prover to homomorphically convert ciphertexts under key $pk$ to ciphertexts under key $pk'$. By doing so, we show that for any cheating prover $\Pblindstar$, the interaction $(\Pblindstar, \Vblind)$ indeed simulates a valid interaction of $(\Pstar, V)$ for some cheating $\Pstar$, and hence the soundness of $\Pi$ implies the soundness of the compiled protocol. Finally, for the issue of the prover sending invalid ciphertexts, we note that this is not an issue if the decryption never fails, which can be achieved by simply let the decryption algorithm output a default dummy message (e.g., $0$) when it fails. 

We note that the idea of running the protocol under homomorphic encryptions is used in~\cite{KMThesis} in a classical setting, but for a different purpose of making the protocol ``computationally simulatable'' in their context.

We proceed to present our compiler. We start by introducing the notation of a $\QPIP_0$ protocol $\Pi$ as follows.

\begin{protocol}{$\QPIP_0$ protocol $\Pi=(P, V)(x)$ where only the verifier receives outputs}
    
    Common inputs\footnote{For the sake of simplicity, we omit accuracy parameter $\eps$ where it exists}:
    \begin{itemize}
        \item Security parameter $1^\lambda$ where $\lambda\in\bbN$
        \item A classical input $x\in\zo^{\poly(\lambda)}$
    \end{itemize}

    Protocol:
    \begin{enumerate}
        \item $V$ generates $(v_1, st_{V, 1})\leftarrow\cV_1(1^\lambda, x)$ and sends $v_1$ to the prover.
        \item $P$ generates $(p_1, st_{P, 1})\leftarrow\cP_1(1^\lambda, v_1, x)$ and sends $p_1$ to the verifier.
        \item for $t=2,\ldots,T$:
        \begin{enumerate}
            \item $V$ generates $(v_t, st_{V, t})\leftarrow\cV_t(p_{t-1}, st_{V, t-1})$ and sends $v_t$ to the prover.
            \item $P$ generates $(p_t, st_{P, t})\leftarrow\cP_t(v_t, st_{P, t-1})$ and sends $p_t$ to the verifier.
        \end{enumerate}
        \item $V$ computes its output $o\leftarrow\cV_{out}(p_T, st_{V, T})$.
    \end{enumerate}

\end{protocol}

We compile the above protocol to achieve blindness as follows.
For notation, when there are many sets of $\QHE$ keys in play at the same time,
we use $\ctx{x}{}{i}$ to denote $x$ encrypted under $pk_i$.

\begin{protocol}{Blind $\QPIP_0$ protocol $\Piblind=(\Pblind, \Vblind(x))$ corresponding to $\Pi_0$}
    Inputs:
    \begin{itemize}
        \item Common input: Security parameter $1^\lambda$ where $\lambda\in\bbN$
        \item Verifier's input: $x\in\zo^{\poly(\lambda)}$
    \end{itemize}

    Ingredients:
    \begin{itemize}
        \item Let $L$ be the maximum circuit depth of $\cP_t$.
    \end{itemize}

    Protocol:
    \begin{enumerate}
        \item $\Vblind$ generates $(v_1, st_{V, 1})\leftarrow\cV_1(1^\lambda, x)$.
            Then it generates $(pk_1, sk_1)\leftarrow\QGen(1^\lambda, 1^L)$,
            and encrypts $\ctx{x}{}{1}\leftarrow\QEnc(pk_1, x)$ and $\ctx{v}{1}{1}\leftarrow\QEnc(pk_1, v_1)$.
            It sends $pk_1$, $\ctx{x}{}{1}$, and $\ctx{v}{1}{1}$ to the prover.
        \item $\Pblind$ generates $(\ctx{p}{1}{1}, \ctx{st}{P, 1}{1})\leftarrow\cPblind{1}(1^\lambda, \ctx{v}{1}{1}, \ctx{x}{}{1})$
            by evaluating
            $(\ctx{p}{1}{1}, \ctx{st}{P, 1}{1})\leftarrow  \QEval(pk, \cP_1,$ \  $\QEnc(pk_1, 1^\lambda), \ctx{v}{1}{1}, \ctx{x}{}{1})$.
            It sends $\ctx{p}{1}{1}$ to the verifier.
        \item for $t=2,\ldots,T$:
        \begin{enumerate}
            \item $\Vblind$ decrypts the prover's last message by $p_{t-1}\leftarrow\QDec(sk_{t-1}, \ctx{p}{t-1}{t-1})$,
                then generates $(v_t, st_{V, t})\leftarrow\cV_t(p_{t-1}, st_{V, t-1})$.
                Then it generates $(pk_t, sk_t)\leftarrow\QGen(1^\lambda, 1^L)$,
                and produces encryptions $\ctx{v}{t}{t}\leftarrow\QEnc(pk_t, v_t)$ and $\ctx{sk}{t-1}{t}\leftarrow\QEnc(pk_t, sk_{t-1})$.
                It sends $pk_t$, $\ctx{v}{t}{t}$, and $\ctx{sk}{t-1}{t}$ to the prover.
            \item $\Pblind$ generates $(\ctx{p}{t}{t}, \ctx{st}{P, t}{t})\leftarrow\cPblind{t}(\ctx{v}{t}{t}, \ctx{sk}{t-1}{t}, \ctx{st}{P, t-1}{t-1})$
                by first switching its encryption key;
                that is, it encrypts its state under the new key by $\ctx{st}{P, t-1}{t-1, t}\leftarrow\QEnc(pk_t, \ctx{st}{P, t-1}{t-1}))$,
                then homomorphically decrypts the old encryption by
                $\ctx{st}{P, t-1}{t}\leftarrow\QEval(pk_t, \QDec,$ \ $\ctx{sk}{t-1}{t}, \ctx{st}{P, t-1}{t-1, t})$.
                Then it applies the next-message function homomorphically, generating
                $(\ctx{p}{t}{t}, \ctx{st}{P, t}{t})\leftarrow\QEval(pk_t, \cP_t, \ctx{v}{t}{t}, \ctx{st}{P, t-1}{t})$.
                It sends $\ctx{p}{t}{t}$ back to the verifier.
        \end{enumerate}
        \item $\Vblind$ decrypts the prover's final message by $p_T\leftarrow\QDec(sk_T, \ctx{p}{T}{T})$.
            It then computes its output $o\leftarrow\cV_{out}(p_T, st_{V, T})$.
    \end{enumerate}
\end{protocol}

By the correctness of $\QHE$, the completeness error of $\Piblind$ is negligibly close to that of $\Pi$.
In particular, note that the level parameter $L$ is sufficient for the honest prover which has a bounded complexity.
For the soundness property, we show the following lemma, which implies that $\Piblind$ preserves the soundness of $\Pi_0$.

\begin{thm} \label{thm:compiler-errors}
    For all cheating $\BQP$ provers $\Pblindstar$, there exists a cheating $\BQP$ prover $\Pstar$ s.t. for all $\lambda$ and inputs $x\in\zo^{\poly(\lambda)}$, the output distributions of $(\Pblindstar, \Vblind(x))$ and $(\Pstar, V)(x)$ are identical.
\end{thm}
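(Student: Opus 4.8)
The plan is to build, from any cheating $\BQP$ prover $\Pblindstar$ for $\Piblind$, a cheating prover $\Pstar$ for $\Pi$ that internally runs $\Pblindstar$ together with exactly the ``encryption wrapper'' that $\Vblind$ applies on top of $V$. Concretely, $\Pstar$ takes $1^\lambda$ and the common input $x$, and maintains the internal state of $\Pblindstar$. When $\Pstar$ receives $v_1$ from the real verifier $V$, it performs the first step of $\Vblind$ itself: it samples $(pk_1, sk_1)\leftarrow\QGen(1^\lambda, 1^L)$, computes $\ctx{x}{}{1}\leftarrow\QEnc(pk_1, x)$ and $\ctx{v}{1}{1}\leftarrow\QEnc(pk_1, v_1)$, feeds $(pk_1, \ctx{x}{}{1}, \ctx{v}{1}{1})$ to $\Pblindstar$, obtains $\ctx{p}{1}{1}$, decrypts $p_1\leftarrow\QDec(sk_1, \ctx{p}{1}{1})$, and forwards $p_1$ to $V$. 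For each round $t\ge 2$, upon receiving $v_t$ from $V$, $\Pstar$ samples a fresh $(pk_t, sk_t)\leftarrow\QGen(1^\lambda, 1^L)$, computes $\ctx{v}{t}{t}\leftarrow\QEnc(pk_t, v_t)$ and $\ctx{sk}{t-1}{t}\leftarrow\QEnc(pk_t, sk_{t-1})$ (it has $sk_{t-1}$ because it generated it in the previous round), hands these to $\Pblindstar$, receives $\ctx{p}{t}{t}$, decrypts $p_t\leftarrow\QDec(sk_t, \ctx{p}{t}{t})$, and sends $p_t$ to $V$. Since $\Pblindstar$ is $\BQP$ and all $\QHE$ algorithms run in polynomial time, $\Pstar$ is $\BQP$.

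Next I would argue that $(\Pstar, V)(x)$ and $(\Pblindstar, \Vblind(x))$ are literally the same experiment, seen through two lenses. The point is that $\Vblind$ is nothing but $V$ sandwiched between $\QHE$ decryption of the incoming ciphertext and $\QHE$ key generation / encryption of the outgoing message: after decrypting $p_{t-1}$, $\Vblind$ invokes precisely $\cV_t(p_{t-1}, st_{V,t-1})$, the next-message function of $V$, and then re-encrypts. In the construction above the decryption, key-generation, and encryption steps are carried out by $\Pstar$, while the calls to $\cV_1,\cV_t,\cV_{out}$ are carried out by the real $V$; both experiments use the very same algorithms fed with freshly sampled randomness of identical distribution, so the joint distribution of the plaintext transcript $(v_1, p_1,\ldots,v_T, p_T)$ together with all verifier states $st_{V,t}$ is the same in the two experiments. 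The one subtlety to state explicitly is that $\QDec$ is taken to never fail (it outputs a fixed default message, say $0$, whenever decryption would fail), so $p_t\leftarrow\QDec(sk_t, \ctx{p}{t}{t})$ is a well-defined string for \emph{every} message $\ctx{p}{t}{t}$ a malicious $\Pblindstar$ may send. This is exactly what makes $\Pstar$'s simulation always produce a legitimate message to forward to $V$, and what makes the equivalence of the two experiments hold unconditionally rather than only computationally (recall that $\Pi$ is designed to be sound against \emph{arbitrary} prover messages, so $V$ is untroubled by receiving such a $p_t$).

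Finally, since in both experiments the verifier's output is $o\leftarrow\cV_{out}(p_T, st_{V,T})$ and we have shown $(p_T, st_{V,T})$ is identically distributed, the output distributions of $(\Pblindstar, \Vblind(x))$ and $(\Pstar, V)(x)$ coincide, which is the claim; the soundness of $\Pi$ then transfers to $\Piblind$ as a corollary. I do not expect a real obstacle here, as this is an exact simulation argument rather than an approximate one; the only things that need care are (i) checking $\Pstar$ stays polynomial-time, (ii) the bookkeeping of what $\Pstar$ must encrypt in each round ($x$ under $pk_1$, which it has as a common input, and $sk_{t-1}$ under $pk_t$, which it generated itself), and (iii) the no-decryption-failure convention just discussed. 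The complementary properties --- that $\Piblind$ preserves completeness (using correctness of $\QHE$ and the circuit-depth bound $L$ for the honest prover) and that it is blind (using IND-CPA security of $\QHE$ and the fact that $\Vblind$ only ever transmits a secret key $sk_{t-1}$ encrypted under a fresh, independent $pk_t$) --- are argued separately.
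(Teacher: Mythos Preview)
Your proposal is correct and follows essentially the same approach as the paper: define $\Pstar$ to internally simulate the encryption wrapper of $\Vblind$ around $\Pblindstar$ (generating $(pk_t, sk_t)$, encrypting $v_t$ and $sk_{t-1}$, feeding these to $\Pblindstar$, decrypting the reply, and forwarding the plaintext to $V$), and observe that the two experiments are identical by construction. If anything, your write-up is more explicit than the paper's, which simply defines $\Pstar$ and concludes in one line; your remarks on the no-decryption-failure convention and on $\Pstar$ remaining $\BQP$ are exactly the points that justify the terse ``by construction'' step.
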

\begin{proof}
    We define $\Pstar$ as follows.
    
    For the first rounds, it generates
    $(pk_1, sk_1)\leftarrow\QGen(1^\lambda, 1^L)$, then produces the encryptions
    $\ctx{x}{}{1}\leftarrow\QEnc(pk_1, x)$ and $\ctx{v}{1}{1}\leftarrow\QEnc(pk_1, v_1)$.
    It then runs $(\ctx{p}{1}{1}, \ctx{st}{P, 1}{1})\leftarrow\cPblind{1}(1^\lambda, \ctx{v}{1}{1}, \ctx{x}{}{1})$.
    Finally, it decrypts $p_1\leftarrow\QDec(sk_1, \ctx{p}{1}{1})$ and sends it back to the verifier,
    and keeps $\ctx{st}{P, 1}{1}$ and $sk_1$.

    For the other rounds, it generates
    $(pk_t, sk_t)\leftarrow\QGen(1^\lambda, 1^L)$, and produces ciphertexts
    $\ctx{v}{t}{t}\leftarrow\QEnc(pk_t, v_t))$ and $\ctx{sk}{t-1}{t}\leftarrow\QEnc(pk_t, sk_{t-1})$.
    It then runs $(\ctx{p}{t}{t}, \ctx{st}{P, t}{t})\leftarrow\cPblind{t}(\ctx{v}{t}{t},$ \ $\ctx{sk}{t-1}{t}, \ctx{st}{P, t-1}{t-1})$.
    Finally, it decrypts $p_t\leftarrow\QDec(sk_t, \ctx{p}{t}{1})$ and sends it back to the verifier,
    and keeps $\ctx{st}{P, t}{t}$ and $sk_t$.
        
    By construction, the experiments $(\Pblindstar, \Vblind(x))$ and $(\Pstar, V)(x)$ are identical.
\end{proof}

Finally, we show the blindness of $\Piblind$ through a standard hybrid argument where the $sk_i$'s are ``erased" one by one, starting from $sk_T$.
Once $sk_1$ is eventually erased, $\QEnc(pk_1, x)$ and $\QEnc(pk_1, 0)$ become indistinguishable due to the IND-CPA security of $\QEnc$, and we complete the proof.
We now fill in the mathmatical details of this argument.

\begin{thm} \label{thm:compiler-blindness}
    Under the QLWE assumption with polynomial modulus, $\Piblind$ is blind.
\end{thm}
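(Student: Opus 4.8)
The plan is to prove blindness by a hybrid argument that erases the secret keys $sk_T, sk_{T-1}, \ldots, sk_1$ one at a time, from the last round backwards, and then invokes IND-CPA security of $\QEnc$ on the first-round ciphertext $\QEnc(pk_1, x)$. Concretely, I would define a sequence of hybrid experiments $\Hyb_0, \Hyb_1, \ldots, \Hyb_T$ interacting with the cheating prover $\Pblindstar$, where $\Hyb_0$ is the real interaction $(\Pblindstar, \Vblind(x))$. In $\Hyb_j$, the verifier behaves exactly as $\Vblind$ except that in every round $t \le j$ it replaces the encrypted secret key $\ctx{sk}{t-1}{t}$ sent to the prover by an encryption of $0$ (of the appropriate length) instead of $\QEnc(pk_t, sk_{t-1})$; for rounds $t > j$ it still sends the honest $\QEnc(pk_t, sk_{t-1})$. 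The indistinguishability of $\Hyb_{j-1}$ and $\Hyb_j$ follows from IND-CPA security of the $\QHE$ scheme under public key $pk_j$: the only difference between the two hybrids is the plaintext inside the ciphertext $\ctx{sk}{j-1}{j}$, and crucially, by the time we reach $\Hyb_{j-1}$ the secret key $sk_j$ is no longer used anywhere in the experiment — it has already been ``erased'' in the sense that $\ctx{sk}{j}{j+1}$ was replaced by an encryption of $0$ in an earlier hybrid step (for $j < T$; for $j = T$, $sk_T$ is only used by the verifier to decrypt the final message, and we must first switch the verifier to decrypt using $sk_T$ only after the reduction, i.e. we need $sk_T$ to not be needed by the distinguisher). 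I would therefore be careful to order the erasures so that when we apply IND-CPA under $pk_j$, the corresponding $sk_j$ is genuinely dispensable.

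The subtlety, and what I expect to be the main obstacle, is exactly this dependency bookkeeping: in $\Piblind$ the secret key $sk_{t-1}$ is sent to the prover encrypted under $pk_t$, so naively $sk_{t-1}$ ``leaks'' through round $t$'s message, and $sk_t$ leaks through round $t+1$'s message, and so on up to $sk_{T-1}$ which leaks through round $T$, while $sk_T$ is used by the verifier to decrypt $\ctx{p}{T}{T}$. So the keys form a chain and we must unravel it from the top. The right order is: first argue that we may replace $\ctx{sk}{T-1}{T}$ with $\QEnc(pk_T, 0)$ — this is sound because $sk_T$ is used only to decrypt the prover's last message to compute $o$, and the distinguisher receives $\View_{\Pblindstar}$ which does not include $o$; hence an IND-CPA reduction under $pk_T$ can simulate everything (it gets $pk_T$, produces all of $\Pblindstar$'s view, and never needs $sk_T$). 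Once $\ctx{sk}{T-1}{T}$ is an encryption of $0$, the key $sk_{T-1}$ is used only to decrypt $\ctx{p}{T-1}{T-1}$ inside the verifier's computation of $v_T$ — but $v_T$ is then encrypted under $pk_T$ and, since we no longer leak $sk_T$, the plaintext $v_T$ is hidden from the prover's view; more to the point, $sk_{T-1}$ no longer appears in any ciphertext handed to $\Pblindstar$, so we can now apply IND-CPA under $pk_{T-1}$ to replace $\ctx{sk}{T-2}{T-1}$ with $\QEnc(pk_{T-1},0)$. Iterating, after $T$ steps we reach a hybrid $\Hyb_T$ in which none of $sk_1, \ldots, sk_T$ appears in anything given to the prover, and in particular $sk_1$ is used only to decrypt $\ctx{p}{1}{1}$; then a final IND-CPA invocation under $pk_1$ replaces $\ctx{x}{}{1} = \QEnc(pk_1,x)$ by $\QEnc(pk_1, 0)$ (and likewise $\ctx{v}{1}{1}$, noting $v_1$ is itself a function of $x$; more carefully, in $\Hyb_T$ the verifier's behavior depends on $x$ only through $v_1$ and the ciphertexts $\ctx{x}{}{1},\ctx{v}{1}{1}$, so we replace both in one IND-CPA step using a hybrid over the two ciphertexts, or observe that the reduction can encrypt the pair). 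The resulting experiment is independent of $x$, which is precisely $\View_{\Pblindstar}(P^*, \Vblind(0))(1^\lambda)$ up to the same hybrid chain run with input $0$, completing the argument.

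In writing this up I would state each reduction explicitly: to distinguish $\Hyb_{j-1}$ from $\Hyb_j$ with advantage $\delta$, build an IND-CPA adversary $\cA_j$ against $\QHE$ that receives $pk_j$ and a challenge ciphertext (an encryption of either $sk_{j-1}$ or $0$), plugs it in as $\ctx{sk}{j-1}{j}$, generates all other keys $\{(pk_t, sk_t)\}_{t \ne j}$ itself, runs $\Pblindstar$, and feeds the resulting view to the $\Hyb$-distinguisher; $\cA_j$ is $\BQP$ and needs $\delta$ advantage against IND-CPA, contradicting the QLWE assumption (with polynomial modulus, using Brakerski's scheme) for non-negligible $\delta$. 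Summing over the $T+1 = \poly(\lambda)$ hybrid steps gives a negligible total advantage, so $\View_{\Pblindstar}(\Pblindstar, \Vblind(x))(1^\lambda)$ and $\View_{\Pblindstar}(\Pblindstar, \Vblind(0))(1^\lambda)$ are computationally indistinguishable, which is the definition of blindness. One minor point to handle cleanly: the lengths of $sk_{j-1}$ and the dummy $0$-string must match (pad appropriately), and the level parameter $L$ must be large enough to accommodate the homomorphic key-switching and next-message evaluation in every hybrid — but this is already guaranteed since the hybrids only change plaintexts, not circuit structure, so $L$ is the same as in $\Piblind$.
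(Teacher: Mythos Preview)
Your overall plan---a backward hybrid that erases the secret keys $sk_T,\ldots,sk_1$ one at a time and then invokes IND-CPA under $pk_1$ on the input ciphertext---is the same strategy the paper uses. But there is a real gap in your execution: in each hybrid step you replace only the encrypted secret key $\ctx{sk}{t-1}{t}$, whereas the paper also replaces the encrypted verifier message $\ctx{v}{t}{t}$ by $\QEnc(pk_t,0)$ in the \emph{same} step. This is not cosmetic; without it your IND-CPA reductions do not go through. Concretely, when you apply IND-CPA under $pk_j$ to switch $\ctx{sk}{j-1}{j}$, your reduction $\cA_j$ must simulate the entire prover view without $sk_j$. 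But in your hybrids the round-$(j{+}1)$ message still contains the honest $\ctx{v}{j+1}{j+1}=\QEnc(pk_{j+1},v_{j+1})$, and $v_{j+1}=\cV_{j+1}(p_j,st_{V,j})$ with $p_j=\QDec(sk_j,\ctx{p}{j}{j})$. So $\cA_j$ needs $sk_j$ to produce that ciphertext. Your informal remark that ``$v_T$ is hidden since it is encrypted under $pk_T$'' is the right intuition but does not yield a reduction; you must actually erase $\ctx{v}{j+1}{j+1}$ (via IND-CPA under $pk_{j+1}$, which you have already set up) before you can dispense with $sk_j$.

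The same omission breaks your endgame. After your last step---replacing $\ctx{x}{}{1}$ and $\ctx{v}{1}{1}$---you assert ``the resulting experiment is independent of $x$.'' It is not: for every $t\ge 2$ the verifier still computes $v_t$ from $st_{V,t-1}$, which traces back through $st_{V,1}\leftarrow\cV_1(1^\lambda,x)$, and encrypts it as $\ctx{v}{t}{t}$. So the prover's view still carries ciphertexts whose plaintexts depend on $x$. The paper's fix---replace both $\ctx{v}{t}{t}$ and $\ctx{sk}{t-1}{t}$ in each step $t=T,T-1,\ldots,2$, and finally $\ctx{x}{}{1},\ctx{v}{1}{1}$---yields a terminal hybrid in which the verifier sends only encryptions of $0$, which is manifestly the same for inputs $x$ and $0^n$. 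Your proposal is easily repaired by adopting this change; everything else (the backward ordering, the bookkeeping about which $sk_t$ is dispensable, the union bound over $T+1$ steps) is correct.
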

\begin{proof}
    We show that for all cheating $\BQP$ provers $P^*$, $\lambda\in\bbN$, $x\in\zo^n$,
    $P^*$ cannot distinguish $(P^*, \Vblind(x))(1^\lambda)$ from $(P^*, \Vblind(0^n))(1^\lambda)$ with noticeable probability in $\lambda$.
    We use a hybrid argument; let $\Hyb_{T+1}^x=(P^*, \Vblind(x))(1^\lambda)$ and $\Hyb_{T+1}^0=(P^*, \Vblind(0^n))(1^\lambda)$.
    For $2\leq t<T+1$, define $\Hyb_t^x$ to be the same as $\Hyb_{t+1}^x$,
    except when $\Vblind$ should send $\ctx{v}{t}{t}$ and $\ctx{sk}{t-1}{t}$, it instead sends encryptions of $0$ under $pk_t$.
    We define $\Hyb_1^x$ to be the same as $\Hyb_2^x$ except the verifier sends encryptions of $0$ under $pk_1$ in place of $\ctx{x}{}{1}$ and $\ctx{v}{1}{1}$.
    We define $\Hyb_t^0$ similarly. Note that $\Hyb_1^x$ and $\Hyb_1^0$ are identical.

    For all $t$, from the perspective of the prover,
    as it receives no information on $sk_t$,
    $\Hyb_{t+1}^x$ is computationally indistinguishable from $\Hyb_t^x$ due to the CPA security of $\QHE$ under $pk_t$.
    By a standard hybrid argument, we observe that $\Hyb_1^x$ is computationally indistinguishable with $\Hyb_{T+1}^x$.
    We use the same argument for the computational indistinguishability between $\Hyb_1^0$ and $\Hyb_{T+1}^0$.
    We conclude that $P^*$ cannot distinguish between $\Hyb_{T+1}^x$ and $\Hyb_{T+1}^0$,
    therefore $\Piblind$ is blind.
\end{proof}

Applying our compiler to the parallel repetition of Mahadev's protocol for $\BQP$ from \cite{arXiv:ChiaChungYam19, arXiv:AlaChiHun19} and our $\QPIP_0$ protocol $\PiSampZ$ from \Cref{proto:QPIP0samp} for $\SampBQP$ yields the first constant-round blind $\QPIP_0$ protocol for $\BQP$ and $\SampBQP$, respectively.

\begin{thm}
    Under the QLWE assumption, there exists a blind, four-message $\QPIP_0$ protocol for all languages in $\BQP$ with negligible completeness and soundness errors.
\end{thm}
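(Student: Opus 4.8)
The plan is to obtain the protocol by composing two ingredients that are already available to us. First, recall that Mahadev's CVQC protocol for $\BQP$~\cite{FOCS:Mahadev18a} is a four-message $\QPIP_0$ protocol with negligible completeness error but only a large constant soundness error. The parallel-repetition analyses of~\cite{arXiv:ChiaChungYam19, arXiv:AlaChiHun19} show that, under the QLWE assumption, the $m$-fold parallel repetition of Mahadev's protocol (verifier accepts iff all copies accept, with the Hadamard/testing choice made independently in each copy) has soundness error $2^{-\Omega(m)}+\negl(\lambda)$ while obviously preserving the four-message structure and the negligible completeness error. Taking $m=\poly(\lambda)$ large enough yields a four-message $\QPIP_0$ protocol $\Pi$ for $\BQP$ with negligible completeness and soundness errors, and whose honest prover has a fixed polynomial circuit depth.

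Second, I would apply the generic blindness compiler of \Cref{sec:BlindBQP2} to $\Pi$, producing $\Piblind$. The key point to verify is that the compiler does not increase the message count: inspecting the compiled protocol with $T=2$ (base messages $v_1,p_1,v_2,p_2$), the compiled protocol sends $(pk_1,\ctx{x}{}{1},\ctx{v}{1}{1})$, then $\ctx{p}{1}{1}$, then $(pk_2,\ctx{v}{2}{2},\ctx{sk}{1}{2})$, then $\ctx{p}{2}{2}$ — exactly four messages, with all the fresh-key bookkeeping piggybacked onto already-existing messages. Completeness is preserved up to a negligible loss by the correctness of $\QHE$, using the level parameter $L$ that bounds the honest prover's (fixed polynomial) next-message circuit depth. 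Soundness is preserved by \Cref{thm:compiler-errors}: for every cheating $\BQP$ prover against $\Piblind$ there is a cheating $\BQP$ prover against $\Pi$ inducing an identical output distribution, so the soundness error of $\Piblind$ equals that of $\Pi$, which is negligible. Blindness follows from \Cref{thm:compiler-blindness} under the QLWE assumption (with polynomial modulus if Brakerski's QFHE is used, superpolynomial if Mahadev's is used). Since the compiler requires a classical-friendly QFHE and both known constructions have this property, the classical verifier is implementable.

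Putting the two steps together gives a blind, four-message $\QPIP_0$ (equivalently, CVQC) protocol for all languages in $\BQP$ with negligible completeness and soundness errors, under the QLWE assumption. I do not expect a genuinely hard step here: the heavy lifting is the cited parallel-repetition soundness analysis and the compiler guarantees of \Cref{thm:compiler-errors,thm:compiler-blindness}. The only points requiring care are the (straightforward) bookkeeping that the compiler truly preserves round complexity, as noted above, and that the parallel-repeated Mahadev protocol still admits an honest prover of bounded polynomial depth so that a single level parameter $L$ suffices throughout.
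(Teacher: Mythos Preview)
Your proposal is correct and follows essentially the same approach as the paper: apply the blindness compiler of \Cref{sec:BlindBQP2} to the parallel repetition of Mahadev's protocol from~\cite{arXiv:ChiaChungYam19, arXiv:AlaChiHun19}, and invoke \Cref{thm:compiler-errors} and \Cref{thm:compiler-blindness} for soundness and blindness preservation. The paper treats this theorem as an immediate corollary and provides no further proof; your extra bookkeeping about the $T=2$ message count and the honest prover's bounded depth is accurate and only spells out what the paper leaves implicit.
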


\begin{thm}
        Under the QLWE assumption, there exists a blind, four-message $\QPIP_0$ protocol for all sampling problems in $\SampBQP$ with negligible completeness error and computational soundness.
\end{thm}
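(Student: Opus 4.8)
The plan is to derive this statement as an immediate corollary of \Cref{thm:qpip0} together with the generic blindness compiler of \Cref{sec:BlindBQP2}. I would start from $\Pi = \PiSampZ$, the four-message $\QPIP_0$ protocol for $\SampBQP$ produced by \Cref{thm:qpip0}, which under the QLWE assumption has negligible completeness error and computational soundness in the sense of \Cref{dfn:stats-secure-proto-sampbqp}. The first thing to observe is that $\PiSampZ$ has exactly the syntactic form the compiler expects: it is a $\QPIP_0$ protocol in which only the verifier produces output, that output being the pair $(d,z)$ that the $\SampBQP$ completeness and soundness definitions refer to; its four messages ($\vec{pk}$, $\vec{y}$, $\vec{c}$, $\vec{a}$) correspond to $T=2$ in the generic description. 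The accuracy parameter $1^{1/\eps}$ is handled as a public common input (it only determines the repetition count $m$), and the verifier's private input to be hidden is $x$.

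Next I would instantiate the compiler with a classical-friendly compact quantum leveled FHE scheme $\QHE$ --- either Mahadev's or Brakerski's --- which exists under QLWE, and set $\Piblind$ to be the compiled protocol $(\Pblind, \Vblind)$. Since $\PiSampZ$ already relies on Mahadev's measurement protocol $\PiMeasure$ and hence on QLWE, the QFHE ingredient introduces no new assumption (and with Brakerski's scheme the FHE part only needs polynomial-modulus QLWE). By inspection of $\Piblind$ the number of messages is unchanged, so $\Piblind$ is still four-message; and by correctness of $\QHE$ --- with the level parameter $L$ chosen to exceed the honest prover's bounded circuit depth --- the completeness error of $\Piblind$ is negligibly close to that of $\PiSampZ$, hence negligible.

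For computational soundness, I would invoke \Cref{thm:compiler-errors}: for every cheating $\BQP$ prover $\Pblindstar$ against $\Piblind$ there is a cheating $\BQP$ prover $\Pstar$ against $\PiSampZ$ with the property that, for every $x$, the output distribution of $(\Pblindstar, \Vblind(x))$ equals that of $(\Pstar, V)(x)$. Since computational soundness in \Cref{dfn:stats-secure-proto-sampbqp} is a statement purely about the joint distribution $(d,z)$ of the verifier's output --- namely that it is $\eps$-computationally indistinguishable from the ideal distribution $(d, z_{ideal})$ obtained by resampling the accepted output from $D_x$ --- and both this distribution and the induced ideal distribution are literally unchanged when passing from $\Piblind$ with $\Pblindstar$ to $\PiSampZ$ with $\Pstar$ (the decision bit $d$ being identical), the computational soundness of $\PiSampZ$ from \Cref{thm:qpip0} transfers verbatim to $\Piblind$. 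Blindness of $\Piblind$ is exactly \Cref{thm:compiler-blindness}, which holds under QLWE.

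I do not anticipate a genuine obstacle: all the difficulty resides in \Cref{thm:qpip0} (the parallel-repetition and partition-lemma analysis of $\PiSampZ$) and in the compiler lemmas \Cref{thm:compiler-errors,thm:compiler-blindness}, both already established. The one point worth a careful sentence is that the simulation-based $\SampBQP$ soundness notion --- phrased via an ideal distribution rather than a single accept probability --- is insensitive to the compilation, and this is immediate because \Cref{thm:compiler-errors} provides an exact equality of output distributions rather than merely an approximate one.
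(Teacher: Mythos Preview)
Your proposal is correct and follows exactly the approach the paper takes: apply the blindness compiler of \Cref{sec:BlindBQP2} to $\PiSampZ$, then invoke \Cref{thm:compiler-errors} for soundness preservation and \Cref{thm:compiler-blindness} for blindness. Your write-up is in fact more explicit than the paper's, which states this theorem as an immediate corollary without spelling out why the simulation-based $\SampBQP$ soundness transfers; your observation that \Cref{thm:compiler-errors} gives an \emph{exact} equality of output distributions (hence identical $(d,z)$ and identical induced $(d,z_{ideal})$) is precisely the right justification.
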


\section*{Acknowledgments}

The authors would like to thank Tomoyuki Morimae for his valuable feedback that helped improve the paper and for pointing out the related works \cite{takeuchi2018verification, hayashi2015verifiable}.
We are also thankful to anonymous reviewers for various useful comments.

\bibliographystyle{plain}
\bibliography{refs}

\begin{appendices}
\newpage

\appendix

\section{Construction of the $\QPIP_1$ Protocol for $\SampBQP$}
\label{sec:sampbqp}

\subsection{Reducing $\SampBQP$ to the X-Z Local Hamiltonian} \label{sec:LHXZ}

Recall the definition of the history state which serves as a transcript of the circuit evolution~\cite{kitaev2002classical}:

\begin{dfn}[History-state]
    \label{dfn:groundstate}    
    Given any quantum circuit $C=U_T\ldots U_1$ of $T$ elementary gates and input $x\in\{0,1\}^n$, the \emph{history}-state $\histpsi{C(x)}$ is defined by
    \begin{equation}
        \histpsi{C(x)} \equiv \frac{1}{\sqrt{T}}\sum_{t=0}^{T-1}U_t\ldots U_1\ket{x,0}\otimes\ket{\hat{t}},
    \end{equation}
    where the first register of $n$-qubit refers to the input, the second of $m$-qubit refers to the work space ($\mathrm{poly}(n)$, w.l.o.g, $\leq T$ size) which is initialized to $\ket{0}$, and the last refers to the clock space which encodes the time information. Note that $\hat{t}$ could be some representation of $t$=0,..., $T-1$.
\end{dfn}

We define the X-Z local Hamiltonian of interest as follows:

\begin{dfn}[$k$-local X-Z Hamiltonian] For any $n$-qubit system, the set of $k$-local X-Z terms, denoted by $\LHXZ{k}$, contains Hermitian matrices that apply non-trivially on at most $k$ qubits as a product of Pauli $X$ and $Z$ terms. Namely,
\begin{equation}
  \LHXZ{k} = \left\{h_1 \otimes h_2 \otimes \ldots \otimes h_n: \forall i \in [n], h_i \in \{I, X, Z\}, \text{and}, \abs{\set{i: h_i=X \text{ or }Z}} \leq k. \right \}.
\end{equation}
A $k$-local X-Z Hamiltonian $H$ is a linear combination of terms from $\LHXZ{k}$. Namely,
\begin{equation}
  H = \sum_i \alpha_{i} H_i,  \quad \forall i, \alpha_i \in R,  H_i \in \LHXZ{k}.
\end{equation}
\end{dfn}

As our starting point, we will include the existing result of constructing X-Z local Hamitonians for general $\BQP$ computation.
First, we note the fact Toffoli and Hadamard gates form a universal gate set for quantum computation~\cite{Shi03, quant-ph/0301040}.
It is easy to see that both Toffoli and Hadamard gates can be represented as linear combinations of terms from $\LHXZ{}$. For example,
 \begin{equation}
     \mathrm{(Hadamard)} \quad H \equiv \frac{1}{\sqrt{2}} \begin{pmatrix}1&1\\1&-1\end{pmatrix} = \frac{1}{\sqrt{2}}\left (X+Z\right).
 \end{equation}
 The Toffoli gate maps bits $(a,b,c)$ to $(a,b, c \oplus (a \text{ and } b))$, which can be decomposed as $\ket{11}\bra{11}\otimes X+(I-\ket{11}\bra{11})\otimes I$ where $\ket{11}\bra{11}=\frac{1}{4}(I\otimes I+Z\otimes Z-I\otimes Z-Z\otimes I)$.
 
We will follow the unary clock register design from Kitaev's original 5-local Hamiltonian construction~\cite{kitaev2002classical}. Namely, valid unary clock states ($T$-qubit) are $\ket{00\ldots0}$, $\ket{10\dots0}$, $\ket{110\ldots0}$, etc, which span the ground energy space of the following local Hamiltonian:
\begin{equation}
    \Hclock= \sum_{t=1}^{T-1}\proj{01}_{t,t+1},
\end{equation}
where $\proj{01}_{t,t+1}$ stands for a projection on the $t$th and $(t+1)$th qubit in the clock register.
It is observed in~\cite{PhysRevA.78.012352} that $\Hclock$ can be reformulated as a linear combination of terms from $\LHXZ{}$ as follows,
\begin{equation} \label{eqn:Hclock}
   \Hclock= \frac{1}{4}(Z_1 - Z_T) + \frac{1}{4}\sum_{t=1}^{T-1}(I-Z_t\otimes Z_{t+1}),
\end{equation}
where $Z_t$ refers to Pauli $Z$ operated on the $t$th qubit in the clock register.

One can achieve so similarly for $\Hin$ and $\Hprop$. For the input condition, we want to make sure $x=(x_1, \ldots, x_n) \in \{0,1\}^n$ is in the input space and the workspace is initialized to $\ket{0}$ for all qubits at the time $0$. Namely, one can set $\Hin$ to be
\begin{equation}
    \Hin = \sum_{i=1}^n(I- \proj{x_i}_i)\otimes \proj{0}_1 + \sum_{i=1}^m \proj{1}_{n+i} \otimes \proj{0}_1,
\end{equation}
where the last part $\proj{0}_1$ applies on to the first qubit in the clock register. One can rewrite $\Hin$ as
\begin{equation}\label{eqn:Hin}
 \Hin=\frac{1}{4}\sum_{i=1}^n(I-(-1)^{x_i}Z_i)\otimes(I+Z_1) + \frac{1}{4} \sum_{i=1}^m (I - Z_{n+i}) \otimes (I + Z_1).
\end{equation}

For the propagation of the quantum state through the circuit, one uses the $\Hprop$ as follows:
\begin{equation} \label{eqn:Hprop}
    \Hprop=\sum_{t=1}^T \Hprop^t,
\end{equation}
where
\begin{equation}
    \Hprop^t=\frac{1}{2}I\otimes\proj{\widehat{t}}
    +\frac{1}{2}I\otimes\proj{\widehat{t-1}}
    -\frac{1}{2} U_t\otimes\ket{\widehat{t}}\bra{\widehat{t-1}}
    -\frac{1}{2}U_t^\dagger\otimes\ket{\widehat{t-1}}\bra{\widehat{t}}.
\end{equation}
Note that $\ket{\widehat{t}}\bra{\widehat{t-1}}=\ket{110}\bra{100}_{(t-1,t,t+1)}$ and similarly for $\ket{\widehat{t-1}}\bra{\widehat{t}}$.
Note that $U_t^\dagger=U_t$, since our gates are either Hadamard or Toffoli. It is observed in~\cite{PhysRevA.78.012352} that
\begin{equation}
   \Hprop^t=\frac{I}{4}\otimes(I-Z_{t-1})\otimes (I+Z_{t+1})-\frac{U_t}{4}\otimes(I-Z_{t-1})\otimes X_t\otimes (I+Z_{t+1}), \forall t=2, \ldots, T-1,
\end{equation}
and
\begin{eqnarray}
  \Hprop^1 &= & \frac{1}{2}(I+Z_2)-U_1\otimes\frac{1}{2}(X_1+X_1\otimes Z_2), \\
  \Hprop^T &= & \frac{1}{2}(I-Z_{t-1})-U_T\otimes\frac{1}{2}(X_T-Z_{T-1}\otimes X_T).
\end{eqnarray}
Combining with the fact that each $U_t$ can be written as a linear combination of terms from $\LHXZ{}$, we conclude that $\Hclock$, $\Hin$, $\Hprop$ are 6-local X-Z Hamiltonian.

We will employ the perturbation technique to amplify the spectral gap of $\Hclock + \Hin + \Hprop$.
Let $\ground{H}$ denote the ground energy of any Hamiltonian $H$.
The projection lemma from \cite{kempe_kitaev_regev_2006} approximates $\ground{H_1 + H_2}$ in terms of $\ground{H_1\big|_{\ker H_2}}$, where $\ker H_2$ denotes the \emph{kernel} space of $H_2$.

\begin{lem}[Lemma 1 in \cite{kempe_kitaev_regev_2006}]
    \label{thm:proj1}
    Let $H=H_1+H_2$ be the sum of two Hamiltonians operating on Hilbert space $\cH=\cS+\cS^\bot$.
    The Hamiltonian $H_2$ is such that $\cS$ is a zero eigenspace and the eigenvectors in $\cS^\bot$ have eigenvalues at least $J>2\norm{H_1}$. Then,
    $$\lambda\left(H_1\big|_\cS\right)-\frac{\norm{H_1}^2}{J-2\norm{H_1}}\leq\lambda(H)\leq\lambda\left(H_1\big|_\cS\right).$$
\end{lem}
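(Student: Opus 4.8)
The plan is to establish the two inequalities separately: the upper bound $\lambda(H)\leq\lambda(H_1\big|_\cS)$ is immediate from the variational principle, while the lower bound carries all the content. For the upper bound, I would take a unit ground state $\ket{v}\in\cS$ of $H_1\big|_\cS=\Pi_\cS H_1\Pi_\cS$, so that $\bra{v}H_1\ket{v}=\lambda(H_1\big|_\cS)$; since $\cS$ is the zero eigenspace of $H_2$ we have $H_2\ket{v}=0$, hence $\bra{v}H\ket{v}=\bra{v}H_1\ket{v}=\lambda(H_1\big|_\cS)$, and the variational principle gives $\lambda(H)\leq\lambda(H_1\big|_\cS)$.

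For the lower bound I would fix a unit ground state $\ket{\eta}$ of $H$ and decompose it along the orthogonal splitting $\cH=\cS\oplus\cS^\bot$ as $\ket{\eta}=\alpha\ket{v}+\beta\ket{w}$, with $\ket{v}\in\cS$, $\ket{w}\in\cS^\bot$ unit vectors and $\abs{\alpha}^2+\abs{\beta}^2=1$. Expanding $\lambda(H)=\bra{\eta}H\ket{\eta}$ produces a diagonal part $\abs{\alpha}^2\bra{v}H\ket{v}+\abs{\beta}^2\bra{w}H\ket{w}$ plus a cross term $2\Real(\bar\alpha\beta\bra{v}H\ket{w})$. Writing $K\defeq\norm{H_1}$, I would bound each piece from the hypotheses: $H_2\ket{v}=0$ gives $\bra{v}H\ket{v}=\bra{v}H_1\ket{v}\geq\lambda(H_1\big|_\cS)$; the spectral assumption on $H_2$ gives $H_2\geq J\Pi_{\cS^\bot}$, hence $\bra{w}H\ket{w}\geq -K+J$; and $\bra{v}H_2=0$ gives $\abs{\bra{v}H\ket{w}}=\abs{\bra{v}H_1\ket{w}}\leq K$. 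I would also record the elementary fact $\abs{\lambda(H_1\big|_\cS)}\leq\norm{H_1\big|_\cS}\leq K$.

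Setting $t\defeq\abs{\beta}^2$ and using $\abs{\alpha}\abs{\beta}=\sqrt{t(1-t)}\leq\sqrt t$, these bounds combine to
\[
\lambda(H)\;\geq\;(1-t)\lambda(H_1\big|_\cS)+t(J-K)-2\sqrt t\,K\;\geq\;\lambda(H_1\big|_\cS)+t(J-2K)-2\sqrt t\,K,
\]
where the last step uses $(1-t)\lambda(H_1\big|_\cS)\geq\lambda(H_1\big|_\cS)-tK$. Finally I would minimize $g(t)=t(J-2K)-2\sqrt t\,K$ over $t\geq0$: substituting $s=\sqrt t$ and using $J>2K$, $g$ tends to $+\infty$ with unique critical point $s=K/(J-2K)$ and minimum value $-K^2/(J-2K)$, which yields $\lambda(H)\geq\lambda(H_1\big|_\cS)-\norm{H_1}^2/(J-2\norm{H_1})$.

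I expect the only delicate points to be (i) deducing the operator inequality $H_2\geq J\Pi_{\cS^\bot}$ from the stated spectral hypothesis on $H_2$, and (ii) controlling the cross term $2\Real(\bar\alpha\beta\bra{v}H\ket{w})$ — it is precisely this term that produces a $\sqrt t$ rather than a $t$ dependence, and hence the additive loss $\norm{H_1}^2/(J-2\norm{H_1})$ rather than anything smaller; everything else is routine bookkeeping with the variational principle and operator-norm estimates.
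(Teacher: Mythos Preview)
The paper does not supply its own proof of this lemma: it is quoted verbatim as Lemma~1 of \cite{kempe_kitaev_regev_2006} and used as a black box (only the reformulated Lemma~\ref{lem:projection} receives a short proof, and that proof simply invokes the cited lemma). Your argument is correct and is essentially the standard proof from \cite{kempe_kitaev_regev_2006}: variational upper bound via a ground state of $H_1\big|_\cS$, then for the lower bound decompose a ground state of $H$ along $\cS\oplus\cS^\bot$, bound the diagonal and cross terms, and minimize the resulting quadratic in $\sqrt{t}$.
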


We will use the following simple reformulation instead.

\begin{lem}
    \label{lem:projection}
    Let $H_1, H_2$ be local Hamiltonians where $H_2\geq0$. Let $K=\ker H_2$ and
    $$J=\frac{8\norm{H_1}^2+ 2\norm{H_1}}{\lambda\left(H_2\big|_{K^\bot}\right)},$$
    then we have
    $$\lambda(H_1+JH_2)\geq\lambda\left(H_1\big|_K\right)-\frac{1}{8}.$$
\end{lem}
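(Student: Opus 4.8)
The plan is to obtain this statement as an immediate corollary of the projection lemma \Cref{thm:proj1} (Lemma~1 of \cite{kempe_kitaev_regev_2006}), applied not to the pair $(H_1, H_2)$ but to the pair $(H_1, J H_2)$; the point is that the constant $8\norm{H_1}^2 + 2\norm{H_1}$ in the numerator of $J$ is tuned precisely so that the resulting error term collapses to the clean value $1/8$. First I would dispose of the degenerate case $H_1 = 0$ (where $J = 0$): then $\lambda(H_1 + J H_2) = \lambda(\mathbf{0}) = 0 = \lambda(H_1\big|_K)$, so the bound holds trivially. Henceforth assume $\norm{H_1} > 0$ (and $H_2 \ne 0$, so that $\lambda(H_2\big|_{K^\bot})$, and hence $J$, is well defined).

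Next I would verify that $(H_1, J H_2)$ meets the hypotheses of \Cref{thm:proj1} with $\cS = K = \ker H_2$. Since $H_2 \geq 0$, its kernel $K$ is exactly its zero eigenspace, hence also the zero eigenspace of $J H_2$; and on $K^\bot$ the operator $H_2$ is positive definite, so $H_2\big|_{K^\bot}$ has smallest eigenvalue $\lambda(H_2\big|_{K^\bot}) > 0$, equal to the smallest nonzero eigenvalue of $H_2$. Consequently every eigenvector of $J H_2$ lying in $K^\bot$ has eigenvalue at least
$$J' \;:=\; J \cdot \lambda\!\left(H_2\big|_{K^\bot}\right) \;=\; 8\norm{H_1}^2 + 2\norm{H_1}.$$
(Here $J'$ plays the role of the quantity called $J$ in the statement of \Cref{thm:proj1}; I would flag this name clash explicitly to avoid confusion.) The hypothesis $J' > 2\norm{H_1}$ needed to invoke \Cref{thm:proj1} is immediate since $8\norm{H_1}^2 > 0$.

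Then \Cref{thm:proj1} applied with $H = H_1 + J H_2$ gives
$$\lambda\!\left(H_1\big|_K\right) - \frac{\norm{H_1}^2}{J' - 2\norm{H_1}} \;\leq\; \lambda(H_1 + J H_2),$$
and since $J' - 2\norm{H_1} = 8\norm{H_1}^2$, the subtracted term equals $\norm{H_1}^2 / (8\norm{H_1}^2) = 1/8$, which yields exactly $\lambda(H_1 + J H_2) \geq \lambda(H_1\big|_K) - 1/8$. There is no real obstacle here; the proof is pure bookkeeping, and the only points needing a moment's care are the reduction to the nondegenerate case and checking that the arithmetic in the numerator of $J$ was chosen so the error is a universal constant independent of $\norm{H_1}$.
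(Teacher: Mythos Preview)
Your proposal is correct and follows exactly the same approach as the paper: apply \Cref{thm:proj1} to $H = H_1 + JH_2$ and observe that the least nonzero eigenvalue of $JH_2$ exceeds $2\norm{H_1}$. The paper's own proof is a one-line sketch of precisely this argument, and you have simply filled in the details (the degenerate case, the hypothesis check, and the arithmetic showing the error term is $1/8$) more carefully.
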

\begin{proof}
    Apply \Cref{thm:proj1} to $H=H_1+JH_2$. Note that the least non zero eigenvalue of $JH_2$ is greater than $2\norm{H_1}$.
\end{proof}

\begin{thm}
    \label{thm:LHReduction}
    Given any quantum circuit $C = U_T \ldots U_1$ of  $T$ elementary gates and input $x \in \{0,1\}^n$, one can construct a 6-local X-Z Hamiltonian $H_{C(x)}$ in polynomial time such that
    \begin{enumerate}
        \item[(1)] $H_{C(x)} = \sum_i \alpha_i H_i$ where
        each $H_i \in \LHXZ{6}$ and $|\alpha_i| \in O(T^9)$. Moreover, there are at most $O(T)$ non-zero terms.
        \item[(2)] $H_{C(x)}$ has $\histpsi{C(x)}$ as the unique ground state with eigenvalue $0$ and has a spectral gap at least $\frac{3}{4}$. Namely,  for any state $\ket\phi$ that is orthogonal to $\histpsi{C(x)}$, we have $\braket{\phi|H_{C(x)}|\phi}\geq \frac{3}{4}$.
    \end{enumerate}
\end{thm}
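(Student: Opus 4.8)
The plan is to take $H_{C(x)}$ to be a polynomially weighted sum $\alpha_{\mathrm{in}}\Hin+\alpha_{\mathrm{prop}}\Hprop+\alpha_{\mathrm{clock}}\Hclock$ of the three Hamiltonians written down in \eqref{eqn:Hin}, \eqref{eqn:Hclock}, and \eqref{eqn:Hprop}, after first padding $C$ with $\Theta(T)$ dummy identity gates (which only changes $T$ by a constant factor, and is in any case needed for the ``large history fraction'' property discussed in \Cref{subsection:sampling}). Property~(1) --- each $H_i\in\LHXZ{6}$, there are $O(T)$ nonzero terms, and the unweighted coefficients are $O(1)$ --- is then immediate from the explicit X-Z decompositions collected in \Cref{sec:LHXZ}. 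It is standard, and directly checkable from \Cref{dfn:groundstate}, that $\histpsi{C(x)}$ lies in $\ker\Hin\cap\ker\Hclock\cap\ker\Hprop$: $\Hclock$ vanishes on it because it is supported on legal unary clock configurations, $\Hprop$ vanishes because the sum telescopes, and $\Hin$ vanishes because the time-$0$ slice is $\ket{x,0^m}$. Hence $H_{C(x)}\succeq 0$, $\histpsi{C(x)}$ is a zero-energy eigenvector, and $\ground{H_{C(x)}}=0$; everything reduces to proving $\braket{\phi|H_{C(x)}|\phi}\geq\tfrac34$ for all $\ket\phi\perp\histpsi{C(x)}$, which also yields uniqueness of the ground state.

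The heart of the argument is the nested kernel structure. Let $S_{\mathrm{clock}}=\ker\Hclock$ (legal clock configurations tensored with arbitrary data), $S_{\mathrm{prop}}=S_{\mathrm{clock}}\cap\ker\Hprop$, and $S_{\mathrm{in}}=S_{\mathrm{prop}}\cap\ker\Hin$. I would establish: (i) $\Hclock$ is diagonal in the computational basis --- its $\proj{01}$ penalty terms pairwise commute --- with every nonzero eigenvalue $\geq 1$; (ii) conjugating by the standard clock-controlled unitary $W$ (acting as $\sum_t(U_t\cdots U_1)\otimes\proj{\widehat t}$ on $S_{\mathrm{clock}}$ and as the identity elsewhere) sends $\Hprop$ to $0$ off $S_{\mathrm{clock}}$ and to $I\otimes E$ on $S_{\mathrm{clock}}$, where $E$ is $\tfrac12$ times the Laplacian of a path graph on $\Theta(T)$ vertices; hence $\|\Hprop\|=O(1)$, $\ker(\Hprop|_{S_{\mathrm{clock}}})=S_{\mathrm{prop}}$ is the span of the history states $\histpsi{C(\psi_0)}$ over all initial data $\psi_0$, and $\Hprop|_{S_{\mathrm{clock}}}$ has smallest nonzero eigenvalue $\Theta(1/T^2)$; (iii) since $\psi_0\mapsto\histpsi{C(\psi_0)}$ is an isometry and the start-padding makes $\Hin$ enforce the input condition over $\Theta(T)$ consecutive frozen time slices, $\Hin|_{S_{\mathrm{prop}}}$ is unitarily equivalent to a positive constant times $I-\proj{x,0^m}$ on the data register, so its nonzero eigenvalues are $\Theta(1)$ and $S_{\mathrm{in}}=\operatorname{span}\{\histpsi{C(x)}\}$, giving uniqueness. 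I would also record $\|\Hin\|=O(T)$, as $\Hin$ is a sum of $O(T)$ local projectors.

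To amplify the intrinsic $\Theta(1/T^2)$ gap up to $\tfrac34$, I would apply the projection lemma \Cref{lem:projection} twice, but to the restrictions of all three terms to $\histpsi{C(x)}^\perp$. This is legitimate because each term annihilates $\histpsi{C(x)}$, so $\histpsi{C(x)}^\perp$ is invariant, and because \Cref{lem:projection} (hence \Cref{thm:proj1}) is purely spectral and never uses locality. Crucially, after this restriction the innermost kernel $S_{\mathrm{in}}$ collapses to $\{0\}$, so the ground energy of the restriction that the projection lemma lower-bounds is exactly the spectral gap of $H_{C(x)}$. Concretely: first, inside $S_{\mathrm{clock}}\cap\histpsi{C(x)}^\perp$, choose $\alpha_{\mathrm{prop}}=\Theta(T^2)\cdot(\alpha_{\mathrm{in}}\|\Hin\|)^2=O(T^4)$ so that $\alpha_{\mathrm{in}}\Hin+\alpha_{\mathrm{prop}}\Hprop$ is within $\tfrac18$ of $\alpha_{\mathrm{in}}\Hin|_{S_{\mathrm{prop}}}$, which is $\geq 1$ once the constant $\alpha_{\mathrm{in}}$ is large enough; then, on all of $\histpsi{C(x)}^\perp$, choose $\alpha_{\mathrm{clock}}=\Theta\big(\|\alpha_{\mathrm{in}}\Hin+\alpha_{\mathrm{prop}}\Hprop\|^2\big)=O(T^8)$ to lose another $\tfrac18$, landing at $\geq 1-\tfrac18-\tfrac18=\tfrac34$. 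Feeding the weights $\alpha_{\mathrm{in}}=O(1)$, $\alpha_{\mathrm{prop}}=O(T^4)$, $\alpha_{\mathrm{clock}}=O(T^8)$ into the $O(1)$ unweighted coefficients shows $|\alpha_i|=O(T^9)$ as claimed, finishing (1) and (2).

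The main obstacle is fact~(ii): identifying $\ker(\Hprop|_{S_{\mathrm{clock}}})$ with the span of the history states and pinning its spectral gap at $\Theta(1/T^2)$ through the $W$-conjugation to a path Laplacian --- this is the classical Kitaev clock analysis, and it is what forces the $\poly(T)$ blow-up in the weights. Two secondary points need care: spelling out that \Cref{lem:projection} may legitimately be applied to the non-local restrictions to $\histpsi{C(x)}^\perp$ (so that its bound on a ground energy becomes a bound on the spectral gap of $H_{C(x)}$), and the routine but slightly delicate weight bookkeeping that keeps the final coefficients at $O(T^9)$ rather than a larger power.
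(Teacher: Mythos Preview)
Your plan is the paper's: take $H_{C(x)}=\Hin+J_1\Hclock+J_2\Hprop$ and apply \Cref{lem:projection} twice on the restriction to $\histpsi{C(x)}^\perp$, ending at $\lambda\bigl(\Hin|_{S_{\mathrm{prop}}\cap\histpsi{C(x)}^\perp}\bigr)-\tfrac14$. The only structural difference is the nesting order: the paper adds $\Jclock\Hclock$ first (with $\Jclock=O(T^2)$) and then $\Jprop\Hprop$ (with $\Jprop=O(T^8)$), whereas you first project onto the propagation kernel inside $S_{\mathrm{clock}}$ and then lift with $\Hclock$, so that $\alpha_{\mathrm{clock}}$ is your large weight. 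Both orders are legitimate and reach the same innermost quantity.

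Two specific claims in your sketch are off, though neither derails the argument. First, $\Hprop$ is \emph{not} zero on $S_{\mathrm{clock}}^\perp$: each $\Hprop^t$ carries only the three-qubit clock condition $\proj{1}_{t-1}\otimes\proj{0}_{t+1}$, which fires on many illegal clock strings (on $\ket{110110\cdots}$ about $T/3$ of the terms are active), so $\|\Hprop\|=\Theta(T)$ on the full space rather than $O(1)$. In your ordering this pushes $\alpha_{\mathrm{clock}}=O\bigl(\|\alpha_{\mathrm{prop}}\Hprop\|^2\bigr)$ up to $O(T^{10})$; the paper's ordering avoids this because there the first-stage weight multiplies $\|\Hclock\|$ instead. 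Second, start-padding alone does not give $\Hin|_{S_{\mathrm{prop}}}$ nonzero eigenvalues $\Theta(1)$: the paper's $\Hin$ checks only clock-time $0$, so on any history state its expectation is $\tfrac{1}{T}$ times the data-register penalty, regardless of how many initial identity gates you prepend --- you would have to modify $\Hin$ itself to penalize several padded time slices to get what you claim. (The paper's own assertion ``$\lambda(\Hin|_{S\cap\Kclock\cap\Kprop})\geq 1$'' glosses over the same $1/T$ factor.) Both points only shift the final exponent of $T$, and the paper itself remarks that its coefficient bounds are loose.
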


\begin{proof}
We will use the above construction $\Hclock$ (\cref{eqn:Hclock}), $\Hin$ (\cref{eqn:Hin}), $\Hprop$ (\cref{eqn:Hprop}) as our starting point, which are already 6-local X-Z Hamiltonian constructable in polynomial time. However, $H_{\mathrm{old}}=\Hin + \Hclock + \Hprop$ does not have the desired spectral gap. To that end, our construction will be a weighted sum of $\Hin$, $\Hclock$, and $\Hprop$ as follows,
\begin{equation}
    H_{\mathrm{new}}= \Hin + \Jclock \Hclock + \Jprop \Hprop,
\end{equation}
where $\Jclock$ and $\Jprop$ will be obtained using \Cref{lem:projection}.

Let $\Kin=\ker \Hin$, $\Kclock=\ker \Hclock$, and $\Kprop=\ker \Hprop$. It is known from e.g.,~\cite{kitaev2002classical}, that
\[
   \Kin \cap \Kclock \cap \Kprop = \spn\set{\histpsi{C(x)}}.
\]
Thus $\histpsi{C(x)}$ remains in the ground space of $H_{\mathrm{new}}$. Let $S$ denote its orthogonal space. Namely, $S=(\spn\set{\ket{\psi_{C(x)}}})^\bot$.
Denote by $\Hin\big|_S$ the restriction of $\Hin$ on space $S$ and similarly for others.

Consider $\Hin + \Jclock\Hclock$ first. According to \cref{lem:projection}, by choosing
\[
  \Jclock = \frac{8\norm{\Hin\big|_S}^2 + 2 \norm{\Hin\big|_S}}{\ground{\Hclock\big|_{S\cap \Kclock^\bot}}} = O(T^2),
\]
where we use the fact $\norm{\Hin|_S}\leq T$ and $\ground{\Hclock\big|_{S\cap \Kclock^\bot}}\geq \ground{\Hclock\big|_{\Kclock^\bot}}=1$,
we have
\[
 \ground{\Hin\big|_S + \Jclock\Hclock\big|_S}\geq \ground{\Hin\big|_{S\cap \Kclock}} - \frac{1}{8}.
\]
Consider further adding $\Jprop\Hprop$ term.  By choosing
\[
 \Jprop= \frac{O(\norm{\Hin\big|_S + \Jclock\Hclock\big|_S}^2)}{\ground{\Hprop\big|_{S \cap \Kprop^\bot}}}= O(T^8),
\]
where we use the fact $\ground{\Hprop\big|_{S\cap \Kprop^\bot}}\geq \ground{\Hprop\big|_{\Kprop^\bot}}=\Omega(T^{-2})$~\cite{kitaev2002classical}, we have  
\[
 \ground{\Hin\big|_S + \Jclock\Hclock\big|_S + \Jprop\Hprop\big|_S} \geq \ground{\Hin\big|_{S \cap \Kclock \cap \Kprop}} - \frac{1}{4}.
\]
A simple observation here is that $S \cap \Kclock \cap \Kprop$ is the span of history states with different inputs or different initialization of the work space. Namely, $\ground{\Hin\big|_{S \cap \Kclock \cap \Kprop}} \geq 1$. Thus,
\[
  \ground{(\Hin + \Jclock\Hclock + \Jprop\Hprop)\big|_S} \geq 1- \frac{1}{4}=\frac{3}{4}.
\]
Given that $\histpsi{C(x)}$ is the ground state of $H_{\mathrm{new}}$ with eigenvalue 0, and any orthogonal state to $\histpsi{C(x)}$ has eigenvalue at least $\ground{H_{\mathrm{new}}}\geq 3/4$, the spectral gap of $H_{\mathrm{new}}$ at at least $3/4$.

Note that $H_{\mathrm{new}}$ is a 6-local X-Z Hamiltonian by construction. It suffices to check the bound of $\abs{\alpha_i}$ and the number of terms. The former is one more than the order of $\Jprop$ since each $\Hprop^t$ contributes $\frac{1}{4}$ to the $I$ term, creating an extra factor of $T$. The latter is by counting the number of terms from $\Hin, \Hclock, \Hprop$, each of which is bounded by $O(T)$.
\end{proof}

\noindent \textbf{Remark.} We believe that the specific parameter dependence above can be tightened by a more careful analysis. However, as our focus is on the feasibility, we keep the above slightly loose analysis which might be more intuitive.

\subsection{Delegation Protocol for $\QPIP_1$ client}
\label{sec:qpip1}
In this subsection, we construct a one-message $\QPIP_1$ delegation protocol for $\SampBQP$. By definition of $\QPIP_1$, we assume the client has limited quantum power, e.g., performing single qubit $X$ or $Z$ measurement one by one.
Intuitively, one should expect the one-message from the server to the client is something like the history state so that the client can measure to sample.  

At a high-level, the design of such protocols should consist of at least two components: (1) the first component should test whether the message is indeed a valid history state; (2) the second component should simulate the last step of any $\SampBQP$ computation by measuring the final state in the computational basis.

Our construction of X-Z local Hamiltonian $H$ from \Cref{thm:LHReduction} will help serve the first purpose.
In particular, we adopt a variant of the energy verification protocol for local Hamiltonian (e.g., ~\cite{mf16, PhysRevA.93.022326}) to certify the energy of $H$ with only $X$ or $Z$ measurements.
Moreover, because of the large spectral gap, when the energy is small, the underlying state must also be close to the history state.
Precisely, consider the following protocol $\cVGS$:

\begin{protocol}{Energy verification for X-Z local Hamiltonian $\cVGS$} \label{AlgGroundStateCheck}
Given a $k$-local X-Z Hamiltonian
$H=\sum_i \alpha_{i} H_i$ (i.e., $\forall i, H_i \in \LHXZ{k}$) and any state $\ket{\phi}$.

\begin{itemize}
\item Let $p_i= \abs{\alpha_i}/\sum_i \abs{\alpha_i}$ for each $i$. Sample $i^*$ according to $p_{i^*}$.
\item Pick $H_{i^*}$ which acts non-trivially on at most $k$ qubits of $\ket{\phi}$. Measure the corresponding single-qubit Pauli X or Z operator.
Record the list of the results $x_j=\pm 1$ for $j=1, \ldots k$.
\item Let $r=x_1x_2\cdots x_k$. The protocol \emph{accepts} if $r$ and $\alpha_{i^*}$ have different signs, i.e., $\sgn(\alpha_{i^*})r=-1$. Otherwise, the protocol \emph{rejects}.
\end{itemize}
\end{protocol}

\begin{lem}[\cite{PhysRevA.93.022326}]
    \label{thm:HamCheck}
    For any $k$-local X-Z Hamiltonian $H=\sum_i \alpha_{i} H_i$ and any state $\ket{\phi}$,
    the protocol $\cVGS$ in Protocol~\ref{AlgGroundStateCheck} accepts with
    probability
\begin{equation}
 \mathrm{Prob}[ \cVGS \text{ accepts } \ket{\psi}] = \frac{1}{2} - \frac{1}{2 \sum_i \abs{\alpha_i}}\braket{\phi|H|\phi}.
\end{equation}
\end{lem}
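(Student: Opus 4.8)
The plan is to compute the acceptance probability by conditioning on the sampled index $i^*$ and then averaging. First I would fix an index $i$ and write $H_i = h_1 \otimes \cdots \otimes h_n$ with each $h_j \in \{I, X, Z\}$, and let $S_i = \{j : h_j \ne I\}$, so $|S_i| \le k$. The protocol measures the single-qubit observable $h_j$ on qubit $j$ of $\ket{\phi}$ for each $j \in S_i$. Since these observables act on distinct tensor factors, they pairwise commute, so this is equivalent to a single projective measurement in their common eigenbasis, which (after padding with $I$ on the remaining qubits) is also an eigenbasis of $H_i$. The recorded quantity $r$, the product of the $\pm 1$ outcomes over the measured qubits, is then exactly the eigenvalue of $H_i$ on the measured eigenvector. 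Hence $r$ is a $\{\pm1\}$-valued random variable with $\E[\,r \mid i^* = i\,] = \braket{\phi | H_i | \phi}$, and consequently $\Pr[\,r = \pm 1 \mid i^* = i\,] = \tfrac{1}{2}(1 \pm \braket{\phi|H_i|\phi})$.

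Next I would translate the acceptance rule. Conditioned on $i^* = i$, the protocol accepts iff $\sgn(\alpha_i)\, r = -1$, i.e.\ iff $r = -\sgn(\alpha_i)$. Using the distribution of $r$ above, this gives, regardless of whether $\alpha_i$ is positive or negative,
\[
\Pr[\,\cVGS \text{ accepts} \mid i^* = i\,] \;=\; \frac{1 - \sgn(\alpha_i)\braket{\phi|H_i|\phi}}{2} \;=\; \frac{1}{2} - \frac{1}{2}\,\frac{\alpha_i}{|\alpha_i|}\braket{\phi|H_i|\phi}.
\]

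Finally I would average over $i^*$ with $\Pr[i^* = i] = |\alpha_i| / \sum_j |\alpha_j|$ and use linearity $\braket{\phi|H|\phi} = \sum_i \alpha_i \braket{\phi|H_i|\phi}$:
\[
\Pr[\,\cVGS \text{ accepts}\,] \;=\; \sum_i \frac{|\alpha_i|}{\sum_j|\alpha_j|}\left(\frac{1}{2} - \frac{\alpha_i}{2|\alpha_i|}\braket{\phi|H_i|\phi}\right) \;=\; \frac{1}{2} - \frac{1}{2\sum_j|\alpha_j|}\sum_i \alpha_i\braket{\phi|H_i|\phi} \;=\; \frac{1}{2} - \frac{\braket{\phi|H|\phi}}{2\sum_j|\alpha_j|},
\]
which is the claimed identity. (The statement writes $\ket{\psi}$ in the displayed probability where it should be $\ket{\phi}$; this is a harmless typo.)

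The only delicate point is the first paragraph: justifying that the product of the individual single-qubit Pauli outcomes is distributed exactly as an ideal measurement of the global observable $H_i$, with expectation $\braket{\phi|H_i|\phi}$. This rests on the fact that Pauli observables on disjoint registers commute and their eigenvalues multiply, so the joint single-qubit measurement realizes the spectral measurement of $H_i$; it is standard but is the conceptual heart of the argument, and everything after it is bookkeeping. One should also keep the sign conventions straight (the $\pm1$ labeling of $X$- and $Z$-eigenvalues, and that $r$ ranges over the non-identity factors of $H_{i^*}$ rather than all $n$ qubits, with the identity term contributing $r \equiv 1$ consistently with $\braket{\phi|I|\phi}=1$), but none of this affects the computation.
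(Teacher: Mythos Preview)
Your proof is correct. Note, however, that the paper does not give its own proof of this lemma: it is simply cited from \cite{PhysRevA.93.022326} and used as a black box. Your argument is the standard one---condition on the sampled term, use that the product of commuting single-qubit Pauli outcomes realizes the spectral measurement of the tensor-product observable $H_i$ so that $\E[r\mid i^*=i]=\braket{\phi|H_i|\phi}$, translate the sign rule into $\Pr[\text{accept}\mid i^*=i]=\tfrac12-\tfrac{\alpha_i}{2|\alpha_i|}\braket{\phi|H_i|\phi}$, and average---and it matches what one finds in the cited reference.
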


\begin{theorem} \label{thm:HamCheckClose}
Given any quantum circuit $C$ and input $x$, consider using $H_{C(x)}$ from \Cref{thm:LHReduction} in Protocol~\ref{AlgGroundStateCheck} ($\cVGS$).
For any state $\rho$, and $0< \epsilon < 1$, if $\cVGS$ accepts $\rho$ with probability
\[
 \mathrm{Prob}[\cVGS \text{ accepts } \rho] \geq \frac{1}{2} - \frac{\epsilon}{2 \sum_i \abs{\alpha_i}},
\]
then the trace distance between $\rho$ and $\histpsi{C(x)}$ is at most $\frac{2}{\sqrt{3}}\sqrt{\epsilon}$.
\end{theorem}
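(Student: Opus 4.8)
The plan is to first translate the acceptance‐probability hypothesis into an energy bound, then use the spectral gap of $H_{C(x)}$ from \Cref{thm:LHReduction} to convert the energy bound into a fidelity bound with the history state, and finally apply the standard fidelity–trace-distance inequality. Concretely, \Cref{thm:HamCheck} is stated for pure states $\ket\phi$, but since the acceptance probability is linear in the input state, it extends immediately to any density matrix $\rho$: $\mathrm{Prob}[\cVGS\text{ accepts }\rho]=\frac12-\frac{1}{2\sum_i\abs{\alpha_i}}\tr(H_{C(x)}\rho)$. Combining this with the hypothesis $\mathrm{Prob}[\cVGS\text{ accepts }\rho]\ge\frac12-\frac{\epsilon}{2\sum_i\abs{\alpha_i}}$ gives $\tr(H_{C(x)}\rho)\le\epsilon$.

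Next I would use the structure of $H_{C(x)}$ guaranteed by part (2) of \Cref{thm:LHReduction}: $\histpsi{C(x)}$ is the unique ground state with eigenvalue $0$, and every eigenvector orthogonal to it has eigenvalue at least $\tfrac34$. Writing $P^\perp = I-\Proj{\psi^{\mathrm{hist}}_{C(x)}}$, this means $H_{C(x)}\ge \tfrac34 P^\perp$ as positive semidefinite operators. Therefore
\[
\epsilon \;\ge\; \tr(H_{C(x)}\rho)\;\ge\;\tfrac34\,\tr(P^\perp\rho)\;=\;\tfrac34\bigl(1-\braket{\psi^{\mathrm{hist}}_{C(x)}|\rho|\psi^{\mathrm{hist}}_{C(x)}}\bigr),
\]
so the fidelity term satisfies $1-\braket{\psi^{\mathrm{hist}}_{C(x)}|\rho|\psi^{\mathrm{hist}}_{C(x)}}\le \tfrac{4}{3}\epsilon$.

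Finally, I would invoke the inequality relating trace distance and fidelity: for any density matrix $\rho$ and pure state $\ket\psi$, $\norm{\rho-\Proj{\psi}}_{\mathrm{tr}}\le\sqrt{1-\braket{\psi|\rho|\psi}}$ (this follows from $\norm{\rho-\sigma}_{\mathrm{tr}}\le\sqrt{1-F(\rho,\sigma)^2}$ with $F(\rho,\Proj\psi)^2=\braket{\psi|\rho|\psi}$). Plugging in the bound from the previous paragraph yields $\norm{\rho-\histpsi{C(x)}}_{\mathrm{tr}}\le\sqrt{\tfrac43\epsilon}=\tfrac{2}{\sqrt3}\sqrt\epsilon$, as claimed.

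I do not expect a serious obstacle here; the argument is a short chain of standard estimates. The only points requiring minor care are (i) noting that \Cref{thm:HamCheck} applies to mixed $\rho$ by linearity, and (ii) using the correct direction of the fidelity–trace-distance inequality, making sure the pure-state case is invoked so that the fidelity is exactly $\sqrt{\braket{\psi|\rho|\psi}}$ rather than something weaker.
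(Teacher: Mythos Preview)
Your proposal is correct and follows essentially the same chain as the paper: acceptance probability $\Rightarrow$ energy bound $\Rightarrow$ fidelity bound via the $\tfrac34$ spectral gap $\Rightarrow$ trace-distance bound. The only difference is that the paper first proves the bound for pure states and then extends to mixed $\rho$ by convexity and the triangle inequality, whereas you handle mixed states directly via the operator inequality $H_{C(x)}\ge\tfrac34 P^\perp$ and the pure-state Fuchs--van de Graaf inequality; your route is slightly cleaner but equivalent.
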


\begin{prf} Consider the pure state case $\rho=\proj{\phi}$ first. By \Cref{thm:HamCheck} and our assumption, we have $\braket{\phi|H|\phi} \leq \epsilon$.
Decompose $\ket{\phi}= \alpha \histpsi{C(x)} + \beta \histpsi{C(x)}^\bot$. Note that $\histpsi{C(x)}$ is an eigenvector $H_{C(x)}$ of eigenvalue 0 and all other eigenvalues are at least $3/4$. Thus, we have $\abs{\alpha}^2 \geq 1-\frac{4}{3}\epsilon$. Thus,
\[
   \norm{\proj{\psi^{\mathrm{hist}}_{C(x)}}- \proj{\phi}}_{\tr} = \sqrt{1- |\braket{\psi^{\mathrm{hist}}_{C(x)}|\phi}|^2}
   \leq \frac{2}{\sqrt{3}}\sqrt{\epsilon}.
\]
For any mixed state $\rho=\sum_i p_i \proj{\phi_i}$, by the triangle inequality, we have
\[
    \norm{\proj{\psi^{\mathrm{hist}}_{C(x)}}- \rho}_{\tr} \leq \sum_i p_i    \norm{\proj{\psi^{\mathrm{hist}}_{C(x)}}- \proj{\phi_i}}_{\tr} \leq \sum_i p_i \frac{2}{\sqrt{3}} \sqrt{\epsilon} =\frac{2}{\sqrt{3}}\sqrt{\epsilon}.
\]
\end{prf}

To serve the second purpose, one needs to combine the test and the output on multiple copies of the history states, where we construct the following cut-and-choose protocol.  
The challenge comes from the fact that a cheating prover might send something rather than copies of the history state.
In particular, the prover can entangle between different copies in order to cheat.
In the case of certifying a $\BQP$ computation, the goal is to verify the ground energy of any local Hamiltonian.
A cheating strategy with potential entanglement won't create any witness state with an energy lower than the actual ground energy.
Thus, this kind of attack won't work for $\BQP$ computation.

However, in the context of $\SampBQP$, one needs to certify the ground energy (known to be zero in this case) and to output a good copy.
While the statistical test as before can be used to certify the ground energy,
it has less control on the shape of the output copy.
In fact, the prover can always prepare a bad copy along with many good copies as a plain attack.
This attack will succeed when the bad copy is chosen to output, the probability of which is non-negligible in terms of the total number of copies assuming some symmetry of the protocol.
The potential entanglement among different copies could further complicate the analysis.
We employ the quantum \emph{de Finetti}'s theorem to address this technical challenge.
Specifically, given any permutation-invariant $k$-register state, it is known that the reduced state on many subsets of $k$-register will be close to a separable state.
This helps establish some sort of independence between different copies in the analysis.
To serve our purpose, we adopt the following version of quantum de Finetti's theorem from~\cite{Brandao2017} where the error depends nicely  on the number of qubits, rather than the dimension of quantum systems.

\begin{thm}[\cite{Brandao2017}]
    \label{deFinetti}
    Let $\rho^{A_1\ldots A_k}$ be a permutation-invariant state on registers $A_1,\ldots,A_k$ where each register contains $s$ qubits.
    For any $0\leq l\leq k$,  there exists states $\set{\rho_i}$ and $\set{p_i}\subset\bbR$ such that
    $$\max_{\Lambda_1,\ldots,\Lambda_l}
    \norm{(\Lambda_1\otimes\ldots\otimes\Lambda_l)\left(\rho^{A_1\ldots A_l}-\sum_ip_i\rho_i^{A_1}\otimes\ldots\otimes\rho_i^{A_l}\right)}_1
    \leq\sqrt{\frac{2l^2s}{k-l}}$$
    where $\Lambda_i$ are quantum-classical channels.
\end{thm}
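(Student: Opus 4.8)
The plan is to follow the information-theoretic proof of this local-measurement de Finetti bound (the argument of Brand\~ao--Harrow). The first move is to reduce the quantum problem to a classical one. Fix a single-register \emph{universal} (informationally complete) quantum-classical channel $\mathcal{N}$ acting on one $s$-qubit register, with at most $4^s$ outcomes, chosen (e.g.\ as a sufficiently fine-grained covariant POVM) so that every measurement channel on that register is a classical post-processing of $\mathcal{N}$. Then for any two states $\tau,\tau'$ on $A_1\cdots A_l$ and any local measurement channels $\Lambda_1,\ldots,\Lambda_l$ we have $\|(\Lambda_1\otimes\cdots\otimes\Lambda_l)(\tau-\tau')\|_1\le\|\mathcal{N}^{\otimes l}(\tau-\tau')\|_1$, since classical post-processing does not increase $\ell_1$-distance. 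So it suffices to find $\sigma_{\mathrm{sep}}=\sum_i p_i\,\rho_i^{A_1}\otimes\cdots\otimes\rho_i^{A_l}$ with $\|\mathcal{N}^{\otimes l}(\rho^{A_1\cdots A_l})-\mathcal{N}^{\otimes l}(\sigma_{\mathrm{sep}})\|_1$ bounded as claimed. Applying $\mathcal{N}$ to all $k$ registers of $\rho$ yields a permutation-invariant classical distribution $P$ on $(Y_1,\ldots,Y_k)$ over an alphabet of size $\le 4^s$, and the task becomes a \emph{classical} de Finetti statement for $P$ with the side condition that the target product distributions be realizable as outputs of $\mathcal{N}$.

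The second step builds $\sigma_{\mathrm{sep}}$ by conditioning on spare registers. Write $m=k-l$ for the number of spare registers $A_{l+1},\ldots,A_k$. For $0\le j\le m$, condition $P$ on the outcomes $\vec y=(y_{l+1},\ldots,y_{l+j})$ of the first $j$ spare registers; by permutation invariance the conditional law of $(Y_1,\ldots,Y_l)$ is $S_l$-invariant, so all its single-register marginals coincide with a common $\mu_{\vec y}$, which by construction equals $\mathcal{N}(\rho^{A_1}_{\vec y})$ for the post-measurement state $\rho_{\vec y}$. For the right index $j^\ast$ we set $\sigma_{\mathrm{sep}}=\sum_{\vec y}p_{\vec y}\,\rho_{\vec y}^{\otimes l}$. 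The crux is to choose $j^\ast$ so that, in expectation over $\vec y$, the conditional total correlation $\mathrm{TC}(\vec y)=D\!\big(P_{Y_1\cdots Y_l\mid\vec y}\,\|\,\mu_{\vec y}^{\otimes l}\big)=\sum_{i=2}^l I(Y_i:Y_1\cdots Y_{i-1}\mid\vec y)$ is small. Here a chain-rule/telescoping argument over the spare registers enters: expanding each conditional mutual information $I(Y_i:Y_1\cdots Y_{i-1}\mid Y_{l+1}\cdots Y_{l+j})$, summing increments over $j$, and reindexing the conditional mutual informations using exchangeability of $P$, the total is controlled by the entropy budget $I(Y_1\cdots Y_l:Y_{l+1}\cdots Y_k)_P\le I(A_1\cdots A_l:A_{l+1}\cdots A_k)_\rho\le 2H(A_1\cdots A_l)_\rho\le 2ls$; spreading this over the $m$ spare registers and the $l$ chain-rule terms yields some $j^\ast$ with $\mathbb{E}_{\vec y}\,\mathrm{TC}(\vec y)\le l^2 s/(k-l)$ after a careful accounting of the base-$2$/natural-log conversions.

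The remaining steps are routine. Pinsker's inequality gives $\|P_{Y_1\cdots Y_l\mid\vec y}-\mu_{\vec y}^{\otimes l}\|_1\le\sqrt{2\,\mathrm{TC}(\vec y)}$, so by Jensen's inequality (concavity of the square root) the average over $\vec y$ of the left-hand side is at most $\sqrt{2\,\mathbb{E}_{\vec y}\mathrm{TC}(\vec y)}\le\sqrt{2l^2 s/(k-l)}$. Since $\mathcal{N}^{\otimes l}(\rho^{A_1\cdots A_l})=\sum_{\vec y}p_{\vec y}P_{Y_1\cdots Y_l\mid\vec y}$ and $\mathcal{N}^{\otimes l}(\sigma_{\mathrm{sep}})=\sum_{\vec y}p_{\vec y}\mu_{\vec y}^{\otimes l}$, the triangle inequality over $\vec y$ transfers this to $\|\mathcal{N}^{\otimes l}(\rho^{A_1\cdots A_l}-\sigma_{\mathrm{sep}})\|_1$, and hence to $(\Lambda_1\otimes\cdots\otimes\Lambda_l)$ for every tuple of local measurements, which is the statement. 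I expect the main obstacle to be exactly the chain-rule/telescoping step delivering the $l^2 s/(k-l)$ bound on the expected total correlation: the conditional total correlation is not monotone in $j$ in general, so one must use permutation invariance essentially to reindex the conditional mutual informations and tame the telescoping sum, and pinning down the constant to match $\sqrt{2l^2 s/(k-l)}$ further requires choosing the universal measurement and the form of Pinsker's inequality with care.
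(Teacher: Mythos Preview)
The paper does not prove this theorem: it is quoted verbatim from \cite{Brandao2017} and invoked as a black box in the soundness analysis of \Cref{ProtoQPIP1}. There is thus no proof in the paper to compare your proposal against.

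That said, your sketch follows the Brand\~ao--Harrow information-theoretic template, and the chain-rule/Pinsker/Jensen portion is correct in spirit. There is, however, a real gap in your first reduction. No single-register measurement $\mathcal{N}$---finite, countable, or even the continuous $SU(2^s)$-covariant POVM---has the property that \emph{every} quantum-classical channel $\Lambda$ on $s$ qubits is a classical post-processing of $\mathcal{N}$; for instance, a rank-one projective measurement cannot be recovered by stochastic post-processing of the covariant POVM. Hence the inequality $\|(\Lambda_1\otimes\cdots\otimes\Lambda_l)(\tau-\tau')\|_1\le\|\mathcal{N}^{\otimes l}(\tau-\tau')\|_1$ is unavailable, and your final ``hence to $(\Lambda_1\otimes\cdots\otimes\Lambda_l)$ for every tuple'' step does not follow. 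The actual argument sidesteps this: one applies a fixed informationally complete measurement only to the \emph{spare} registers $A_{l+1},\ldots,A_{l+j^\ast}$, obtaining outcomes $\vec y$ and post-measurement \emph{quantum} states $\rho_{\vec y}$ on $A_1\cdots A_l$; this already fixes $\sigma_{\mathrm{sep}}=\sum_{\vec y}p_{\vec y}\rho_{\vec y}^{\otimes l}$ independently of the $\Lambda_i$. For each choice of $\Lambda_1,\ldots,\Lambda_l$ one then runs the chain-rule/Pinsker argument on the classical variables $(\Lambda_1(A_1),\ldots,\Lambda_l(A_l),\vec y)$ and bounds the relevant conditional mutual informations, uniformly in the $\Lambda_i$, by the quantum mutual information $I(A_1\cdots A_l:A_{l+1}\cdots A_k)_\rho\le 2ls$ via data processing. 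With this correction the rest of your outline goes through.
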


\begin{protocol}{$\QPIP_1$ protocol $\PiSamp=(\PSamp, \VSamp)$ for the $\SampBQP$ problem $(D_x)_{x\in\set{0, 1}^*}$}
	\label{ProtoQPIP1}

	Inputs:
	\begin{itemize}
		\item Security parameter $1^\lambda$ where $\lambda\in\bbN$
		\item Accuracy parameter $1^{1/\eps}$ where $\eps\in(0, 1)$
		\item Classical input $x\in\zo^n$ to the $\SampBQP$ instance
	\end{itemize}

	Ingredients:
	\begin{itemize}
		\item Let $C$ be a quantum circuit consisting of only Hadamard and Toffoli gates, which on input $x$ samples from some $C_x$ such that $\norm{C_x-D_x}_{\mathrm{TV}}\leq\eps$.
		\item Let $T$ be the number of gates in $C$.
		\item Let $C'$ be $C$ padded with $\frac{6T}{\eps}$ identity gates at the end.
		\item Let $H_{C'(x)}=\sum_i \alpha_i H_i$ be the X-Z local Hamiltonian instance associated with the computation of $C'(x)$ from \Cref{thm:LHReduction}.
		\item Let $\histpsi{{C'(x)}}$ be the ground state of $H_{C'(x)}$.
		\item Let $s\leq 3T+\frac{6T}{\eps}$ be the number of qubits in $\histpsi{C'(x)}$.
		\item Let $\cVGS$ be the verification algorithm for $H_{C'(x)}$ as defined in \Cref{AlgGroundStateCheck}.
	\end{itemize}

	Protocol:
	\begin{enumerate}
		\item\label{step:qpip1-state-gen} The honest prover prepares $M=649\frac{T^{41}\lambda^2}{\eps^{51}}$ copies of $\histpsi{C'(x)}$ and sends all of them to the verifier qubit-by-qubit.
		\item\label{step:qpip1-verify} The verifier samples $I\subset[M]$ s.t. $\abs{I}=m$ where $m=\frac{T^{20}\lambda}{\eps^{24}}$, and $k\xleftarrow{\$}[M]\setminus I$.
			For $i$ from $1$ to $M$, it chooses what to do to the $i$-th copy, $\rho_i$, as follows:
		\begin{enumerate}
			\item If $i\in I$, run $y_i\leftarrow\cV_{\GS}(\rho_i)$.
			\item Else if $i=k$, measure the data register 
			of $\rho_i$ under the computational (i.e., $Z$) basis and save the outcome as $z$.
			\item Else, discard $\rho_i$.
		\end{enumerate}
			Let $Y=\sum_{i\in I} y_i$. If $Y>\frac{m}{2}-\kappa m$ where $\kappa=\frac{1}{192}\frac{\eps^2}{\sum_i\abs{\alpha_i}}$, then the verifier outputs $(\Acc, z)$.
			Else, it outputs $(\Rej, \bot)$.
	\end{enumerate}
\end{protocol}

Note that $\VSamp$ only needs to apply $X$ and $Z$ measurements, and is classical otherwise.
It is simple to check that $\VSamp$ runs in $\poly(n, \lambda, \frac{1}{\epsilon})$ time.
We now show its completeness and soundness.

\begin{prf} [\Cref{QPIP1thm}]
	For completeness, notice $y_i$ are i.i.d. Bernoulli trials with success probability $\frac{1}{2}$.
	So we can apply Chernoff bound to get
	$$\Prob{Y>\frac{m}{2}-\eps m}\leq\negl(\lambda).$$
	So $\VSamp$ has negligible probability to reject.
	
	Now we show soundness.
	Suppose $\PSampstar$ is a cheating prover that, on inputs $x, 1^{1/\eps}, 1^\lambda$, sends some $\sigma$ to the verifier.

	The first step of our analysis is to use de Finetti's theorem.
	Randomly picking $m+1$ out of $M$ registers is equivalent to first applying a random permutation then taking the first $m+1$ registers.
	A random permutation, in turn, is a classical mix over all possible permutations:
	$$\sigma'=\frac{1}{\abs{\Sym(M)}}\sum_{\Pi\in\Sym(M)}\Pi\sigma\Pi^\dagger.$$
	It is simple to check that $\sigma'$ is permutation-invariant:
	fix $\tilde{\Pi}\in\Sym(M)$, then
	$$\tilde{\Pi}\sigma'\tilde{\Pi}^\dagger
	=\frac{1}{\abs{\Sym(M)}}\sum_{\Pi\in\Sym(M)}\tilde{\Pi}\Pi\sigma\Pi^\dagger\tilde{\Pi}^\dagger
	=\frac{1}{\abs{\Sym(M)}}\sum_{\hat{\Pi}\in\Sym(M)}\hat{\Pi}\sigma\hat{\Pi}^\dagger
	=\sigma',$$
	where the second equality is by relabeling $\tilde{\Pi}\Pi=\hat{\Pi}$.

	Now we apply de Finetti's theorem (\Cref{deFinetti}) to approximate our measurement result on $\sigma'$ with the measurement result of a classical mix over $(m+1)$-identical copies of some states $\tau_j$.
	That is, there exists some $\rho=\sum_j w_j\tau_j^{\otimes m+1}$
	such that for all quantum-classical channels $\Lambda_i$ acting on $s$-qubits:
	$$\norm{\Lambda_1\otimes\ldots\otimes\Lambda_{m+1}(\sigma'_{\leq m+1}-\rho)}_1
	\leq\sqrt{\frac{2m^2s}{M-m}}
	\leq\frac{\eps}{6},$$
	where $\sigma'_{\leq m+1}$ is the first $m+1$ registers of $\sigma'$.
	In our context, for $1\leq i\leq m$, $\Lambda_i$ corresponds to measurements chosen by $\cVGS$.
	$\Lambda_{m+1}$ measures the data register under the standard basis.
	
	We now analyze the verifier's output for each $\tau_j$ in $\rho=\sum_j w_j\tau_j^{\otimes m+1}$. Let $(d_j, z_j)$ be the verifier's output corresponding to $\tau_j$, and define $z_{j, ideal}$ accordingly.
	We claim that $\norm{(d_j, z_j)-(d_j, z_{j, ideal})}_{\mathrm{TV}}<\frac{2\eps}{6}$.
	Let $p_j$ be the probability that $\cVGS(\tau_j)=\Acc$, and consider the following two cases.

	First suppose $p_j<\frac{1}{2}-2\kappa$,
	then a standard Chernoff bound argument can be applied show that this case has negligible acceptance probabilities.
	As a result, $\norm{(d_j, z_j)-(d_j, z_{j, ideal})}_{\mathrm{TV}}=\negl(\lambda)$.

	Now suppose $p_j\geq\frac{1}{2}-2\kappa$.
	By \Cref{thm:HamCheckClose} we have
	$$\norm{\tau_j - \ket{\psi^{\mathrm{hist}}_{C'(x)}}\bra{\psi^{\mathrm{hist}}_{C'(x)}}}_{\tr}\leq\frac{2}{\sqrt{3}}\sqrt{2\kappa\cdot2\sum_i\abs{\alpha_i}}=\frac{\eps}{6}.$$
	Observe that when $d_j=\Acc$, $z_j$ is the measurement results on $\tau_j$'s data register,
	which is $\frac{\eps}{6}$-close to that of $\ket{\psi_{C'(x)}}$.
	As the clock register is traced out from $\ket{\psi_{C'(x)}}$, the data register has at least $1-\frac{\eps}{6}$ probability to contain $C(x)$ due to the padding.
	So $z_j$ is $\frac{2\eps}{6}$-close to $z_{ideal}$ when $d=\Acc$,
	which implies that $(d_j, z_j)$ is $\frac{2\eps}{6}$-close to $(d_j, z_{j, ideal})$.

    Finally, let $(d_\rho, z_\rho)$ denote the verifier's output distribution corresponding to $\rho=\sum_j w_j\tau_j^{\otimes m+1}$,
	and define $z_{\rho, ideal}$ accordingly.
	We have $\norm{(d_\rho, z_\rho) - (d, z_{\rho, ideal})}_{\mathrm{TV}} < \frac{2\eps}{6}$ since the same is true for all components $\tau_j$.

	As $(d, z)$ is $\frac{\eps}{6}$-close to $(d_\rho, z_\rho)$ by the data processing inequality,
	using the triangle inequality we have $\norm{(d, z) - (d_\rho, z_{\rho, ideal})}_{\mathrm{TV}}\leq\frac{\eps}{2}$.
	Hence $\norm{d - d_\rho}_{\mathrm{TV}}\leq\frac{\eps}{2}$, which implies $\norm{(d_\rho, z_{\rho, ideal}) - (d, z_{ideal})}_{\mathrm{TV}}$ $\leq$ $\frac{\eps}{2}$.
	It then follows from the triangle inequality that
	$\norm{(d, z) - (d, z_{ideal})}_{\mathrm{TV}}\leq\eps$.
\end{prf}

\end{appendices}

\end{document}